\def\G{{\mathcal{G}}}
\def\X{{\mathcal{X}}}
\def\Y{{\mathcal{Y}}}
\def\F{{\mathcal{F}}}
\def\P{{\mathbb{P}}}
\def\E{{\mathbb{E}}}
\def\VAR{{\mathbb{V}\mathbb{A}\mathbb{R}}}
\def\R{{\mathbb{R}}}
\def\N{{\mathbb{N}}}
\def\1{{\textbf{1}}}
\def\I{{\mathbbm{1}}}
\DeclareMathOperator*{\br}{br}
\DeclarePairedDelimiter{\ceil}{\lceil}{\rceil}
\DeclarePairedDelimiter{\floor}{\lfloor}{\rfloor}
\def\maj{{\mathsf{majority}}}
\def\Ber{{\mathsf{Bernoulli}}}
\def\etaTV{{\eta_{\mathsf{TV}}}}
\newcommand{\vast}{\bBigg@{3}}
\newtheorem{theorem}{Theorem}
\newtheorem{lemma}{Lemma}
\newtheorem{proposition}{Proposition}
\newtheorem{corollary}{Corollary}
\theoremstyle{definition}
\begin{document}

\title{Broadcasting on Random Directed Acyclic Graphs}

\author{Anuran~Makur,~Elchanan~Mossel,~and~Yury~Polyanskiy%
\thanks{This work was supported in part by the Center for Science of Information, a National Science Foundation (NSF) Science and Technology Center, under Grant CCF-09-39370, in part by the NSF CAREER Award under Grant CCF-12-53205, in part by the Schneider Electric, Lenovo Group (China) Limited and the Hong Kong Innovation and Technology Fund (ITS/066/17FP) under the HKUST-MIT Research Alliance Consortium, in part by the NSF under Grants CCF-1665252 and DMS-1737944, and in part by the DOD ONR under Grant N00014-17-1-2598. This work was presented at the 2019 IEEE International Symposium on Information Theory \cite{MakurMosselPolyanskiy2019Conf}.}%
\thanks{A. Makur and Y. Polyanskiy are with the Department of Electrical Engineering and Computer Science, Massachusetts Institute of Technology, Cambridge, MA 02139, USA (e-mail: a\_makur@mit.edu; yp@mit.edu).}%
\thanks{E. Mossel is with the Department of Mathematics, Massachusetts Institute of Technology, Cambridge, MA 02139, USA (e-mail: elmos@mit.edu).}%
\thanks{Copyright (c) 2019 IEEE. Personal use of this material is permitted. However, permission to use this material for any other purposes must be obtained from the IEEE by sending a request to pubs-permissions@ieee.org.}}%



\maketitle

\begin{abstract}
We study the following generalization of the well-known model of broadcasting on trees. Consider an infinite directed acyclic graph (DAG) with a unique source vertex $X$. Let the collection of vertices at distance $k$ from $X$ be called the $k$th layer, and suppose every non-source vertex has indegree $d \geq 2$. At layer $0$, the source vertex is given a random bit. At layer $k\geq 1$, each vertex receives $d$ bits from its parents in the $(k-1)$th layer, which are transmitted along edges that are independent binary symmetric channels (BSCs) with crossover probability $\delta \in \big(0,\frac{1}{2}\big)$. Each vertex combines its $d$ noisy inputs using a deterministic $d$-ary Boolean processing function that generates the value at the vertex. The goal is to be able to reconstruct the original bit $X$ with probability of error bounded away from $\frac{1}{2}$ using the values of all vertices at an arbitrarily deep layer $k$. This question is closely related to models of reliable computation and storage, and information flow in biological networks.

In this paper, we treat the case of randomly constructed DAGs, for which we show that broadcasting is only possible if the BSC noise level $\delta$ is below a certain (degree and function dependent) critical threshold. For $d \geq 3$, and random DAGs with layers of size $\Omega(\log(k))$ and majority processing functions, we identify the critical threshold. For $d = 2$, we establish a similar result for the NAND processing function. We also prove a partial converse result for odd $d \geq 3$ illustrating that the identified thresholds are impossible to improve by selecting different processing functions if the decoder is restricted to using a single vertex's value.

Finally, for any BSC noise level $\delta$, we construct explicit DAGs (using regular bipartite lossless expander graphs) with bounded degree and layers of size $\Theta(\log(k))$ admitting reconstruction. In particular, we show that the first $r$ layers of such DAGs can be generated in either deterministic quasi-polynomial time or randomized polylogarithmic time in $r$. These results portray a doubly-exponential advantage for storing a bit in bounded degree DAGs compared to trees, where $d = 1$ but layer sizes need to grow exponentially with depth in order for broadcasting to be possible. 
\end{abstract}

\begin{IEEEkeywords}
Broadcasting, reliable computation, noisy circuits, expander graphs, phase transition. 
\end{IEEEkeywords}

\tableofcontents
\hypersetup{linkcolor = red}

\section{Introduction}
\label{Introduction}

In this paper, we study a generalization of the well-known problem of \textit{broadcasting on trees}, cf. \cite{Evansetal2000}, to the setting of bounded indegree directed acyclic graphs (DAGs). In the broadcasting on trees problem, we are given a noisy tree $T$ whose vertices are Bernoulli random variables and edges are independent binary symmetric channels (BSCs) with common crossover probability $\delta \in \big(0,\frac{1}{2}\big)$. Given that the root is an unbiased random bit, the objective is to decode the bit at the root from the bits at the $k$th layer of the tree as $k \rightarrow \infty$. The authors of \cite{Evansetal2000} characterize the sharp threshold for when such reconstruction is possible:
\begin{enumerate}
\item If $(1-2\delta)^2 \br(T) > 1$, then the minimum probability of error in decoding is bounded away from $\frac{1}{2}$ for all $k$, 
\item If $(1-2\delta)^2 \br(T) < 1$, then the minimum probability of error in decoding tends to $\frac{1}{2}$ as $k \rightarrow \infty$,
\end{enumerate}
where $\br(T)$ denotes the \textit{branching number} of the tree (see \cite[Chapter 1.2]{LyonsPeres2017}), and the condition $(1-2\delta)^2 \br(T) \gtrless 1$ is known as the \textit{Kesten-Stigum threshold} in the regular tree setting. This result on reconstruction on trees generalizes results from random processes and statistical physics that hold for regular trees, cf. \cite{KestenStigum1966} (which proves achievability) and \cite{BleherRuizZagrebnov1995} (which proves the converse), and has had numerous extensions and further generalizations including \cite{Ioffe1996a,Ioffe1996b,Mossel1998,Mossel2001,PemantlePeres2010,Sly2009,Sly2011,JansonMossel2004,Bhatnagaretal2011}. (We refer readers to \cite{Mossel2004survey} for a survey of the reconstruction problem on trees.) A consequence of this result is that reconstruction is impossible for trees with sub-exponentially many vertices at each layer. Indeed, if $L_k$ denotes the number of vertices at layer $k$ and $\lim_{k \rightarrow \infty}{L_k^{1/k}} = 1$, then it is straightforward to show that $\br(T) = 1$, which in turn implies that $(1-2\delta)^2 \br(T) < 1$.

Instead of analyzing trees, we consider the problem of broadcasting on \textit{bounded indegree DAGs}. As in the setting of trees, all vertices in our graphs are Bernoulli random variables and all edges are independent BSCs. Furthermore, the values of variables located at vertices with indegree $2$ or more are obtained by applying \textit{Boolean processing functions} to the noisy inputs of the vertices. Hence, compared to the setting of trees, broadcasting on DAGs has two principal differences: 
\begin{enumerate}
\item In trees, layer sizes must scale exponentially in the depth to permit reconstruction, while in DAGs, they may scale as small as logarithmically in the depth.
\item In trees, the indegree of each vertex is $1$, while in DAGs, each vertex has several incoming signals.
\end{enumerate}
The latter enables the possibility of information fusion at the vertices of DAGs, and our main goal is to understand whether the benefits of 2 overpower the shortcoming of 1 and permit reconstruction of the root bit with sub-exponential layer size.

This paper has two main contributions. Firstly, via a probabilistic argument using random DAGs, we demonstrate the existence of bounded indegree DAGs with $L_k = \Omega(\log(k))$ which permit recovery of the root bit for sufficiently low $\delta$'s. Secondly, we provide explicit deterministic constructions of such DAGs using regular bipartite lossless expander graphs. In particular, the constituent expander graphs for the first $r$ layers of such DAGs can be constructed in either deterministic quasi-polynomial time or randomized polylogarithmic time in $r$. Together, these results imply that in terms of economy of storing information, \textit{DAGs are doubly-exponentially more efficient than trees}. 

\subsection{Motivation}
\label{Motivation}

Broadcasting on DAGs has several natural interpretations. Perhaps most pertinently, it captures the feasibility of reliably communicating through Bayesian networks in the field of \textit{communication networks}. Indeed, suppose a sender communicates a sequence of bits to a receiver through a large network. If broadcasting is impossible on this network, then the ``wavefront of information'' for each bit decays irrecoverably through the network, and the receiver cannot reconstruct the sender's message regardless of the coding scheme employed.

The problem of broadcasting on DAGs is also closely related to the problem of \textit{reliable computation using noisy circuits}, cf. \cite{vonNeumann1956,EvansSchulman1999}. The relation between the two models can be understood in the following way. Suppose we want to remember a bit using a noisy circuit of depth $k$. The ``von Neumann approach'' is to take multiple perfect clones of the bit and recursively apply noisy gates in order to reduce the overall noise \cite{HajekWeller1991,EvansSchulman2003}. In contrast, the broadcasting perspective is to start from a single bit and repeatedly create noisy clones and apply perfect gates to these clones so that one can recover the bit reasonably well from the vertices at depth $k$. Thus, the broadcasting model can be construed as a noisy circuit that remembers a bit using perfect logic gates at the vertices and edges or wires that independently make errors. (It is worth mentioning that broadcasting DAG circuits are much smaller, and hence more desirable, than broadcasting tree circuits because bounded degree logic gates can be used to reduce noise.)

Furthermore, special cases of the broadcasting model have found applications in various discrete probability questions. For example, broadcasting on trees corresponds to \textit{ferromagnetic Ising models in statistical physics}. Specifically, if we associate bits $\{0,1\}$ with spins $\{-1,+1\}$, then the joint distribution of the vertices of any finite broadcasting subtree (e.g. up to depth $k$) corresponds to the \textit{Boltzmann-Gibbs distribution} of the configuration of spins in the subtree (where we assume that the strictly positive common interaction strength of the Ising model is fixed and that there is no external magnetic field) \cite[Section 2.2]{Evansetal2000}.\footnote{In particular, each value of $\delta \in \big(0,\frac{1}{2}\big)$ corresponds to a unique value of temperature such that the broadcasting distribution defined by $\delta$ is equivalent to the Boltzmann-Gibbs distribution with the associated temperature parameter \cite[Equation (11)]{Evansetal2000}.} In the theory of Ising models, weak limits of Boltzmann-Gibbs distributions on finite subgraphs of an infinite graph with different boundary conditions yield different \textit{Gibbs measures or states} on the infinite graph, cf. \cite[Chapters 3 and 6]{FriedliVelenik2018}. For instance, the broadcasting distribution on an infinite tree corresponds to the Gibbs measure with \textit{free boundary conditions}, which is obtained by taking a weak limit of the broadcasting distributions over finite subtrees. Moreover, it is well-known that under general conditions, the Dobrushin-Lanford-Ruelle (DLR) formalism for defining Gibbs measures (or DLR states) using \textit{Gibbsian specifications} produces a convex Choquet simplex of Gibbs measures corresponding to any particular specification \cite[Chapters 1, 2, and 7]{Georgii2011} (also see \cite[Chapter 6]{FriedliVelenik2018}). Hence, it is of both mathematical and physical interest to find the extremal Gibbs measures of this simplex.\footnote{Indeed, extremal Gibbs measures are precisely those Gibbs measures that have trivial tail $\sigma$-algebra, i.e. tail events exhibit a zero-one law for extremal Gibbs measures, cf. \cite[Section 7.1]{Georgii2011}, \cite[Section 6.8]{FriedliVelenik2018}. As explained in \cite[Comment (7.8)]{Georgii2011} and \cite[p.302]{FriedliVelenik2018}, since tail events correspond to macroscopic events that are not affected by the behavior of any finite subset of spins, extremal Gibbs measures have deterministic macroscopic events. Thus, from a physical perspective, only extremal Gibbs measures are suitable to characterize the equilibrium states of a statistical mechanical system.} It turns out that reconstruction is impossible on a broadcasting tree if and only if the Gibbs measure with free boundary conditions of the corresponding ferromagnetic Ising model is extremal \cite{BleherRuizZagrebnov1995}, \cite[Section 2.2]{Evansetal2000}. This portrays a strong connection between broadcasting on trees and the theory of Ising models. We refer readers to \cite{Ioffe1996a,Ioffe1996b,PemantlePeres2010,Sly2011} for related work, and to \cite[Section 2.2]{Evansetal2000} for further references on the Ising model literature. 

A second example of a related discrete probability question stems from the theory of \textit{probabilistic cellular automata}. Indeed, another motivation for our problem is to understand whether it is possible to propagate information in regular grids. Notice that the broadcasting model on a two-dimensional regular grid can be perceived as a one-dimensional probabilistic cellular automaton (see e.g. \cite[Section 1]{Gray2001} for a definition) with boundary conditions that limit the layer sizes. So, the feasibility of broadcasting on two-dimensional regular grids intuitively corresponds to a question about the ergodicity of one-dimensional probabilistic cellular automata (along with sufficiently fast convergence rate). Inspired by work on the \textit{positive rates conjecture} for one-dimensional probabilistic cellular automata, cf. \cite[Section 1]{Gray2001}, our conjecture is that propagation of information is impossible for a two-dimensional regular grid regardless of the noise level and of the choice of Boolean processing function (which is the same for every vertex). Our results towards establishing this conjecture will be the focus of a forthcoming paper.

Finally, the broadcasting process on trees also plays a fundamental role in understanding various questions in theoretical computer science and learning theory. For example, results on trees have been exploited in problems of \textit{ancestral data and phylogenetic tree reconstruction}\textemdash see e.g. \cite{Mossel2003,Mossel2004,DaskalakisMosselRoch2006,Roch2010}. In fact, the existence results obtained in this paper suggest that it might be possible to reconstruct other biological networks, such as phylogenetic networks (see e.g. \cite{HusonRuppScornavacca2010}) or pedigrees (see e.g. \cite{Thompson1986,SteelHein2006}), even if the growth of the network is very mild. Moreover, reconstruction on trees can be used to understand phase transitions for \textit{random constraint satisfaction} problems\textemdash see e.g. \cite{MezardMontanari2006,Krzakalaetal2007,GerschenfeldMontanari2007,MontanariRestrepoTetali2011} and follow-up work. It is an interesting future endeavor to explore if there are connections between broadcasting on general DAGs and random constraint satisfaction problems. Currently, we are not aware that such connections have been established. Lastly, we note that broadcasting on trees has also been used to prove impossibility of weak recovery (or detection) in the problem of \textit{community detection in stochastic block models}, cf. \cite[Section 5.1]{Abbe2018}.

\subsection{Outline}

We briefly outline the rest of this paper. Since we will use probabilistic arguments to establish the existence of bounded indegree DAGs where reconstruction of root bit is possible, we will prove many of our results for random DAGs. So, the next subsection \ref{Random Grid Model} formally defines the random DAG model. In section \ref{Main Results}, we present our three main results (as well as some auxiliary results) pertaining to the random DAG model, and discuss several related results in the literature. Then, we prove these main results in sections \ref{Analysis of Majority Rule Processing in Random Grid}, \ref{Analysis of And-Or Rule Processing in Random Grid}, and \ref{Deterministic Quasi-Polynomial Time and Randomized Polylogarithmic Time Constructions of DAGs where Broadcasting is Possible}, respectively. In particular, section \ref{Analysis of Majority Rule Processing in Random Grid} analyzes broadcasting with majority processing functions when the indegree of each vertex is $3$ or more, section \ref{Analysis of And-Or Rule Processing in Random Grid} analyzes broadcasting with AND and OR processing functions when the indegree of each vertex is $2$, and section \ref{Deterministic Quasi-Polynomial Time and Randomized Polylogarithmic Time Constructions of DAGs where Broadcasting is Possible} illustrates our explicit constructions of DAGs where reconstruction of the root bit is possible using expander graphs. Finally, we conclude our discussion and list some open problems in section \ref{Conclusion}.

\subsection{Random DAG Model}
\label{Random Grid Model}

A \textit{random DAG model} consists of an infinite DAG with fixed vertices that are Bernoulli ($\{0,1\}$-valued) random variables and randomly generated edges which are independent BSCs. We first define the vertex structure of this model, where each vertex is identified with the corresponding random variable. Let the root (or ``source'') random variable be $X_{0,0} \sim \Ber\big(\frac{1}{2}\big)$. Furthermore, we define $X_k = (X_{k,0},\dots,X_{k,L_k-1})$ as the vector of vertex random variables at distance (i.e. length of shortest path) $k \in \N \triangleq \{0,1,2,\dots\}$ from the root, where $L_k \in \N\backslash\!\{0\}$ denotes the number of vertices at distance $k$. In particular, we have $X_0 = (X_{0,0})$ so that $L_0 = 1$, and we are typically interested in the regime where $L_k \rightarrow \infty$ as $k \rightarrow \infty$.

We next define the edge structure of the random DAG model. For any $k \in \N\backslash\!\{0\}$ and any $j \in [L_{k}] \triangleq \{0,\dots,L_{k} - 1\}$, we independently and uniformly select $d \in \N \backslash \! \{0\}$ vertices $X_{k-1,i_1},\dots,X_{k-1,i_d}$ with replacement from $X_{k-1}$ (i.e. $i_1,\dots,i_d$ are i.i.d. uniform on $[L_{k-1}]$), and then construct $d$ directed edges: $(X_{k-1,i_1},X_{k,j}),\dots,$ $(X_{k-1,i_d},X_{k,j})$. (Here, $i_1,\dots,i_d$ are independently chosen for each $X_{k,j}$.) This random process generates the underlying DAG structure. In the sequel, we will let $G$ be a random variable representing this underlying (infinite) random DAG, i.e. $G$ encodes the random configuration of the edges between the vertices.

To define a \textit{Bayesian network} (or directed graphical model) on this random DAG, we fix some sequence of Boolean functions $f_{k}:\{0,1\}^d \rightarrow \{0,1\}$ for $k \in \N\backslash\!\{0\}$ (that depend on the level index $k$, but typically not on the realization of $G$), and some crossover probability $\delta \in \big(0,\frac{1}{2}\big)$ (since this is the interesting regime of $\delta$). Then, for any $k \in \N\backslash\!\{0\}$ and $j \in [L_{k}]$, given $i_1,\dots,i_d$ and $X_{k-1,i_1},\dots,X_{k-1,i_d}$, we define:
\begin{equation}
\label{eq:propagation}
X_{k,j} = f_{k}(X_{k-1,i_1} \oplus Z_{k,j,1},\dots,X_{k-1,i_d}\oplus Z_{k,j,d})
\end{equation}
where $\oplus$ denotes addition modulo $2$, and $\{Z_{k,j,i} : k \in \N\backslash\!\{0\}, \, j \in [L_{k}], \, i \in \{1,\dots,d\}\}$ are i.i.d. $\Ber(\delta)$ random variables that are independent of everything else. This means that each edge is a BSC with parameter $\delta$, denoted $\mathsf{BSC}(\delta)$. Moreover, \eqref{eq:propagation} characterizes the conditional distribution of $X_{k,j}$ given its parents. In this model, the Boolean processing function used at a vertex $X_{k,j}$ depends only on the level index $k$. A more general model can be defined where each vertex $X_{k,j}$ has its own Boolean processing function $f_{k,j}:\{0,1\}^d \rightarrow \{0,1\}$ for $k \in \N\backslash\!\{0\}$ and $j \in [L_{k}]$. However, with the exception of a few converse results, we will mainly analyze instances of the simpler model in this paper.

Note that although we will analyze this model for convenience, as stated, our underlying graph is really a directed multigraph rather than a DAG, because we select the parents of a vertex with replacement. It is straightforward to construct an equivalent model where the underlying graph is truly a DAG. For each vertex $X_{k,j}$ with $k \in \N\backslash\!\{0\}$ and $j \in [L_{k}]$, we first construct $d$ intermediate parent vertices $\{X_{k,j}^i : i \in \{1,\dots,d\}\}$ that live between layers $k$ and $k-1$, where each $X_{k,j}^i$ has a single edge pointing to $X_{k,j}$. Then, for each $X_{k,j}^i$, we independently and uniformly select a vertex from layer $k-1$, and construct a directed edge from that vertex to $X_{k,j}^i$. This defines a valid (random) DAG. As a result, every realization of $G$ can be perceived as either a directed multigraph or its equivalent DAG. Furthermore, the Bayesian network on this true DAG is defined as follows: each $X_{k,j}$ is the output of a Boolean processing function $f_{k}$ with inputs $\{X_{k,j}^i : i \in \{1,\dots,d\}\}$, and each $X_{k,j}^i$ is the output of a BSC whose input is the unique parent of $X_{k,j}^i$ in layer $k-1$. 

Finally, we define the ``empirical probability of unity'' at level $k \in \N$ as:
\begin{equation}
\label{Eq: Empirical Probability of Unity Definition}
\sigma_k \triangleq \frac{1}{L_k} \sum_{m = 0}^{L_k - 1}{X_{k,m}}
\end{equation}
where $\sigma_0 = X_{0,0}$ is just the root vertex. Observe that given $\sigma_{k-1} = \sigma$, $X_{k-1,i_1},\dots,X_{k-1,i_d}$ are i.i.d. $\Ber(\sigma)$, and as a result, $X_{k-1,i_1} \oplus Z_{k,j,1},\dots,X_{k-1,i_d} \oplus Z_{k,j,d}$ are i.i.d. $\Ber(\sigma * \delta)$, where $\sigma * \delta \triangleq \sigma(1-\delta) + \delta(1-\sigma)$ is the convolution of $\sigma$ and $\delta$. Therefore, $X_{k,j}$ is the output of $f_{k}$ upon inputting a $d$-length i.i.d. $\Ber(\sigma * \delta)$ string.

Under this formal setup, our objective is to determine whether or not the value at the root $\sigma_0 = X_{0,0}$ can be decoded from the observations $X_k$ as $k \rightarrow \infty$. Since $X_k$ is an exchangeable sequence of random variables given $\sigma_0$, for any values $x_{0,0},x_{k,0},\dots,x_{k,L_k-1} \in \{0,1\}$ and any permutation $\pi$ of $[L_k]$, we have:
\begin{equation}
\begin{aligned}
& P_{X_k|\sigma_0}(x_{k,0},\dots,x_{k,L_k-1}|x_{0,0}) \\
&  \quad \quad \quad \quad = P_{X_k|\sigma_0}(x_{k,\pi(0)},\dots,x_{k,\pi(L_k-1)}|x_{0,0}) \, .
\end{aligned}
\end{equation}
Letting $\sigma = \frac{1}{L_k}\sum_{j = 0}^{L_k - 1}{x_{k,j}}$, we can factorize $P_{X_k|\sigma_0}$ as:
\begin{equation}
\label{Eq: Exchangeability}
P_{X_k|\sigma_0}(x_{k,0},\dots,x_{k,L_k-1}|x_{0,0}) = {\binom{L_k}{L_k \sigma}}^{\! -1} P_{\sigma_k|\sigma_0}(\sigma|x_{0,0}) \, . 
\end{equation}
Using the Fisher-Neyman factorization theorem \cite[Theorem 3.6]{Keener2010}, this implies that $\sigma_k$ is a \textit{sufficient statistic} of $X_k$ for performing inference about $\sigma_0$. Therefore, we restrict our attention to the Markov chain $\{\sigma_k : k \in \N\}$ in our achievability proofs, since if decoding is possible from $\sigma_k$, then it is also possible from $X_k$. Given $\sigma_k$, inferring the value of $\sigma_0$ is a binary hypothesis testing problem with minimum achievable probability of error:
\begin{equation}
\P\!\left(f_{\mathsf{ML}}^k(\sigma_k) \neq \sigma_0\right) = \frac{1}{2}\left(1-\left\|P_{\sigma_k}^+ - P_{\sigma_k}^-\right\|_{\mathsf{TV}}\right)
\end{equation}
where $f_{\mathsf{ML}}^k:\{m/L_k : m \in \{0,\dots,L_k\}\} \rightarrow \{0,1\}$ is the maximum likelihood (ML) decision rule based on the empirical probability of unity at level $k$ in the absence of knowledge of the random DAG realization $G$, $P_{\sigma_k}^+$ and $P_{\sigma_k}^-$ are the conditional distributions of $\sigma_k$ given $\sigma_0 = 1$ and $\sigma_0 = 0$ respectively, and for any two probability measures $P$ and $Q$ on the same measurable space $(\Omega,\F)$, their \textit{total variation (TV) distance} is defined as:
\begin{align}
\left\|P - Q\right\|_{\mathsf{TV}} & \triangleq \sup_{A \in \F}{\left|P(A) - Q(A)\right|} \\
& = \frac{1}{2}\left\|P - Q\right\|_{1}
\label{Eq: TV Distance Definition}
\end{align}
where $\left\|\cdot\right\|_{1}$ denotes the $\mathcal{L}^1$-norm. We say that reconstruction of the root bit $\sigma_0$ is possible when:\footnote{The limits in \eqref{Eq:TV Reconstruction Possible}, \eqref{Eq:Strong TV Reconstruction Impossible}, and \eqref{Eq:TV Reconstruction Impossible} exist because $\P\big(f_{\mathsf{ML}}^k(\sigma_k) \neq \sigma_0\big)$, $\P\big(h_{\mathsf{ML}}^k(X_k) \neq X_0\big)$, and $\P\big(h_{\mathsf{ML}}^k(X_k,G) \neq X_0 \big| G\big)$ (for any fixed realization $G$) are monotone non-decreasing sequences in $k$ that are bounded above by $\frac{1}{2}$. This can be deduced either from the data processing inequality for TV distance, or from the fact that a randomized decoder at level $k$ can simulate the stochastic transition to level $k+1$.}
\begin{equation}
\label{Eq:TV Reconstruction Possible}
\begin{aligned}
& \lim_{k \rightarrow \infty}{\P\!\left(f_{\mathsf{ML}}^k(\sigma_k) \neq \sigma_0\right)} < \frac{1}{2} \\
& \quad \quad \quad \quad \quad \quad \Leftrightarrow \quad \lim_{k \rightarrow \infty}{\left\|P_{\sigma_k}^+ - P_{\sigma_k}^-\right\|_{\mathsf{TV}}} > 0 \, . 
\end{aligned}
\end{equation}
In the sequel, to simplify our analysis when proving that reconstruction is possible, we will sometimes use other (sub-optimal) decision rules rather than the ML decision rule.

On the other hand, we will consider the Markov chain $\{X_k : k \in \N\}$ conditioned on $G$ in our converse proofs. We say that reconstruction of the root bit $X_{0}$ is impossible when:
\begin{equation}
\label{Eq:Strong TV Reconstruction Impossible}
\begin{aligned}
& \lim_{k \rightarrow \infty}{\P\!\left(h_{\mathsf{ML}}^k(X_k,G) \neq X_0 \middle| G\right)} = \frac{1}{2} \quad G\text{-}a.s. \\
& \quad \quad \quad \Leftrightarrow \quad \lim_{k \rightarrow \infty}{\left\|P_{X_k|G}^+ - P_{X_k|G}^-\right\|_{\mathsf{TV}}} = 0 \quad G\text{-}a.s.
\end{aligned}
\end{equation}
where $h_{\mathsf{ML}}^k(\cdot,G):\{0,1\}^{L_k} \rightarrow \{0,1\}$ is the ML decision rule based on the full state at level $k$ given knowledge of the random DAG realization $G$, $P_{X_k|G}^+$ and $P_{X_k|G}^-$ denote the conditional distributions of $X_k$ given $\{X_0 = 1,G\}$ and $\{X_0 = 0,G\}$ respectively, and the notation $G$-$a.s.$ (almost surely) implies that the conditions in \eqref{Eq:Strong TV Reconstruction Impossible} hold with probability $1$ with respect to the distribution of the random DAG $G$. Note that applying the bounded convergence theorem to the TV distance condition in \eqref{Eq:Strong TV Reconstruction Impossible} yields $\lim_{k \rightarrow \infty}{\E\big[\big\|P_{X_k|G}^+ - P_{X_k|G}^-\big\|_{\mathsf{TV}}\big]} = 0$, and employing Jensen's inequality here establishes the weaker impossibility result:
\begin{equation}
\label{Eq:TV Reconstruction Impossible}
\begin{aligned}
& \lim_{k \rightarrow \infty}{\P\!\left(h_{\mathsf{ML}}^k(X_k) \neq X_0\right)} = \frac{1}{2} \\
& \quad \quad \quad \quad \quad \Leftrightarrow \quad \lim_{k \rightarrow \infty}{\left\|P_{X_k}^+ - P_{X_k}^-\right\|_{\mathsf{TV}}} = 0
\end{aligned} 
\end{equation}
where $h_{\mathsf{ML}}^k:\{0,1\}^{L_k} \rightarrow \{0,1\}$ is the ML decision rule based on the full state at level $k$ in the absence of knowledge of the random DAG realization $G$, and $P_{X_k}^+$ and $P_{X_k}^-$ are the conditional distributions of $X_k$ given $X_0 = 1$ and $X_0 = 0$, respectively. Since $\sigma_k$ is a sufficient statistic of $X_k$ for performing inference about $\sigma_0$ when we average over $G$, we have $\P\big(h_{\mathsf{ML}}^k(X_k) \neq X_0\big) = \P\big(f_{\mathsf{ML}}^k(\sigma_k) \neq \sigma_0\big)$ (or equivalently, $\big\|P_{X_k}^+ - P_{X_k}^-\big\|_{\mathsf{TV}} = \big\|P_{\sigma_k}^+ - P_{\sigma_k}^-\big\|_{\mathsf{TV}}$), and the condition in \eqref{Eq:TV Reconstruction Impossible} is a counterpart of \eqref{Eq:TV Reconstruction Possible}.

\section{Main Results and Discussion}
\label{Main Results}

In this section, we state our main results, briefly delineate the main techniques or intuition used in the proofs, and discuss related literature.

\subsection{Results on Random DAG Models}
\label{Results on Random DAG Models}

We prove two main results on the random DAG model. The first considers the setting where the indegree of each vertex (except the root) is $d \geq 3$. In this scenario, taking a majority vote of the inputs at each vertex intuitively appears to have good ``local error correction'' properties. So, we fix all Boolean functions in the random DAG model to be the ($d$-input) \textit{majority} rule, and prove that this model exhibits a phase transition phenomenon around a critical threshold of:
\begin{equation}
\label{Eq:Critical Noise Level}
\delta_{\mathsf{maj}} \triangleq \frac{1}{2} - \frac{\displaystyle{2^{d-2}}}{\displaystyle{\ceil[\bigg]{\frac{d}{2}} \binom{d}{\ceil[\big]{\frac{d}{2}}}}} 
\end{equation}
where $\ceil{\cdot}$ denotes the ceiling function, and we will use $\floor{\cdot}$ to denote the floor function. Indeed, the theorem below illustrates that for $\delta < \delta_{\mathsf{maj}}$, the majority decision rule $\hat{S}_k \triangleq \I\big\{\sigma_k \geq \frac{1}{2}\big\}$ can asymptotically decode $\sigma_0$, but for $\delta > \delta_{\mathsf{maj}}$, the ML decision rule with knowledge of $G$ cannot asymptotically decode $\sigma_0$. 

\begin{theorem}[Phase Transition in Random DAG Model with Majority Rule Processing]
\label{Thm:Phase Transition in Random Grid with Majority Rule Processing}
Let $C(\delta,d)$ and $D(\delta,d)$ be the constants defined in \eqref{Eq: Large Constant before Log} and \eqref{Eq:Lipschitz constant of g} in section \ref{Analysis of Majority Rule Processing in Random Grid}. For a random DAG model with $d \geq 3$ and majority processing functions (where ties are broken by outputting random bits), the following phase transition phenomenon occurs around $\delta_{\mathsf{maj}}$:
\begin{enumerate}
\item If $\delta \in (0,\delta_{\mathsf{maj}})$, and the number of vertices per level satisfies $L_k \geq C(\delta,d)\log(k)$ for all sufficiently large $k$ (depending on $\delta$ and $d$), then reconstruction is possible in the sense that:\footnote{Throughout this paper, $\log(\cdot)$ and $\exp(\cdot)$ denote the natural logarithm and natural exponential (with base $e$), respectively.} 
$$ \limsup_{k \rightarrow \infty}{\P(\hat{S}_{k} \neq \sigma_0)} < \frac{1}{2} $$
where we use the majority decoder $\hat{S}_k = \I\big\{\sigma_k \geq \frac{1}{2}\big\}$ at level $k$.
\item If $\delta \in \big(\delta_{\mathsf{maj}},\frac{1}{2}\big)$, and the number of vertices per level satisfies $L_k = o\big(D(\delta,d)^{- k}\big)$, then reconstruction is impossible in the sense of \eqref{Eq:Strong TV Reconstruction Impossible}:
$$ \lim_{k \rightarrow \infty}{\left\|P_{X_k|G}^+ - P_{X_k|G}^-\right\|_{\mathsf{TV}}} = 0 \quad G\text{-}a.s. $$
\end{enumerate} 
\end{theorem}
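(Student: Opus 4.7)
\emph{Proof plan.} Both parts hinge on the map $g\colon [0,1]\to[0,1]$ defined by $g(\sigma) \coloneqq \E[\sigma_k \mid \sigma_{k-1}=\sigma]$, which (with random tie-breaking) equals $\P[\mathrm{Bin}(d,\sigma*\delta) \geq \lceil d/2 \rceil]$ plus $\tfrac{1}{2}\P[\mathrm{Bin}(d,\sigma*\delta) = d/2]$ when $d$ is even. The telescoping identity $\tfrac{d}{dp}\sum_{k=m}^d\binom{d}{k}p^k(1-p)^{d-k} = d\binom{d-1}{m-1}p^{m-1}(1-p)^{d-m}$ gives $g(1/2)=1/2$ and
\[
g'(1/2) \;=\; (1-2\delta) \cdot \frac{\lceil d/2\rceil\binom{d}{\lceil d/2\rceil}}{2^{d-1}},
\]
so $\delta_{\mathsf{maj}}$ in \eqref{Eq:Critical Noise Level} is precisely the noise level at which $g'(1/2)=1$. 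For $\delta<\delta_{\mathsf{maj}}$ the fixed point $1/2$ is repelling, and by symmetry $g$ admits stable fixed points $\sigma^\pm$ with $\sigma^-=1-\sigma^+ \neq 1/2$; for $\delta>\delta_{\mathsf{maj}}$ the fixed point $1/2$ is attracting.

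\emph{Achievability.} I would track $\sigma_k$ as a noisy iteration of $g$. Conditional on $\sigma_{k-1}$, the quantity $L_k \sigma_k$ is a sum of $L_k$ i.i.d.\ $\Ber(g(\sigma_{k-1}))$ variables (after integrating out the BSC noise and the uniform parent selections), so Hoeffding's inequality yields $\P(|\sigma_k - g(\sigma_{k-1})| > \varepsilon \mid \sigma_{k-1}) \leq 2 e^{-2\varepsilon^2 L_k}$. Fix $\varepsilon<(\sigma^+-1/2)/2$ and choose $C(\delta,d)$ so that $L_k \geq C(\delta,d)\log k$ makes $\sum_k 2 e^{-2\varepsilon^2 L_k}$ finite. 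A Borel--Cantelli argument then forces $|\sigma_k - g(\sigma_{k-1})| \leq \varepsilon$ for all large $k$, almost surely, and since the deterministic iterates $1, g(1), g(g(1)), \ldots$ converge monotonically to $\sigma^+ > 1/2$, we conclude $\sigma_k \geq 1/2 + \varepsilon'$ eventually on the event $\sigma_0=1$. The symmetric statement for $\sigma_0=0$ gives $\limsup_k \P(\hat S_k \neq \sigma_0) < 1/2$.

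\emph{Converse.} I would couple the processes started from $X_0=1$ and $X_0=0$ through a shared realization of the BSC noise $Z$, which gives
\[
\|P_{X_k|G}^+ - P_{X_k|G}^-\|_{\mathsf{TV}} \;\leq\; \P_Z(X_k^+ \neq X_k^- \mid G) \;\leq\; \sum_{j=0}^{L_k-1} p_{k,j}(G),
\]
where $p_{k,j}(G) \coloneqq \P_Z(X_{k,j}^+ \neq X_{k,j}^- \mid G)$. The central step is a per-layer contraction for $p_{k,j}$: writing $M_l = \I\{X_{k-1,i_l}^+ \neq X_{k-1,i_l}^-\}$, an influence bound on $\maj$ of BSC-corrupted inputs gives $\P_Z(\maj(Y^+) \neq \maj(Y^+ \oplus M) \mid X_{k-1}^\pm) \leq \sum_l M_l \cdot \bar I_l$, where $\bar I_l$ controls the $l$-th influence of $\maj$ under the relevant BSC-induced input distribution. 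Averaging over the noise from earlier layers and over the uniform selection of parents, and invoking the exchangeability of $X_{k-1,0},\ldots,X_{k-1,L_{k-1}-1}$, yields the one-step recursion $\E_G[p_{k,j}] \leq D(\delta,d)\,\E_G[p_{k-1,j'}]$ with
\[
D(\delta,d) \;=\; (1-2\delta)\cdot\frac{\lceil d/2\rceil\binom{d}{\lceil d/2\rceil}}{2^{d-1}},
\]
strictly less than $1$ iff $\delta>\delta_{\mathsf{maj}}$. Iterating gives $\E_G[p_{k,j}] \leq D^k$, so $\E_G[\|P_{X_k|G}^+ - P_{X_k|G}^-\|_{\mathsf{TV}}] \leq L_k D^k \to 0$ whenever $L_k=o(D^{-k})$. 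Since the TV distance is bounded by $1$ and monotone non-increasing in $k$ for every fixed $G$ (as noted in the footnote), dominated convergence together with this monotonicity upgrades convergence in expectation to $G$-almost-sure convergence.

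\emph{Main obstacle.} The technical heart is the per-coordinate influence bound in the converse: the coupled parents $X_{k-1,i_l}^\pm$ are neither uniform nor independent under $G$-conditioning (ancestors may be shared), and the raw influences of $\maj$ on BSC-corrupted inputs depend on the realized parent biases $p_l = X_{k-1,i_l}^+ * \delta$, which can make an individual influence larger than its value $\binom{d-1}{\lceil d/2\rceil-1}/2^{d-1}$ at the uniform distribution. I would handle this by comparing each influence to its value at the uniform distribution (the source of the constant $D$) and controlling the first-order corrections using the i.i.d.\ uniform choice of parents at each layer to wash out correlations. By contrast, the achievability direction is essentially routine once the local stability analysis of $g$ and the Hoeffding-plus-Borel--Cantelli bootstrap are in place.
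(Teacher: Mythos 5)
Your proposal has the right fixed-point picture and the right constants, but both halves have genuine gaps, and in each case the missing ingredient is the monotone Markovian coupling of the chains started from $X_0=1$ and $X_0=0$ that the paper builds at the outset.

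In the achievability part, the conclusion you draw from Borel--Cantelli --- that on $\{\sigma_0=1\}$ one has $\sigma_k\geq \tfrac12+\varepsilon'$ eventually, almost surely --- is false. With positive probability (for instance, all $dL_1$ BSCs at the first level generating fresh bits that land the chain below $\tfrac12$) the process enters the basin of the lower fixed point and, by your own pseudo-orbit argument, never returns; if your claim were true, $\P(\hat S_k\neq\sigma_0)$ would tend to $0$, which is strictly stronger than the theorem and does not hold. Borel--Cantelli only controls the chain after a \emph{random} time $K_0$, and $\sigma_{K_0}$ is uncontrolled. What the paper does instead is fix a deterministic level $K$, condition on the positive-probability separation event $E=\{\sigma_K^+\geq\hat\sigma-\epsilon,\ \sigma_K^-\leq 1-\hat\sigma+\epsilon\}$, and show via the Hoeffding estimate \eqref{Eq: Stability whp} that, given $E$, both chains stay in their respective basins forever with probability $\geq 1-2\tau$. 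The final step then needs $\P(\sigma_k^+\geq\tfrac12)-\P(\sigma_k^-\geq\tfrac12)\geq(1-2\tau)\P(E)>0$, and this is where monotonicity is essential: because $\sigma_k^+\geq\sigma_k^-$ pointwise under the coupling, the integrand $\I\{\sigma_k^+\geq\tfrac12\}-\I\{\sigma_k^-\geq\tfrac12\}$ is nonnegative, so the contribution from $E^c$ cannot cancel the gain on $E$. Your sketch contains neither the positive-probability conditioning event nor any mechanism ruling out this cancellation.

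In the converse, your skeleton (coupling, union bound, per-step contraction by $D(\delta,d)$, then $L_kD^k\to 0$, then monotonicity of TV in $k$ plus bounded convergence) matches the paper, but the step you yourself flag as the ``technical heart'' --- a per-vertex influence contraction for $\E_G[p_{k,j}]$ conditioned on $G$ --- is exactly the hard route, and your proposed fix (``controlling the first-order corrections \ldots to wash out correlations'') is not an argument. The paper sidesteps this entirely: it does \emph{not} condition on $G$ at this stage. Conditioned on $(\sigma_{k-1}^+,\sigma_{k-1}^-)$, the uniform-with-replacement parent selection makes each child exactly $\Ber(g(\sigma_{k-1}^{\pm}))$, so
$\E[\sigma_k^+-\sigma_k^-\mid\sigma_{k-1}^+,\sigma_{k-1}^-]=g(\sigma_{k-1}^+)-g(\sigma_{k-1}^-)\leq D(\delta,d)\,(\sigma_{k-1}^+-\sigma_{k-1}^-)$
by the global Lipschitz bound $\max_\sigma g'(\sigma)=g'(\tfrac12)=D(\delta,d)$ (using monotonicity of the coupling so that the difference is nonnegative), and combining with $\E[\|P_{X_k|G}^+-P_{X_k|G}^-\|_{\mathsf{TV}}]\leq L_k\,\E[\sigma_k^+-\sigma_k^-]$ gives the bound $L_kD(\delta,d)^k$ with no influence computation at biased product measures. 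You should restructure the converse around this scalar recursion rather than the per-coordinate influences.
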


Theorem \ref{Thm:Phase Transition in Random Grid with Majority Rule Processing} is proved in section \ref{Analysis of Majority Rule Processing in Random Grid}. Intuitively, the proof considers the conditional expectation function, $g:[0,1] \rightarrow [0,1]$, $g(\sigma) = \E[\sigma_k|\sigma_{k-1} = \sigma]$ (see \eqref{Eq:Binomial form of g} and \eqref{Eq:Conditional Expectation} in section \ref{Analysis of Majority Rule Processing in Random Grid}), which provides the approximate value of $\sigma_k$ given the value of $\sigma_{k-1}$ for large $k$. This function turns out to have three fixed points when $\delta \in (0,\delta_{\mathsf{maj}})$, and only one fixed point when $\delta \in \big(\delta_{\mathsf{maj}},\frac{1}{2}\big)$. In the former case, $\sigma_k$ ``moves'' to the largest fixed point when $\sigma_0 = 1$, and to the smallest fixed point when $\sigma_0 = 0$. In the latter case, $\sigma_k$ ``moves'' to the unique fixed point of $\frac{1}{2}$ regardless of the value of $\sigma_0$ (see Proposition \ref{Prop: Majority Grid Almost Sure Convergence} in section \ref{Analysis of Majority Rule Processing in Random Grid}).\footnote{Note, however, that $\sigma_k \rightarrow \frac{1}{2}$ almost surely as $k \rightarrow \infty$ does not imply the impossibility of reconstruction in the sense of \eqref{Eq:TV Reconstruction Impossible}, let alone \eqref{Eq:Strong TV Reconstruction Impossible}. So, a different argument is required to establish such impossibility results.} This provides the guiding intuition for why we can asymptotically decode $\sigma_0$ when $\delta \in (0,\delta_{\mathsf{maj}})$, but not when $\delta \in \big(\delta_{\mathsf{maj}},\frac{1}{2}\big)$.

The recursive (or fixed point) structure of $g$ in the special case where $d = 3$ and $\delta_{\mathsf{maj}} = \frac{1}{6}$ can be traced back to the work of von Neumann in \cite{vonNeumann1956}. So, it is worth comparing Theorem \ref{Thm:Phase Transition in Random Grid with Majority Rule Processing} with von Neumann's results in \cite[Section 8]{vonNeumann1956}, where the threshold of $\frac{1}{6}$ is also significant. In \cite[Section 8]{vonNeumann1956}, von Neumann demonstrates the possibility of reliable computation by constructing a circuit with successive layers of computation and local error correction using $3$-input $\delta$-noisy majority gates (i.e. the gates independently make errors with probability $\delta$). In his analysis, he first derives a simple recursion that captures the effect on the probability of error after applying a single noisy majority gate. Then, he uses a ``heuristic'' fixed point argument to show that as the depth of the circuit grows, the probability of error asymptotically stabilizes at a fixed point value less than $\frac{1}{2}$ if $\delta < \frac{1}{6}$, and the probability of error tends to $\frac{1}{2}$ if $\delta \geq \frac{1}{6}$. Moreover, he rigorously proves that reliable computation is possible for $\delta < 0.0073$. 

As we mentioned in subsection \ref{Motivation}, von Neumann's approach to remembering a random initial bit entails using multiple clones of the initial bit as inputs to a noisy circuit with one output, where the output equals the initial bit with probability greater than $\frac{1}{2}$ for ``good'' choices of noisy gates. It is observed in \cite[Section 2]{HajekWeller1991} that a balanced ternary tree circuit, with $k$ layers of $3$-input noisy majority gates and $3^k$ inputs that are all equal to the initial bit, can be used to remember the initial bit. In fact, von Neumann's heuristic fixed point argument that yields a critical threshold of $\frac{1}{6}$ for reconstruction is rigorous in this scenario. From this starting point, Hajek and Weller prove the stronger impossibility result that reliable computation is impossible for formulae (i.e. circuits where the output of each intermediate gate is the input of only one other gate) with general $3$-input $\delta$-noisy gates when $\delta \geq \frac{1}{6}$ \cite[Proposition 2]{HajekWeller1991}. This development can be generalized for any odd $d \geq 3$, and \cite[Theorem 1]{EvansSchulman2003} conveys that reliable computation is impossible for formulae with general $d$-input $\delta$-noisy gates when $\delta \geq \delta_{\mathsf{maj}}$. 

The discussion heretofore reveals that the critical thresholds in von Neumann's circuit for remembering a bit and in our model in Theorem \ref{Thm:Phase Transition in Random Grid with Majority Rule Processing} are both $\delta_{\mathsf{maj}}$. It turns out that this is a consequence of the common fixed point iteration structure of the two problems (as we will explain below). Indeed, the general recursive structure of $g$ for any odd value of $d$ was analyzed in \cite[Section 2]{EvansSchulman2003}. On a related front, the general recursive structure of $g$ was also analyzed in \cite{Mossel1998} in the context of performing recursive reconstruction on periodic trees, where the critical threshold of $\delta_{\mathsf{maj}}$ again plays a crucial role. In fact, we will follow the analysis in \cite{Mossel1998} to develop these recursions in section \ref{Analysis of Majority Rule Processing in Random Grid}.

We now elucidate the common fixed point iteration structure between von Neumann's model and our model in Theorem \ref{Thm:Phase Transition in Random Grid with Majority Rule Processing}. Suppose $d \geq 3$ is odd, and define the function $h:[0,1] \rightarrow [0,1]$, $h(p) \triangleq \P(\maj(Y_1,\dots,Y_d) = 1)$ for $Y_1,\dots,Y_d$ i.i.d. $\Ber(p)$. Consider von Neumann's balanced $d$-ary tree circuit with $k$ layers of $d$-input $\delta$-noisy majority gates and $d^k$ inputs that are all equal to the initial bit. In this model, it is straightforward to verify that the probability of error (i.e. output vertex $\neq$ initial bit) is $f^{(k)}(0)$, where $f:[0,1] \rightarrow [0,1]$ is given by \cite[Equation (3)]{EvansSchulman2003}:
\begin{equation}
\label{Eq: General Von Neumann Recursion}
f(\sigma) \triangleq \delta * h(\sigma) \, ,
\end{equation}
and $f^{(k)}$ denotes the $k$-fold composition of $f$ with itself. On the other hand, as explained in the brief intuition for our proof of Theorem \ref{Thm:Phase Transition in Random Grid with Majority Rule Processing} earlier, assuming that $\sigma_0 = 0$, the relevant recursion for our model is given by the repeated composition $g^{(k)}(0)$ (which captures the average value of $\sigma_k$ after $k$ layers). According to \eqref{Eq:Binomial form of g} in section \ref{Analysis of Majority Rule Processing in Random Grid}, $g(\sigma) = h(\delta * \sigma)$, which yields the relation:
\begin{equation}
\label{Eq: Equivalence of Recursions}
\forall k \in \N\backslash\!\{0\}, \enspace f^{(k+1)}(0) = \delta * g^{(k)}(0) 
\end{equation}
by induction. Therefore, the fixed point iteration structures of $f$ and $g$ are identical, and $\delta_{\mathsf{maj}}$ is the common critical threshold that determines when there is a unique fixed point. In particular, the fact that gates (or vertices) are noisy in von Neumann's model, while edges (or wires) are noisy in our model, has no bearing on this fixed point structure.\footnote{We refer readers to \cite{DobrushinOrtyukov1977} for general results on the relation between vertex noise and edge noise.}

Although both the aforementioned models use majority gates and share a common fixed point structure, it is important to recognize that our overall analysis differs from von Neumann's analysis in a crucial way. Since our recursion pertains to conditional expectations of the proportion of $1$'s in different layers (rather than the probabilities of error in von Neumann's setting), our proof requires exponential concentration inequalities to formalize the intuition provided by the fixed point analysis.

We now make several other pertinent remarks about Theorem \ref{Thm:Phase Transition in Random Grid with Majority Rule Processing}. Firstly, reconstruction is possible in the sense of \eqref{Eq:TV Reconstruction Possible} when $\delta \in (0,\delta_{\mathsf{maj}})$ since the ML decision rule achieves lower probability of error than the majority decision rule,\footnote{It can be seen from monotonicity and symmetry considerations that without knowledge of the random DAG realization $G$, the ML decision rule $f^k_{\mathsf{ML}}(\sigma_k)$ is equal to the majority decision rule $\hat{S}_k$. (So, the superior limit in part 1 of Theorem \ref{Thm:Phase Transition in Random Grid with Majority Rule Processing} can be replaced by a true limit.) On the other hand, with knowledge of the random DAG realization $G$, the ML decision rule $f^k_{\mathsf{ML}}(\sigma_k,G)$ based on $\sigma_k$ is not the majority decision rule.} and reconstruction is impossible in the sense of \eqref{Eq:TV Reconstruction Impossible} when $\delta \in \big(\delta_{\mathsf{maj}},\frac{1}{2}\big)$ (as explained at the end of subsection \ref{Random Grid Model}). Furthermore, while part 1 of Theorem \ref{Thm:Phase Transition in Random Grid with Majority Rule Processing} only shows that the ML decoder $f^k_{\mathsf{ML}}(\sigma_k)$ based on $\sigma_k$ is optimal in the absence of knowledge of the particular graph realization $G$, part 2 establishes that even if the ML decoder knows the graph $G$ and has access to the full $k$-layer state $X_k$, it cannot beat the $\delta_{\mathsf{maj}}$ threshold in all but a zero measure set of DAGs. 

Secondly, the following conjecture is still open: In the random DAG model with $L_k = O(\log(k))$ and fixed $d \geq 3$, reconstruction is impossible for all choices of Boolean processing functions when $\delta \geq \delta_{\mathsf{maj}}$. A consequence of this conjecture is that majority processing functions are optimal, i.e. they achieve the $\delta_{\mathsf{maj}}$ reconstruction threshold. The results in \cite{Mossel1998} provide strong evidence that this conjecture is true when all vertices in the random DAG use the same odd Boolean processing function. Indeed, for fixed $\delta \in \big(0,\frac{1}{2}\big)$ and any odd Boolean function $\mathsf{gate}:\{0,1\}^d \rightarrow \{0,1\}$, let $\tilde{g}:[0,1] \rightarrow [0,1]$ be defined as $\tilde{g}(\sigma) \triangleq \P(\mathsf{gate}(Y_1,\dots,Y_d) = 1)$ for $Y_1,\dots,Y_d$ i.i.d. $\Ber(\delta * \sigma)$.\footnote{A Boolean function is said to be \textit{odd} if flipping all its input bits also flips the output bit. The assumption that $\mathsf{gate}$ is odd ensures that the function $R_{\mathsf{gate}}^{\delta}(\sigma)$ in \cite[Definition 2.1, Lemma 2.3]{Mossel1998} (for a $1$-level $d$-regular tree) is precisely equal to the function $\tilde{g}(\sigma)$.} Then, \cite[Lemma 2.4]{Mossel1998} establishes that $\tilde{g}(\sigma) \leq g(\sigma)$ for all $\sigma \geq \frac{1}{2}$ and $\tilde{g}(\sigma) \geq g(\sigma)$ for all $\sigma \leq \frac{1}{2}$, where the function $g$ is given in \eqref{Eq:Binomial form of g} (and corresponds to the majority rule). Hence, if $g$ has a single fixed point at $\sigma = \frac{1}{2}$, $\tilde{g}$ also has a single fixed point at $\sigma = \frac{1}{2}$. This intuitively suggests that if reconstruction of the root bit is impossible using majority processing functions, it is also impossible using any odd processing function. Furthermore, our proof of part 2 of Theorem \ref{Thm:Phase Transition in Random Grid with Majority Rule Processing} in section \ref{Analysis of Majority Rule Processing in Random Grid} yields that reconstruction is impossible for all choices of odd and monotone non-decreasing Boolean processing functions when $\delta > \delta_{\mathsf{maj}}$, modulo the following conjecture (which we did not verify): among all odd and monotone non-decreasing Boolean functions, the maximum Lipschitz constant of $\tilde{g}$ is attained by the majority rule at $\sigma = \frac{1}{2}$.

Thirdly, the sub-exponential layer size condition $L_k = o\big(D(\delta,d)^{-k}\big)$ in part 2 of Theorem \ref{Thm:Phase Transition in Random Grid with Majority Rule Processing} is intuitively necessary. Suppose every Boolean processing function in our random DAG model simply outputs the value of its first input bit. This effectively sets $d = 1$, and reduces our problem to one of broadcasting on a random tree model. If $L_k = \Omega(E(\delta)^k)$ for some large enough constant $E(\delta)$, then most realizations of the random tree will have branching numbers greater than $(1-2\delta)^{-2}$. As a result, reconstruction will be possible for most realizations of the random tree (cf. the Kesten-Stigum threshold delineated at the outset of section \ref{Introduction}). Thus, when we are proving impossibility results, $L_k$ (at least intuitively) cannot be exponential in $k$ with a very large base. 

Fourthly, it is worth mentioning that for any fixed DAG with indegree $d \geq 3$ and sub-exponential $L_k$, for any choices of Boolean processing functions, and any choice of decoder, it is impossible to reconstruct the root bit when $\delta > \frac{1}{2} - \frac{1}{2\sqrt{d}}$. This follows from Evans and Schulman's result in \cite{EvansSchulman1999}, which we will discuss further in subsection \ref{Further Discussion}. 

Lastly, in the context of the random DAG model studied in Theorem \ref{Thm:Phase Transition in Random Grid with Majority Rule Processing}, the ensuing proposition illustrates that the problem of reconstruction using the information contained in just a single vertex, e.g. $X_{k,0}$, exhibits a similar phase transition phenomenon to that in Theorem \ref{Thm:Phase Transition in Random Grid with Majority Rule Processing}.

\begin{proposition}[Single Vertex Reconstruction]
\label{Prop: Single Vertex Reconstruction}
Let $C(\delta,d)$ be the constant defined in \eqref{Eq: Large Constant before Log} in section \ref{Analysis of Majority Rule Processing in Random Grid}. For a random DAG model with $d \geq 3$, the following phase transition phenomenon occurs around $\delta_{\mathsf{maj}}$:
\begin{enumerate}
\item If $\delta \in (0,\delta_{\mathsf{maj}})$, the number of vertices per level satisfies $L_k \geq C(\delta,d)\log(k)$ for all sufficiently large $k$ (depending on $\delta$ and $d$), and all Boolean processing functions are the majority rule (where ties are broken by outputting random bits), then reconstruction is possible in the sense that:
$$ \limsup_{k \rightarrow \infty}{\P(X_{k,0} \neq X_{0,0})} < \frac{1}{2} $$
where we use a single vertex $X_{k,0}$ as the decoder at level $k$.
\item If $\delta \in \big[\delta_{\mathsf{maj}},\frac{1}{2}\big)$, $d$ is odd, and the number of vertices per level satisfies $\lim_{k \rightarrow \infty}{L_k} = \infty$ and $R_k \triangleq \inf_{n \geq k}{L_n} = O\big(d^{2k}\big)$, then for all choices of Boolean processing functions (which may vary between vertices and be graph dependent), reconstruction is impossible in the sense that:
$$ \lim_{k \rightarrow \infty}{\E\!\left[\left\|P_{X_{k,0}|G}^+ - P_{X_{k,0}|G}^-\right\|_{\mathsf{TV}}\right]} = 0 $$
where $P_{X_{k,0}|G}^+$ and $P_{X_{k,0}|G}^-$ are the conditional distributions of $X_{k,0}$ given $\{X_{0,0} = 1,G\}$ and $\{X_{0,0} = 0,G\}$, respectively.
\end{enumerate}
\end{proposition}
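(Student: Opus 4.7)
For Part~1, the plan is to reduce single-vertex decoding to the analysis of $\sigma_k$ already carried out in Theorem~\ref{Thm:Phase Transition in Random Grid with Majority Rule Processing}, Part~1, via vertex-exchangeability. Because every vertex applies the same majority rule and the random DAG is constructed in a manner that is exchangeable over the vertices at each layer, the conditional law of $X_{k,j}$ given $X_{0,0}$ does not depend on $j$; in particular, $\P(X_{k,0}=1\mid X_{0,0}=x)=\E[\sigma_k\mid X_{0,0}=x]$. The proof of Theorem~\ref{Thm:Phase Transition in Random Grid with Majority Rule Processing}, Part~1 (via the three-fixed-point structure of $g$), shows that $\sigma_k$ concentrates around $t^+>\tfrac12$ when $X_{0,0}=1$ and around $t^-=1-t^+<\tfrac12$ when $X_{0,0}=0$, provided $\delta<\delta_{\mathsf{maj}}$ and $L_k\geq C(\delta,d)\log k$ eventually. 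Since $\sigma_k\in[0,1]$, this convergence in probability lifts to expectations, and a short computation yields $\limsup_k\P(X_{k,0}\neq X_{0,0})=1-t^+<\tfrac12$.

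For Part~2, since $X_{k,0}$ is Bernoulli the TV quantity equals the absolute bias $\phi_v(G):=|\P(X_v=1\mid X_{0,0}=1,G)-\P(X_v=1\mid X_{0,0}=0,G)|$. The plan is to establish a contractive recursion for the signed bias $p_v(G):=\P(X_v=1\mid X_{0,0}=1,G)-\P(X_v=1\mid X_{0,0}=0,G)$ using the multilinear extension $\tilde f_v$ of the processing function at each vertex. For a vertex $v$ at level $k$ with parents $u_1,\ldots,u_d$ drawn i.i.d.\ uniformly from $[L_{k-1}]$, the identity $\P(X_v=1\mid X_{0,0}=x,G)=\tilde f_v(\delta*\nu^x_1,\ldots,\delta*\nu^x_d)$ (where $\nu^x_l=\P(X_{u_l}=1\mid X_{0,0}=x,G)$) holds whenever the ancestors of $u_1,\ldots,u_d$ in the ancestral DAG of $v$ are pairwise disjoint, since then the noisy inputs $Y_l=X_{u_l}\oplus Z_l$ are conditionally independent given $(X_{0,0},G)$. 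Applying the mean value theorem along the segment $\xi(t)$ between the two parameter vectors then yields the exact formula $p_v(G)=(1-2\delta)\sum_{l=1}^d\alpha_l\,p_{u_l}(G)$, with $\alpha_l=\int_0^1\partial_l\tilde f_v(\xi(t))\,dt$.

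The heart of the argument is an extremal comparison with majority. The identity $\partial_l\tilde f(1/2)=\hat F(\{l\})$, where $F:\{-1,+1\}^d\to\{-1,+1\}$ is the $\pm 1$ image of $f$, combined with $\sum_l|\hat F(\{l\})|\leq\E[|X_1+\cdots+X_d|]$ (attained by $F=\operatorname{sign}(X_1+\cdots+X_d)$, i.e.\ the $\pm 1$ version of the majority function for $d$ odd), shows that for any Boolean $f$ and any $\xi$ sufficiently near $(1/2,\ldots,1/2)$, $(1-2\delta)\sum_l|\alpha_l|\leq g'(1/2)$, where $g$ is the majority-rule recursion from Theorem~\ref{Thm:Phase Transition in Random Grid with Majority Rule Processing}. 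For $\delta>\delta_{\mathsf{maj}}$ one has $g'(1/2)<1$. Averaging the linearization over the i.i.d.\ parent choices, taking expectation over $G$, and exploiting the vertex-exchangeability of $G$---which forces $\E_G[p_{k,0}^2]=\E_G[M_k]$ with $M_k:=L_k^{-1}\sum_j p_{k,j}^2$, even when the $f_v$'s are allowed to depend on $G$---leads to a recursion of the form $\E_G[M_k]\leq g'(1/2)^2\,\E_G[M_{k-1}]+(\text{error})$. Iterating gives $\E_G[M_k]\to 0$ geometrically, and $\E_G[|p_{k,0}|]\to 0$ then follows by Cauchy--Schwarz.

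The main obstacles are threefold. First, the multilinear identity for $\P(X_v=1\mid X_{0,0}=x,G)$ presumes the ancestral DAG of $X_{k,0}$ is a genuine tree; the hypothesis $R_k=O(d^{2k})$, via a birthday-paradox estimate on the $d^j$ potential ancestors at depth $j$ below $X_{k,0}$, is precisely what makes the collision probability negligible for the relevant range of $j$. Second, validating the linearization requires $\xi(t)$ to stay near $(1/2,\ldots,1/2)$, i.e.\ the marginals $\nu^x_l$ to concentrate around $\tfrac12$; this follows from $L_k\to\infty$ and the symmetries of the random DAG, but must be made quantitative in order to control the Taylor remainder. Third, and most delicately, the boundary case $\delta=\delta_{\mathsf{maj}}$ gives first-order contraction exactly $1$, so the recursion must be refined with the second-order term in the Taylor expansion of $\tilde f_v$ around $(1/2,\ldots,1/2)$ to extract a polynomial (rather than geometric) rate of decay, and this is where the precise growth budget $R_k=O(d^{2k})$ enters quantitatively to dominate the remaining error contributions.
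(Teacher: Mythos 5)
Your Part~1 is essentially the paper's argument: by exchangeability $\P(X_{k,0}\neq X_{0,0})=\tfrac{1}{2}\big(1-\E\big[\sigma_k^+-\sigma_k^-\big]\big)$, and the stability estimates from Theorem~\ref{Thm:Phase Transition in Random Grid with Majority Rule Processing} keep this expectation bounded away from $0$. One caution: the concentration of $\sigma_k^{\pm}$ near $\hat\sigma$ and $1-\hat\sigma$ is only established \emph{conditionally} on the event $E=\{\sigma_K^+\geq\hat\sigma-\epsilon,\ \sigma_K^-\leq 1-\hat\sigma+\epsilon\}$, which has probability bounded below but not close to $1$; so the correct conclusion is $\liminf_k\E[\sigma_k^+-\sigma_k^-]\geq \P(E)(2\hat\sigma-1-O(\epsilon+\tau))>0$, not the exact identity $\limsup_k\P(X_{k,0}\neq X_{0,0})=1-\hat\sigma$.

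Part~2 has a genuine gap. Your contraction rests on linearizing the multilinear extension $\tilde f_v$ around $(\tfrac12,\dots,\tfrac12)$ and invoking the extremal inequality $\sum_l|\hat F(\{l\})|\leq\E[|X_1+\cdots+X_d|]$, which is valid \emph{only at the uniform point}. But for arbitrary (graph-dependent) processing functions the marginals $\nu^x_l$ do not concentrate near $\tfrac12$: with all-AND gates, for instance, they are driven toward a fixed point near $0$, and the operating point $\delta*\nu$ sits near a corner of $[\delta,1-\delta]^d$ where $(1-2\delta)\sum_l\partial_l\tilde f$ can exceed $1$ even at $\delta=\delta_{\mathsf{maj}}$ (for $d=3$, AND at $(1-\delta,1-\delta,1-\delta)$ gives $(1-2\delta)\cdot 3(1-\delta)^2\approx 1.39$ at $\delta=\tfrac16$). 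So the claimed bound $(1-2\delta)\sum_l|\alpha_l|\leq g'(\tfrac12)$ is false away from the center, and your recursion for $\E_G[M_k]$ does not contract. The second problem is the boundary case $\delta=\delta_{\mathsf{maj}}$, which the statement includes: there $g'(\tfrac12)=1$ exactly, and a second-order Taylor correction has no favorable sign for an arbitrary gate, so "refining with the quadratic term" cannot be carried out uniformly over processing functions. The paper circumvents both issues by invoking the Hajek--Weller/Evans--Schulman convex-hull lemma (Lemma~\ref{Lemma: TV Distance Contraction in Noisy Formulae}): one tracks the pair $\lambda^Y=(\P(Y\neq 0\,|\,x_0),\P(Y\neq 1\,|\,x_1))$ inside the sets $S(a)$, which contain points far from $(\tfrac12,\tfrac12)$, and shows that \emph{every} $d$-input $\delta$-noisy gate maps $S(a)^d$ into $S(f(a))$ with $f$ the \emph{majority} recursion; convergence $f^{(m)}(0)\uparrow\tfrac12$ then follows from uniqueness of the fixed point for all $\delta\geq\delta_{\mathsf{maj}}$, including the critical value. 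Your tree/birthday-paradox reduction over the top $m(k)$ levels and the use of $R_k=O(d^{2k})$ do match the paper, but note also that the recursion can only be run over those $m$ levels (below level $k-m$ the ancestral subgraph is not a tree), which is why the paper conditions on arbitrary configurations at level $k-m$ and finishes with sub-multiplicativity of the Dobrushin coefficient rather than iterating all the way to the root.
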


Proposition \ref{Prop: Single Vertex Reconstruction} is proved in Appendix \ref{Proof of Proposition Single Vertex Reconstruction}. In particular, part 2 of Proposition \ref{Prop: Single Vertex Reconstruction} demonstrates that when $\delta \geq \delta_{\mathsf{maj}}$, the ML decoder based on a single vertex $X_{k,0}$ (with knowledge of the random DAG realization $G$) cannot reconstruct $X_{0,0}$ in all but a vanishing fraction of DAGs. It is worth mentioning that much like part 2 of Theorem \ref{Thm:Phase Transition in Random Grid with Majority Rule Processing}, part 2 of Proposition \ref{Prop: Single Vertex Reconstruction} also implies that:
\begin{equation}
\liminf_{k \rightarrow \infty}{\left\|P_{X_{k,0}|G}^+ - P_{X_{k,0}|G}^-\right\|_{\mathsf{TV}}} = 0 \quad G\text{-}a.s.
\end{equation}
since applying Fatou's lemma to part 2 of Proposition \ref{Prop: Single Vertex Reconstruction} yields $\E\big[\liminf_{k \rightarrow \infty}{\big\|P_{X_{k,0}|G}^+ - P_{X_{k,0}|G}^-\big\|_{\mathsf{TV}}}\big] = 0$. Thus, if reconstruction is possible in the range $\delta \geq \delta_{\mathsf{maj}}$, the decoder should definitely use more than one vertex. This converse result relies on the aforementioned impossibility results on reliable computation. Specifically, the exact threshold $\delta_{\mathsf{maj}}$ that determines whether or not reliable computation is possible using formulae is known for odd $d \geq 3$, cf. \cite{HajekWeller1991, EvansSchulman2003}. Therefore, we can exploit such results to obtain a converse for odd $d \geq 3$ which holds for all choices of Boolean processing functions and at the critical value $\delta = \delta_{\mathsf{maj}}$ (although only for single vertex decoding). In contrast, when $d \geq 4$ is even, it is not even known whether such a critical threshold exists (as noted in \cite[Section 7]{EvansSchulman2003}), and hence, we cannot easily prove such converse results for even $d \geq 4$.\footnote{Note, however, that if all Boolean processing functions are the majority rule and the conditions of part 2 of Theorem \ref{Thm:Phase Transition in Random Grid with Majority Rule Processing} are satisfied, then part 2 of Theorem \ref{Thm:Phase Transition in Random Grid with Majority Rule Processing} implies (using the data processing inequality for TV distance and the bounded convergence theorem) that single vertex reconstruction is also impossible in the sense presented in part 2 of Proposition \ref{Prop: Single Vertex Reconstruction}.} 

We next present an immediate corollary of Theorem \ref{Thm:Phase Transition in Random Grid with Majority Rule Processing} which states that there exist constant indegree deterministic DAGs with $L_k = \Omega(\log(k))$ (i.e. $L_k \geq C(\delta,d) \log(k)$ for some large constant $C(\delta,d)$ and all sufficiently large $k$) such that reconstruction of the root bit is possible. Note that \textit{deterministic DAGs} refer to Bayesian networks on specific realizations of $G$ in the sequel. We will use the same notation as subsection \ref{Random Grid Model} to analyze deterministic DAGs with the understanding that the randomness is engendered by $X_{0,0}$ and the edge BSCs, but not $G$. Formally, we have the following result which is proved in Appendix \ref{Proof of Corollary Existence of Grids where Reconstruction is Possible}.

\begin{corollary}[Existence of DAGs where Reconstruction is Possible]
\label{Cor: Existence of Grids where Reconstruction is Possible}
For every indegree $d \geq 3$, every noise level $\delta \in (0,\delta_{\mathsf{maj}})$, and every sequence of level sizes satisfying $L_k \geq C(\delta,d)\log(k)$ for all sufficiently large $k$, there exists a deterministic DAG $\mathcal{G}$ with these parameters such that if we use majority rules as our Boolean processing functions, then there exists $\epsilon = \epsilon(\delta,d) > 0$ (that depends on $\delta$ and $d$) such that the probability of error in ML decoding is bounded away from $\frac{1}{2} - \epsilon$:
$$ \forall k \in \N, \enspace \P\!\left(h_{\mathsf{ML}}^k(X_k,\mathcal{G}) \neq X_{0}\right) \leq \frac{1}{2} - \epsilon $$
where $h_{\mathsf{ML}}^k(\cdot,\mathcal{G}) : \{0,1\}^{L_k} \rightarrow \{0,1\}$ denotes the ML decision rule at level $k$ based on the full $k$-layer state $X_k$ (given knowledge of the DAG $\mathcal{G}$).
\end{corollary}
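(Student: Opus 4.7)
The plan is to derive this corollary from part~1 of Theorem~\ref{Thm:Phase Transition in Random Grid with Majority Rule Processing} by a standard probabilistic method: since the random DAG ensemble with the given parameters supports reconstruction on average, at least one realization must support reconstruction deterministically. First I would fix any level-size sequence $\{L_k\}_{k \in \N}$ with $L_k \geq C(\delta,d)\log(k)$ for all sufficiently large $k$, and consider the random DAG $G$ of Section~\ref{Random Grid Model} built with these layer sizes, indegree $d$, crossover probability $\delta$, and majority processing functions. Part~1 of Theorem~\ref{Thm:Phase Transition in Random Grid with Majority Rule Processing} then yields some $\epsilon = \epsilon(\delta,d) > 0$ and some $K \in \N$ such that the majority decoder $\hat{S}_k$ satisfies $\P(\hat{S}_k \neq \sigma_0) \leq \tfrac12 - 2\epsilon$ for all $k \geq K$.

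Next, for every realization $G$ and every $k \in \N$, set $P^*(G,k) \triangleq \P(h_{\mathsf{ML}}^k(X_k,G) \neq X_0 \,|\, G)$. The footnote accompanying \eqref{Eq:TV Reconstruction Possible} records that, for each fixed $G$, the sequence $P^*(G,k)$ is monotone non-decreasing in $k$ and bounded above by $\tfrac12$ (because a level-$(k+1)$ decoder can always first simulate the stochastic transition from layer $k$ to layer $k+1$), so $P^*(G) \triangleq \lim_{k \to \infty} P^*(G,k) = \sup_k P^*(G,k)$ is well-defined. Since $h_{\mathsf{ML}}^k(\cdot,G)$ is Bayes-optimal among all decoders that see both $X_k$ and $G$, and $\hat{S}_k$ is such a decoder, we have the pointwise bound $P^*(G,k) \leq \P(\hat{S}_k \neq \sigma_0 \,|\, G)$. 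Taking expectation over $G$ and invoking the monotone convergence theorem,
\begin{equation*}
\E_G\!\left[P^*(G)\right] \,=\, \lim_{k \to \infty} \E_G\!\left[P^*(G,k)\right] \,\leq\, \limsup_{k \to \infty} \P(\hat{S}_k \neq \sigma_0) \,\leq\, \tfrac12 - 2\epsilon.
\end{equation*}

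Because $\E_G[P^*(G)] \leq \tfrac12 - 2\epsilon$, there must exist at least one realization $\mathcal{G}$ of $G$ for which $P^*(\mathcal{G}) \leq \tfrac12 - 2\epsilon$ (otherwise the expectation would exceed $\tfrac12 - 2\epsilon$). For this deterministic DAG $\mathcal{G}$, monotonicity in $k$ then gives $\P(h_{\mathsf{ML}}^k(X_k,\mathcal{G}) \neq X_0) = P^*(\mathcal{G},k) \leq P^*(\mathcal{G}) \leq \tfrac12 - \epsilon$ simultaneously for every $k \in \N$, which is exactly the conclusion of Corollary~\ref{Cor: Existence of Grids where Reconstruction is Possible}. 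I do not anticipate any serious obstacle in this argument; the only nontrivial point is that a single $\mathcal{G}$ must work uniformly across all levels $k$, and this is delivered precisely by the monotonicity of $P^*(G,k)$ in $k$, which converts Theorem~\ref{Thm:Phase Transition in Random Grid with Majority Rule Processing}'s average asymptotic guarantee into a uniform-in-$k$ deterministic statement on a positive-probability set of DAGs.
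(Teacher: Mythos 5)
Your proposal is correct and follows essentially the same probabilistic-method argument as the paper: both bound the expected ML error by the majority-decoder error from Theorem~\ref{Thm:Phase Transition in Random Grid with Majority Rule Processing} and exploit the monotonicity of $\P\big(h_{\mathsf{ML}}^k(X_k,G) \neq X_0 \,\big|\, G\big)$ in $k$ to extract a single DAG that works uniformly over all levels. The only (immaterial) difference is that you pass to the limit via monotone convergence followed by a first-moment existence argument, whereas the paper applies a Markov-type bound at each level to show $\P(G \in E_k)$ is bounded below and then uses continuity of measure on the nested events $E_k$.
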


Since the critical threshold $\delta_{\mathsf{maj}} \rightarrow \frac{1}{2}$ as $d \rightarrow \infty$, a consequence of Corollary \ref{Cor: Existence of Grids where Reconstruction is Possible} is that for any $\delta \in \big(0,\frac{1}{2}\big)$, any sufficiently large indegree $d$ (that depends on $\delta$), and any sequence of level sizes satisfying $L_k \geq C(\delta,d)\log(k)$ for all sufficiently large $k$, there exists a deterministic DAG $\mathcal{G}$ with these parameters and all majority processing functions such that reconstruction of the root bit is possible in the sense shown above.

Until now, we have restricted ourselves to the $d \geq 3$ case of the random DAG model. Our second main result considers the setting where the indegree of each vertex (except the root) is $d = 2$, because it is not immediately obvious that deterministic DAGs (for which reconstruction is possible) exist for $d = 2$. Indeed, it is not entirely clear which Boolean processing functions are good for ``local error correction'' in this scenario. We choose to fix all Boolean functions at even levels of the random DAG model to be the AND rule, and all Boolean functions at odd levels of the model to be the OR rule. We then prove that this random DAG model also exhibits a phase transition phenomenon around a critical threshold of:
\begin{equation}
\label{Eq: NAND threshold} 
\delta_{\mathsf{andor}} \triangleq \frac{3 - \sqrt{7}}{4} \, . 
\end{equation}
As before, the next theorem illustrates that for $\delta < \delta_{\mathsf{andor}}$, the ``biased'' majority decision rule $\hat{T}_k \triangleq \I\{\sigma_k \geq t\}$, where $t \in (0,1)$ is defined in \eqref{Eq: Middle Fixed Point} in section \ref{Analysis of And-Or Rule Processing in Random Grid}, can asymptotically decode $\sigma_0$, but for $\delta > \delta_{\mathsf{andor}}$, the ML decision rule with knowledge of $G$ cannot asymptotically decode $\sigma_0$. For simplicity, we only analyze this model at even levels in the achievability case.

\begin{theorem}[Phase Transition in Random DAG Model with AND-OR Rule Processing]
\label{Thm:Phase Transition in Random Grid with And-Or Rule Processing}
Let $C(\delta)$ and $D(\delta)$ be the constants defined in \eqref{Eq: Large Constant before Log 2} and \eqref{Eq: Lipschitz constant} in section \ref{Analysis of And-Or Rule Processing in Random Grid}. For a random DAG model with $d = 2$, AND processing functions at even levels, and OR processing functions at odd levels, the following phase transition phenomenon occurs around $\delta_{\mathsf{andor}}$:
\begin{enumerate}
\item If $\delta \in (0,\delta_{\mathsf{andor}})$, and the number of vertices per level satisfies $L_k \geq C(\delta) \log(k)$ for all sufficiently large $k$ (depending on $\delta$), then reconstruction is possible in the sense that:
$$ \limsup_{k \rightarrow \infty}{\P(\hat{T}_{2k} \neq \sigma_0)} < \frac{1}{2} $$
where we use the decoder $\hat{T}_{2k} = \I\{\sigma_{2k} \geq t\}$ at level $2k$, which recovers the root bit by thresholding at the value $t \in (0,1)$ in \eqref{Eq: Middle Fixed Point}.
\item If $\delta \in \big(\delta_{\mathsf{andor}},\frac{1}{2}\big)$, and the number of vertices per level satisfies $L_k = o\big(E(\delta)^{-\frac{k}{2}}\big)$ and $\liminf_{k \rightarrow \infty}{L_k} > \frac{2}{E(\delta) - D(\delta)}$ for any $E(\delta) \in (D(\delta),1)$ (that depends on $\delta$), then reconstruction is impossible in the sense of \eqref{Eq:Strong TV Reconstruction Impossible}:
$$ \lim_{k \rightarrow \infty}{\left\|P_{X_{k}|G}^+ - P_{X_{k}|G}^-\right\|_{\mathsf{TV}}} = 0 \quad G\text{-}a.s. $$
\end{enumerate} 
\end{theorem}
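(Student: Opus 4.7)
The plan is to parallel the analysis of Theorem \ref{Thm:Phase Transition in Random Grid with Majority Rule Processing}, but adapted to the alternating AND/OR structure which forces us to work with a two-step conditional expectation map. Given $\sigma_{k-1}=\sigma$, the sampling-with-replacement construction of subsection \ref{Random Grid Model} makes the children $X_{k,j}$ conditionally i.i.d.\ $\Ber(g_A(\sigma))$ at even $k$ and $\Ber(g_O(\sigma))$ at odd $k$, where $g_A(\sigma)=(\sigma*\delta)^2$ and $g_O(\sigma)=1-(1-\sigma*\delta)^2$. I would therefore study the composition $g\triangleq g_O\circ g_A$, whose fixed-point equation $g(\sigma)=\sigma$ on $[0,1]$ is a degree-four polynomial whose discriminant changes sign exactly at $\delta=\delta_{\mathsf{andor}}=(3-\sqrt{7})/4$. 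For $\delta<\delta_{\mathsf{andor}}$ this yields three fixed points $p_-<t<p_+$ with $p_\pm$ attracting and $t$ repelling (where $t$ is the threshold from \eqref{Eq: Middle Fixed Point}); for $\delta>\delta_{\mathsf{andor}}$ there is a unique attracting fixed point and $g$ is a Lipschitz contraction on $[0,1]$ with constant $D(\delta)<1$.

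For part 1, I would apply Hoeffding's inequality to the conditionally i.i.d.\ $X_{k,j}$'s to bound $|\sigma_k-g_A(\sigma_{k-1})|$ and $|\sigma_k-g_O(\sigma_{k-1})|$ by deviations of order $\sqrt{\log(k)/L_k}$. Choosing $C(\delta)$ large enough, a union bound over all levels $\leq k$ makes the total probability of any such deviation arbitrarily small. Using the attracting nature of $p_+$ and $p_-$ and arguing that starting from $\sigma_0=1$ (resp.\ $\sigma_0=0$) the deterministic iterates $g^{(j)}(\sigma_0)$ enter and remain in the basin of $p_+$ (resp.\ $p_-$), one concludes that $\sigma_{2k}$ ends up on the correct side of $t$ with probability bounded away from $\frac{1}{2}$, yielding $\limsup_{k\to\infty}\P(\hat{T}_{2k}\neq\sigma_0)<\frac{1}{2}$.

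For part 2, the goal is to turn the contraction of $g$ into a contraction of TV distance along the Markov chain conditioned on $G$. I would track the normalized signed first-moment gap $m_k\triangleq\E[\sigma_k\mid X_0=1,G]-\E[\sigma_k\mid X_0=0,G]$ and show $|m_k|$ decays geometrically at a rate tied to $D(\delta)$ per two levels, with additive fluctuations of order $L_k^{-1/2}$ arising from the finite layer sizes and the randomness of $G$. For any $E(\delta)\in(D(\delta),1)$, the hypothesis $\liminf_{k\to\infty}L_k>2/(E(\delta)-D(\delta))$ ensures the recursion remains contractive at the slightly worse rate $E(\delta)$, while $L_k=o(E(\delta)^{-k/2})$ keeps accumulated fluctuations negligible. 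A standard reduction from first-moment gaps to TV (e.g., via Cauchy--Schwarz or the $\chi^2$ inequality, exploiting sufficiency of $\sigma_k$ after marginalizing) then yields $\E\bigl[\bigl\|P_{X_k|G}^+-P_{X_k|G}^-\bigr\|_{\mathsf{TV}}\bigr]\to 0$, and a Borel--Cantelli argument on the exponentially small per-level deviation events upgrades this convergence to $G$-almost surely.

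The main obstacle will be the converse: reconciling the two competing constraints on $L_k$ while carrying the conditioning on the random graph $G$. The delicate point is that the natural contraction argument operates on conditional expectations of $\sigma_k$, whereas the claim is a TV-distance statement on the full state $X_k$; sufficiency of $\sigma_k$ holds only after marginalizing over $G$, so one must carefully separate the graph-level concentration (controlled by $L_k$) from the bit-level contraction (controlled by the BSC noise and the Lipschitz constant $D(\delta)$) and then recombine them to get a $G$-almost-sure statement. Part 1 should be comparatively routine once the fixed-point structure of $g_O\circ g_A$ is pinned down, the principal effort there being the explicit identification of $p_-$, $t$, $p_+$ and the verification that $t$ lies strictly between the two outer fixed points throughout $\delta\in(0,\delta_{\mathsf{andor}})$.
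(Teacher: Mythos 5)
Your skeleton is right (two-step conditional-expectation map, its fixed-point structure around $\delta_{\mathsf{andor}}$, concentration for achievability, Lipschitz contraction for the converse), but the single device that makes both parts of the paper's proof close is missing from your plan: the \emph{monotone Markovian coupling} of the two chains started at $X_0^+=1$ and $X_0^-=0$, run on the same realization of $G$ with shared BSC randomness. In part 1 this matters because the hypothesis only gives $L_k\geq C(\delta)\log(k)$ for \emph{large} $k$; early layers may be tiny, so Hoeffding gives nothing there and with non-negligible probability the chain is not in the right basin at level $K$. Your conclusion requires $\P(\sigma_{2k}^+\geq t)-\P(\sigma_{2k}^-\geq t)$ to be bounded away from $0$, and without monotonicity ($\sigma_k^+\geq\sigma_k^-$ a.s.) you cannot discard the contribution of the bad event where the chains start level $K$ outside their basins; the paper conditions on the joint good event $E$ and uses monotonicity to make the off-$E$ contribution nonnegative, yielding the lower bound $(1-2\tau)\P(E)>0$. (Also, the relevant composition for the even-level decoder is AND-of-OR, i.e.\ $g_A\circ g_O$ in your notation, whose fixed point is the $t$ of \eqref{Eq: Middle Fixed Point}; you wrote $g_O\circ g_A$.)

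In part 2 the coupling is again what converts moment information into TV: conditionally on $G$, $\|P^+_{X_k|G}-P^-_{X_k|G}\|_{\mathsf{TV}}\leq\P(X_k^+\neq X_k^-\,|\,G)\leq L_k\,\E[\sigma_k^+-\sigma_k^-\,|\,G]$, which sidesteps entirely the sufficiency issue you correctly identify (your Cauchy--Schwarz/$\chi^2$ route would have to handle the full state $X_k$ given $G$, for which $\sigma_k$ is not sufficient). Two further details in your sketch would fail as stated. First, the finite-layer correction to the contraction is not an additive $O(L_k^{-1/2})$ fluctuation: because $g_0$ is quadratic, $\E[g_0(\sigma_{2k-1})\,|\,\sigma_{2k-2}]$ picks up a variance term $(1-2\delta)^2 g_1(\sigma)(1-g_1(\sigma))/L_{2k-1}$, which perturbs the contraction \emph{factor} to $D(\delta)+2/L_{2k-1}$; this is exactly what the hypothesis $\liminf_k L_k>2/(E(\delta)-D(\delta))$ is calibrated to absorb, whereas an additive $L_k^{-1/2}$ error would accumulate to a constant and be destroyed by the final multiplication by $L_{2k}$. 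Second, the upgrade to the $G$-a.s.\ statement should not go through Borel--Cantelli: the hypotheses give $L_{2k}E(\delta)^{k}\to 0$ but not summability of these bounds. The paper instead uses that $\|P^+_{X_k|G}-P^-_{X_k|G}\|_{\mathsf{TV}}$ is non-increasing in $k$ for every realization of $G$ (data processing), so the pointwise limit exists, has zero mean by bounded convergence, and is therefore zero almost surely.
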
 

Theorem \ref{Thm:Phase Transition in Random Grid with And-Or Rule Processing} is proved in section \ref{Analysis of And-Or Rule Processing in Random Grid}, and many of the remarks pertaining to Theorem \ref{Thm:Phase Transition in Random Grid with Majority Rule Processing} as well as the general intuition for Theorem \ref{Thm:Phase Transition in Random Grid with Majority Rule Processing} also hold for Theorem \ref{Thm:Phase Transition in Random Grid with And-Or Rule Processing}. Furthermore, a proposition analogous to part 1 of Proposition \ref{Prop: Single Vertex Reconstruction} and a corollary analogous to Corollary \ref{Cor: Existence of Grids where Reconstruction is Possible} also hold here (but we omit explicit statements of these results for brevity).

It is straightforward to verify that the random DAG in Theorem \ref{Thm:Phase Transition in Random Grid with And-Or Rule Processing} with alternating layers of AND and OR processing functions is equivalent to a random DAG with all NAND processing functions for the purposes of broadcasting.\footnote{Indeed, we can introduce pairs of NOT gates into every edge of our DAG that goes from an AND gate to an OR gate without affecting the statistics of the model. Since an AND gate followed by a NOT gate and an OR gate whose inputs pass through NOT gates are both NAND gates, we obtain an equivalent model where all processing functions are NAND gates. We remark that analyzing this random DAG model with NAND processing functions yields a version of Theorem \ref{Thm:Phase Transition in Random Grid with And-Or Rule Processing} with the same essential characteristics (albeit with possibly weaker conditions on $L_k$).} Recall that in the discussion following Theorem \ref{Thm:Phase Transition in Random Grid with Majority Rule Processing}, we noted how the critical threshold $\delta_{\mathsf{maj}}$ was already known in the reliable computation literature (because it characterized when reliable computation is possible), cf. \cite{EvansSchulman2003}. It turns out that $\delta_{\mathsf{andor}}$ has also appeared in the reliable computation literature in a similar vein. In particular, although the existence of critical thresholds on $\delta$ for reliable computation using formulae of $\delta$-noisy gates is not known for any even $d \geq 4$, the special case of $d = 2$ has been resolved. Indeed, Evans and Pippenger showed in \cite{EvansPippenger1998} that reliable computation using formulae consisting of $\delta$-noisy NAND gates is possible when $\delta < \delta_{\mathsf{andor}}$, and impossible when $\delta > \delta_{\mathsf{andor}}$ for ``soft inputs.'' Moreover, Unger established in \cite{Unger2007,Unger2008} that reliable computation using formulae with general $2$-input $\delta$-noisy gates is impossible when $\delta \geq \delta_{\mathsf{andor}}$. 

\subsection{Explicit Construction of Deterministic DAGs where Broadcasting is Possible}
\label{Explicit Construction of DAGs where Broadcasting is Possible}

Although Corollary \ref{Cor: Existence of Grids where Reconstruction is Possible} illustrates the existence of DAGs where broadcasting (i.e. reconstruction of the root bit) is possible, it does not elucidate the structure of such DAGs. Moreover, Theorem \ref{Thm:Phase Transition in Random Grid with Majority Rule Processing} suggests that reconstruction on such deterministic DAGs should be possible using the algorithmically simple majority decision rule, but Corollary \ref{Cor: Existence of Grids where Reconstruction is Possible} is proved for the typically more complex ML decision rule. In this subsection, we address these deficiencies of Corollary \ref{Cor: Existence of Grids where Reconstruction is Possible} by presenting an explicit construction of deterministic bounded degree DAGs such that $L_k = \Theta(\log(k))$ and reconstruction of the root bit is possible using the majority decision rule. 

Our construction is based on \textit{regular bipartite lossless expander graphs}. Historically, the notion of an expander graph goes back to the work of Kolmogorov and Barzdin in \cite{KolmogorovBarzdin1967}. Soon afterwards, Pinsker independently discovered such graphs and coined the term ``expander graph'' in \cite{Pinsker1973}.\footnote{In fact, expander graphs are called ``expanding'' graphs in \cite{Pinsker1973}.} Both \cite{KolmogorovBarzdin1967} and \cite[Lemma 1]{Pinsker1973} prove the existence of expander graphs using probabilistic techniques. On the other hand, the first explicit construction of expander graphs appeared in \cite{Margulis1973}, and more recently, lossless expander graphs were constructed using simpler ideas in \cite{Capalboetal2002}. We next define a pertinent variant of lossless expander graphs. 

Consider a \textit{$d$-regular bipartite graph} $B = (U,V,E)$, where $U$ and $V$ are two disjoint sets of vertices such that $|U| = |V| = n \in \N\backslash\!\{0\}$, every vertex in $U \cup V$ has degree $d \in \N\backslash\!\{0\}$, and $E$ is the set of undirected edges between $U$ and $V$. Note that we allow multiple edges to exist between two vertices in $B$. For any subset of vertices $S \subseteq U$, we define the \textit{neighborhood} of $S$ as:
\begin{equation}
\Gamma(S) \triangleq \left\{v \in V : \exists u \in S, \, \{u,v\} \in E \right\}
\end{equation}
which is the set of all vertices in $V$ that are adjacent to some vertex in $S$. For any fraction $\alpha \in (0,1)$ and any expansion factor $\beta > 0$, $B$ is called an $(\alpha,\beta)$-\textit{expander graph} if for every subset of vertices $S \subseteq U$, we have:
\begin{equation}
\label{Eq: Expansion Property}
|S| \leq \alpha n \enspace \Rightarrow \enspace |\Gamma(S)| \geq \beta |S| \, . 
\end{equation}
Note that we only require subsets of vertices in $U$ to expand (not $V$). Intuitively, such expander graphs are sparse due to the $d$-regularity constraint, but have high connectivity due to the expansion property \eqref{Eq: Expansion Property}. Furthermore, when $\alpha \leq \frac{1}{d}$, the best expansion factor one can hope for is $\beta$ as close as possible to $d$. Hence, $(\alpha,(1 - \epsilon) d)$-expander graphs with $\alpha \leq \frac{1}{d}$ and very small $\epsilon > 0$ are known as \textit{lossless} expander graphs \cite[Section 1.1]{Capalboetal2002}. 

We utilize a slightly relaxed version of lossless expander graphs in our construction. In particular, using existing results from the literature, we establish in Corollary \ref{Cor: Lossless Expander Graph} of section \ref{Deterministic Quasi-Polynomial Time and Randomized Polylogarithmic Time Constructions of DAGs where Broadcasting is Possible} that for large values of the degree $d$ and any sufficiently large $n$ (depending on $d$), there exists a $d$-regular bipartite graph $B = (U,V,E)$ with $|U| = |V| = n$ such that for every subset of vertices $S \subseteq U$, we have:\footnote{We do not explicitly impose the constraint that $\epsilon = 2/d^{1/5} < 1$ because the constraint \eqref{Eq: Relation between d and delta} in Theorem \ref{Thm: Reconstruction in Expander DAGs} implicitly ensures this.}
\begin{equation}
\label{Eq: Expansion Property 2}
|S| = \frac{n}{d^{6/5}} \enspace \Rightarrow \enspace |\Gamma(S)| \geq (1-\epsilon) \frac{n}{d^{1/5}} \text{ with } \epsilon = \frac{2}{d^{1/5}} \, .
\end{equation}
Unlike \eqref{Eq: Expansion Property}, the expansion in \eqref{Eq: Expansion Property 2} only holds for subsets $S \subseteq U$ with cardinality exactly $|S| = n d^{-6/5}$. However, we can still (loosely) perceive the graph $B$ as a $d$-regular bipartite lossless $(\alpha,\beta)$-expander graph with $\alpha = d^{-6/5}$ and $\beta = (1 - \epsilon)d$. (Strictly speaking, $n d^{-6/5}$ must be an integer, but we neglect this detail throughout our exposition for simplicity.) In the remainder of our discussion, we refer to graphs like $B$ that satisfy \eqref{Eq: Expansion Property 2} as \textit{$d$-regular bipartite lossless $(d^{-6/5},d - 2 d^{4/5})$-expander graphs} with abuse of standard nomenclature. 

A $d$-regular bipartite lossless $(d^{-6/5},d - 2 d^{4/5})$-expander graph $B$ can be construed as representing two consecutive levels of a deterministic DAG upon which we are broadcasting. Indeed, we can make every edge in $E$ directed by making them point from $U$ to $V$, where $U$ represents a particular level in the DAG and $V$ the next level. In fact, we can construct deterministic DAGs where broadcasting is possible by concatenating several such $d$-regular bipartite lossless expander graphs together. The ensuing theorem details our expander-based DAG construction, and illustrates that reconstruction of the root bit is possible when we use majority Boolean processing functions and the majority decision rule $\hat{S}_k = \I\big\{\sigma_k \geq \frac{1}{2}\big\}$, where $\sigma_k$ is defined in \eqref{Eq: Empirical Probability of Unity Definition}.

\begin{theorem}[Reconstruction in Expander DAGs]
\label{Thm: Reconstruction in Expander DAGs}
Fix any noise level $\delta \in \big(0,\frac{1}{2}\big)$, any sufficiently large odd degree $d = d(\delta) \geq 5$ (that depends on $\delta$) satisfying:
\begin{equation}
\label{Eq: Relation between d and delta}
\frac{8}{d^{1/5}} + d^{6/5} \exp\!\left(-\frac{(1 - 2\delta)^2 (d - 4)^2}{8 d}\right) \leq \frac{1}{2} \, ,
\end{equation}
and any sufficiently large constant $N = N(\delta) \in \N$ (that depends on $\delta$) such that the constant $M \triangleq \exp(N/(4 d^{12/5})) \geq 2$ and for every $n \geq N$, there exists a $d$-regular bipartite lossless $(d^{-6/5},d - 2 d^{4/5})$-expander graph $B_n = (U_n,V_n,E_n)$ with $|U_n| = |V_n| = n$ that satisfies \eqref{Eq: Expansion Property 2} for every subset $S \subseteq U_n$. Consider an infinite deterministic DAG with indegrees bounded by $d$, outdegrees bounded by $2d$, sequence of level sizes $\{L_k : k \in \N\}$ given by:
\begin{subequations}
\label{Eq: Expander Level Sizes}
\begin{align}
L_0 & = 1 \\
\forall k \in \left\{1,\dots,\floor{M}\right\}, \enspace L_k & = N \\
\forall m \in \N\backslash\!\{0\}, \, \forall k \in \N \enspace \text{such that} \quad \quad \enspace & \nonumber \\
M^{2^{m-1}} < k \leq M^{2^{m}}, \enspace L_k & = 2^m N  
\end{align}
\end{subequations}
where $L_k = \Theta(\log(k))$, and the following edge configuration:
\begin{enumerate}
\item[E-1)] Every vertex in $X_1$ has one directed edge coming from $X_{0,0}$.
\item[E-2)] For every pair of consecutive levels $k$ and $k+1$ such that $L_{k + 1} = L_k$, the directed edges from $X_k$ to $X_{k+1}$ are given by the edges of $B_{L_k}$, where we identify the vertices in $U_{L_k}$ with $X_k$ and the vertices in $V_{L_k}$ with $X_{k+1}$, respectively.
\item[E-3)] For every pair of consecutive levels $k$ and $k+1$ such that $L_{k+1} = 2 L_k$, we partition the vertices in $X_{k+1}$ into two sets, $X_{k+1}^{1} = (X_{k+1,0},\dots,X_{k+1,L_k - 1})$ and $X_{k+1}^{2} = (X_{k+1,L_k},\dots,X_{k+1,L_{k+1} - 1})$, so that the directed edges from $X_k$ to $X_{k+1}^{i}$ are given by the edges of $B_{L_k}$ for $i = 1,2$, where we identify the vertices in $U_{L_k}$ with $X_k$ and the vertices in $V_{L_k}$ with $X_{k+1}^{i}$, respectively, as before.
\end{enumerate}
For the Bayesian network defined on this fixed DAG (similar to subsection \ref{Random Grid Model}) with $X_{0,0} \sim \Ber\big(\frac{1}{2}\big)$, independent $\mathsf{BSC}(\delta)$ edges, all identity Boolean processing functions in level $k = 1$, and all majority rule Boolean processing functions in levels $k \geq 2$, reconstruction is possible in the sense that:
$$ \limsup_{k \rightarrow \infty}{\P\!\left(\hat{S}_{k} \neq X_{0}\right)} < \frac{1}{2} $$
where we use the majority decoder $\hat{S}_k = \I\big\{\sigma_k \geq \frac{1}{2}\big\}$ at level $k$.
\end{theorem}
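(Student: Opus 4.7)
The plan is to track the erroneous-vertex count $|E_k| \triangleq |\{j : X_{k,j} \neq X_{0,0}\}|$ and show that on an event of probability strictly greater than $1/2$, the error fraction $\alpha_k \triangleq |E_k|/L_k$ stays below $1/2$ for every $k$, so that the majority decoder $\hat{S}_k = \I\{\sigma_k \geq 1/2\}$ recovers $X_{0,0}$. The key structural tool is a \emph{one-step contraction lemma}: conditional on $X_k$ satisfying $|E_k| \leq L_k/d^{6/5}$ and on the edges between level $k$ and one copy of level $k+1$ being a realization of $B_{L_k}$, one has $|E_{k+1}| \leq L_{k+1}/d^{6/5}$ with conditional probability at least $1 - \exp(-L_{k+1}/(2 d^{12/5}))$.

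For the contraction lemma, I would partition the level-$(k+1)$ vertices by the count $t_j$ of their parents lying in $E_k$. The edge-collision bound $d|E_k| - |\Gamma(E_k)| \leq \epsilon d |E_k|$ furnished by \eqref{Eq: Expansion Property 2} caps the number of ``bad-parent'' vertices (those with $t_j > d/4$) at $8|E_k|/d^{1/5}$. For each remaining ``good-parent'' vertex, Hoeffding applied to the $d$ independent incoming BSC samples gives per-vertex error probability at most $\exp(-(1-2\delta)^2(d-4)^2/(8d))$, and since distinct level-$(k+1)$ vertices receive disjoint fresh BSCs these error indicators are conditionally independent Bernoullis. A second Chernoff bound aggregates them, and the closing arithmetic $|E_{k+1}|/L_{k+1} \leq (8/d^{1/5} + d^{6/5} \exp(-(1-2\delta)^2(d-4)^2/(8d)))/d^{6/5} + 1/(2 d^{6/5}) \leq 1/d^{6/5}$ invokes \eqref{Eq: Relation between d and delta}. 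The doubling case of \eqref{Eq: Expander Level Sizes} and item (E-3) is handled by applying the lemma separately to each of the two expander copies and adding the resulting error counts, using $L_{k+1} = 2 L_k$.

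It then remains to establish the base case and to perform the union bound. The union bound is routine: the schedule \eqref{Eq: Expander Level Sizes} together with $M = \exp(N/(4 d^{12/5}))$ turns $\sum_k \exp(-L_k/(2 d^{12/5}))$ into a $\floor{M}$-term block contributing $M \cdot M^{-2} = 1/M$ and a doubly-exponential tail $\sum_{m \geq 1} M^{-2^m} \leq 2/M$, for a total of $O(1/M)$. The main obstacle is the base case: since $|E_1| \sim \mathrm{Bin}(N,\delta)$ yields $\alpha_1 \approx \delta \gg 1/d^{6/5}$, the contraction lemma cannot be applied at $k = 1$. I would handle the transition from level $1$ to level $2$ by conditioning on $X_1$, after which the level-$2$ error indicators become conditionally independent Bernoullis through the fresh BSCs on the $B_N$ edges, and bound $\P(X_{2,j} \neq X_{0,0} \mid X_1)$ via Hoeffding in terms of $t_j$. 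The typical contribution, at $t_j \approx d \delta$, is of order $\exp(-d(1-2\delta)^4/2)$, easily dominating $1/d^{6/5}$ for $d$ large. The delicate piece is controlling anomalous vertices with $t_j$ approaching $d/2$; this requires invoking additional balancedness of the specific expander construction beyond the bare statement \eqref{Eq: Expansion Property 2}, which is where the hardest work of the proof would go. Once $|E_2| \leq N/d^{6/5}$ is established with probability $\geq 1 - \exp(-cN)$, the contraction lemma propagates the invariant to every $k \geq 2$ with total failure probability $O(1/M)$, and since $1/d^{6/5} < 1/2$ the majority decoder succeeds, yielding $\limsup_k \P(\hat{S}_k \neq X_{0,0}) \leq O(1/M) < 1/2$ for $N$ (hence $M$) sufficiently large.
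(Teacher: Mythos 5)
Your one-step contraction lemma is essentially the paper's Lemma \ref{Lemma: One-Step Broadcasting}, proved the same way (vertex-counting versus edge-counting against \eqref{Eq: Expansion Property 2} to cap the number of vertices with more than $d/4$ erroneous parents at $8|S|/d^{1/5}$, Hoeffding per remaining vertex, a second Chernoff bound to aggregate, and \eqref{Eq: Relation between d and delta} to close the arithmetic), and your handling of the doubling levels and the summability of $\sum_k \exp(-L_k/(2d^{12/5}))$ under the schedule \eqref{Eq: Expander Level Sizes} is also correct. The genuine gap is exactly the one you flag: your architecture requires the chain to \emph{enter} the regime $\{|E_k| \leq L_k/d^{6/5}\}$ with probability close to $1$, and this cannot be extracted from the stated hypotheses. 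The expansion property \eqref{Eq: Expansion Property 2} is only assumed for sets of size exactly $n/d^{6/5}$, whereas $|E_1| \approx \delta N$ is far larger; $d$-regularity of $V_n$ alone only bounds the number of level-$2$ vertices with at least $d/4$ erroneous parents by $4\delta N$, which for moderate $\delta$ is nowhere near the target $N/d^{6/5}$, and nothing in the theorem's hypotheses supplies the ``additional balancedness'' you would need. So the base case is not a hard step you have deferred; it is unprovable from the given assumptions, and the argument as structured does not go through.

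The paper avoids this entirely by asking for much less. It runs the monotone Markovian coupling of the two chains started from $X_0^+ = 1$ and $X_0^- = 0$ (as in the proof of Theorem \ref{Thm:Phase Transition in Random Grid with Majority Rule Processing}), fixes a level $K$ with $\sum_{k>K} 2\exp(-L_{k-1}/(2d^{12/5})) \leq \tau$, and conditions on the event $E = \{\sigma_K^+ \geq 1 - d^{-6/5}, \ \sigma_K^- \leq d^{-6/5}\}$. This event only needs to have \emph{strictly positive} probability, which is automatic (e.g., all BSCs up to level $K$ act as identities with positive probability), not probability close to $1$. Conditioned on $E$, Lemma \ref{Lemma: One-Step Broadcasting} keeps both chains in their respective good regimes forever with probability $\geq 1-\tau$ each, and the monotonicity $\I\{\sigma_k^+ \geq \tfrac12\} \geq \I\{\sigma_k^- \geq \tfrac12\}$ a.s. lets one drop to the event $E$ without sign issues, yielding $\P(\sigma_k^+ \geq \tfrac12) - \P(\sigma_k^- \geq \tfrac12) \geq (1-2\tau)\P(E) > 0$ uniformly in $k > K$, which is exactly $\limsup_k \P(\hat{S}_k \neq X_0) < \tfrac12$. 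Your approach aims for the stronger quantitative bound $O(1/M)$ on the error probability, but the theorem does not require it, and chasing it is what forces the unobtainable base case. Replacing your ``high-probability entry'' step with the coupling-plus-positive-probability-entry device repairs the proof; the rest of your argument then carries over.
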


Theorem \ref{Thm: Reconstruction in Expander DAGs} is proved in section \ref{Deterministic Quasi-Polynomial Time and Randomized Polylogarithmic Time Constructions of DAGs where Broadcasting is Possible}. This proof of feasibility of reconstruction follows the same overarching strategy as the proof of Theorem \ref{Thm:Phase Transition in Random Grid with Majority Rule Processing}, but obviously makes essential use of the expansion property \eqref{Eq: Expansion Property 2}. We emphasize that although we reuse the majority decoder notation $\hat{S}_k$ from the random DAG setting, the reconstruction statement in Theorem \ref{Thm: Reconstruction in Expander DAGs} pertains to a fixed deterministic DAG, i.e. there is no averaging over a random DAG in the above probability of error for majority decoding. 

We next present a proposition that describes the computational complexity of our expander-based DAG construction in Theorem \ref{Thm: Reconstruction in Expander DAGs} for which broadcasting is possible.
 
\begin{proposition}[Computational Complexity of DAG Construction]
\label{Prop: DAG Construction using Expander Graphs}
For any fixed noise level $\delta \in \big(0,\frac{1}{2}\big)$, consider the infinite deterministic DAG from Theorem \ref{Thm: Reconstruction in Expander DAGs} with sufficiently large odd indegree $d = d(\delta) \geq 5$ satisfying \eqref{Eq: Relation between d and delta}, level sizes given by \eqref{Eq: Expander Level Sizes} for sufficiently large $N = N(\delta) \in \N$ and $M = \exp(N/(4 d^{12/5})) \geq 2$, and edge configuration given by E-1, E-2, and E-3. Then, we can construct the constituent $d$-regular bipartite lossless $(d^{-6/5},d - 2 d^{4/5})$-expander graphs for levels $0,\dots,r$ of this deterministic DAG in:
\begin{enumerate}
\item either deterministic quasi-polynomial time:
$$ O(\exp(\Theta(\log(r) \log\log(r)))) $$
\item or randomized polylogarithmic time:
$$ O(\log(r) \log\log(r)) $$ 
with strictly positive success probability \eqref{Eq: Success Probability of Monte Carlo Algorithm}, if $N$ additionally satisfies \eqref{Eq: Additional Assumption}.
\end{enumerate}
\end{proposition}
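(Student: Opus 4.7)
My plan is to first observe that the DAG construction for levels $0, \ldots, r$ uses only $O(\log \log r)$ distinct constituent expander graphs, each of size $O(\log r)$, and then to give one brute-force deterministic construction and one Monte Carlo randomized construction tailored to this small regime.

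First I would count: by the definition of $L_k$ in \eqref{Eq: Expander Level Sizes}, the sizes appearing among levels $0, \ldots, r$ are $N, 2N, 4N, \ldots, 2^{m_r} N$, where $m_r$ is the least integer with $M^{2^{m_r}} \geq r$. Solving this yields $m_r = \lceil \log_2(\log r / \log M) \rceil = O(\log \log r)$, and the largest constituent expander has order $n_{\max} = 2^{m_r} N = O(\log r)$. The same expander $B_n$ is reused across all levels of equal size by E-2 and duplicated by E-3, so only $m_r + 1$ distinct expanders need to be constructed in total.

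For the deterministic algorithm I would enumerate: for each $n \in \{N, 2N, \ldots, 2^{m_r} N\}$, sweep through all $d$-regular bipartite graphs on $|U| = |V| = n$ (at most $n^{dn}$ many), and for each candidate verify \eqref{Eq: Expansion Property 2} by iterating over the subsets $S \subseteq U$ with $|S| = n/d^{6/5}$ and computing $|\Gamma(S)|$ from the adjacency list. Since Corollary \ref{Cor: Lossless Expander Graph} guarantees that such a graph exists for every $n \geq N$, the enumeration always terminates. The cost at size $n$ is $\exp(O(n \log n)) = \exp(O(\log r \log \log r))$, and summing over $O(\log \log r)$ sizes preserves this quasi-polynomial bound. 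For the randomized algorithm I would simply sample: for each $n$, independently draw $d$ parents with replacement from $[n]$ for each of the $n$ vertices in $U$ (which is exactly the random $d$-regular bipartite graph model whose expansion is analyzed in Corollary \ref{Cor: Lossless Expander Graph}). The work is $O(dn)$ edges per size, hence $\sum_{i = 0}^{m_r} O(d \cdot 2^i N) = O(d \log r)$ edge samples in total; each sample requires $O(\log n) = O(\log \log r)$ bits of randomness and bookkeeping, for an overall runtime of $O(\log r \log \log r)$.

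The main obstacle is bounding the overall Monte Carlo success probability. If $q(n)$ denotes the probability that the random $d$-regular bipartite graph on $n + n$ vertices fails \eqref{Eq: Expansion Property 2}, the union-bound argument underlying Corollary \ref{Cor: Lossless Expander Graph} shows that $q(n)$ is of the form $\exp(-\Omega(n)) \cdot \mathrm{poly}(n)$ once $n \geq N$ with $N$ sufficiently large; this is exactly what the extra hypothesis \eqref{Eq: Additional Assumption} is calibrated to enforce. Since the sizes $\{2^i N\}_{i \geq 0}$ grow geometrically, the tail $\sum_{i = 0}^{m_r} q(2^i N)$ is dominated by a geometric series in $\exp(-\Omega(N))$ and is strictly less than $1$. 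Hence the probability that all $m_r + 1$ independently sampled graphs simultaneously satisfy \eqref{Eq: Expansion Property 2} is bounded below by $1 - \sum_i q(2^i N) > 0$ uniformly in $r$, which yields the strictly positive success probability \eqref{Eq: Success Probability of Monte Carlo Algorithm} advertised in the statement.
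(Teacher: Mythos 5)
Your Part 1 (brute-force enumeration over all $d$-regular bipartite graphs of each of the $O(\log\log r)$ distinct sizes, with exhaustive verification of \eqref{Eq: Expansion Property 2}) is essentially the paper's deterministic argument and is fine. The gap is in your randomized algorithm: you sample each constituent graph by letting every vertex independently draw $d$ parents with replacement from the opposite side, and you assert that this ``is exactly the random $d$-regular bipartite graph model whose expansion is analyzed in Corollary \ref{Cor: Lossless Expander Graph}.'' It is not. That corollary (via Proposition \ref{Prop: Random Expander Graph}, i.e.\ the Sipser--Spielman/Pinsker bound) is stated for the \emph{configuration model}: a uniformly random perfect matching on $dn+dn$ half-edges, contracted into super-vertices, which yields a graph that is $d$-regular on \emph{both} sides. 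Your sampling gives a graph that is $d$-regular only on one side; the other side has random (binomial) degrees. Consequently (i) the failure-probability bound you invoke does not apply to your distribution, so \eqref{Eq: Success Probability of Monte Carlo Algorithm} is not established; (ii) the output is not a ``$d$-regular bipartite lossless expander'' in the paper's sense at all, so the bounded-outdegree hypothesis of Theorem \ref{Thm: Reconstruction in Expander DAGs} and the edge-counting step $t|T| + |\Gamma(S)\backslash T| \le d|S|$ in Lemma \ref{Lemma: One-Step Broadcasting} (which uses that each vertex of $S\subseteq U$ has exactly $d$ outgoing edges) would both fail downstream. The repair is to sample via the configuration model, e.g.\ a Fisher--Yates shuffle of $(1,\dots,dn)$, which still costs $O(dn)$ per graph and preserves your $O(\log(r)\log\log(r))$ runtime.

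A secondary quantitative error: you claim the failure probability $q(n)$ of a single sampled graph is $\exp(-\Omega(n))\cdot\mathrm{poly}(n)$. Corollary \ref{Cor: Lossless Expander Graph} only gives $q(n) = O(1/\sqrt{n})$ (the $\binom{n}{\alpha n}$ union bound eats the exponential term, leaving a $\Theta(n^{-1/2})$ Stirling correction). Your conclusion survives anyway, because the sizes $2^iN$ grow geometrically, so $\sum_i q(2^iN) = O(N^{-1/2})\sum_i 2^{-i/2} < 1$ once $N$ satisfies \eqref{Eq: Additional Assumption} --- which is exactly how the paper obtains \eqref{Eq: Success Probability of Monte Carlo Algorithm} --- but the geometric series is in $2^{-1/2}$ with prefactor $O(N^{-1/2})$, not in $\exp(-\Omega(N))$.
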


Proposition \ref{Prop: DAG Construction using Expander Graphs} is also proved in section \ref{Deterministic Quasi-Polynomial Time and Randomized Polylogarithmic Time Constructions of DAGs where Broadcasting is Possible}. We remark that at its heart, Proposition \ref{Prop: DAG Construction using Expander Graphs} is not concerned with the noise level $\delta$, and holds verbatim for the infinite deterministic DAGs described in Theorem \ref{Thm: Reconstruction in Expander DAGs} with large values of degree $d$ and sufficiently large $N$ (depending on $d$). However, the above statement of Proposition \ref{Prop: DAG Construction using Expander Graphs}, together with Theorem \ref{Thm: Reconstruction in Expander DAGs}, transparently portrays that for every $\delta \in \big(0,\frac{1}{2}\big)$, a deterministic DAG with sufficiently large indegree and $L_k = \Theta(\log(k))$ that admits reconstruction can be \textit{efficiently} constructed. Specifically, Proposition \ref{Prop: DAG Construction using Expander Graphs} conveys that the constituent expander graphs of such deterministic DAGs can be constructed either in quasi-polynomial time or in randomized polylogarithmic time in the number of levels. Theorem \ref{Thm: Reconstruction in Expander DAGs} conveys that once such a deterministic DAG is constructed, reconstruction of the root bit is guaranteed to succeed using the majority decoder. Finally, we note that the question of finding a deterministic polynomial time algorithm to construct DAGs where reconstruction is possible remains open.

\subsection{Further Discussion and Impossibility Results}
\label{Further Discussion}

In this subsection, we present and discuss some impossibility results pertaining to both deterministic and random DAGs. The first result illustrates that if $L_k \leq \log(k)/(d \log(1/(2\delta)))$ for every sufficiently large $k$ (i.e. $L_k$ grows very ``slowly''), then reconstruction is impossible regardless of the choices of Boolean processing functions and the choice of decision rule.

\begin{proposition}[Slow Growth of Layers]
\label{Prop: Slow Growth of Layers}
For any noise level $\delta \in \big(0,\frac{1}{2}\big)$ and indegree $d \in \N\backslash\!\{0\}$, if the number of vertices per level satisfies $L_k \leq \log(k)/(d \log(1/(2\delta)))$ for all sufficiently large $k$, then for all choices of Boolean processing functions (which may vary between vertices and be graph dependent), reconstruction is impossible in the sense that:
\begin{enumerate}
\item for a deterministic DAG:
$$ \lim_{k \rightarrow \infty}{\left\|P_{X_k}^+ - P_{X_k}^-\right\|_{\mathsf{TV}}} = 0 $$
where $P_{X_k}^+$ and $P_{X_k}^-$ denote the conditional distributions of $X_k$ given $X_0 = 1$ and $X_0 = 0$, respectively.
\item for a random DAG:
$$ \lim_{k \rightarrow \infty}{\left\|P_{X_{k}|G}^+ - P_{X_{k}|G}^-\right\|_{\mathsf{TV}}} = 0 \quad \text{pointwise} $$
which means that the condition holds for every realization of the random DAG $G$.
\end{enumerate}
\end{proposition}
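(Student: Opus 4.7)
My plan is to prove both parts simultaneously using a coupling argument that hinges on an alternative representation of the BSC. I would first rewrite every $\mathsf{BSC}(\delta)$ edge as the equivalent erasure-style channel that, independently of its input, (i) with probability $2\delta$ outputs a freshly drawn $\Ber(1/2)$ bit, and (ii) with probability $1-2\delta$ copies its input; checking marginals shows this matches $\mathsf{BSC}(\delta)$ exactly. In this representation, each edge $e$ carries independent variables $B_e \sim \Ber(2\delta)$ (a ``decoupling'' indicator) and $U_e \sim \mathrm{Uniform}\{0,1\}$, and when $B_e = 1$ the edge ignores its parent entirely and emits $U_e$. This is the step that surfaces the factor $2\delta$ appearing in the hypothesis.

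For each layer $j \in \{1,\dots,k\}$ I would define the event $E_j = \{B_e = 1 \text{ for every one of the } d L_j \text{ edges entering layer } j\}$. Since distinct layers involve disjoint sets of edges, the events $E_1,\dots,E_k$ are mutually independent with $\P(E_j) = (2\delta)^{d L_j}$. The crucial structural claim is that on $E_j$, each vertex of $X_j$ is a Boolean image of $d$ fresh uniform bits that are independent of $X_0$ and of everything in layers $0,\dots,j-1$, so $X_j$ is independent of $X_0$ given $E_j$; the Markov chain $X_0 \to X_j \to X_k$ then promotes this to $X_k \indep X_0 \mid E_j$. To combine these conditional independences I would partition $E \triangleq \bigcup_{j=1}^{k} E_j$ into disjoint pieces $A_j = E_j \cap \bigcap_{j' < j} E_{j'}^c$ indexed by the smallest $j$ for which $E_j$ occurs. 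Because this partition is measurable with respect to the edge noise alone (and therefore independent of $X_0$), a short calculation upgrades the family $\{X_k \indep X_0 \mid A_j\}_j$ to $X_k \indep X_0 \mid E$. Substituting into the two-term decomposition $P_{X_k|X_0 = b} = \P(E)\, P_{X_k|E,X_0=b} + \P(E^c)\, P_{X_k|E^c,X_0=b}$ then yields
$$ \left\|P_{X_k}^{+} - P_{X_k}^{-}\right\|_{\mathsf{TV}} \;\le\; \P(E^c) \;=\; \prod_{j=1}^{k} \!\left(1 - (2\delta)^{d L_j}\right). $$

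It remains to apply the hypothesis $L_j \le \log(j)/(d \log(1/(2\delta)))$ (valid for $j \ge j_0$), which is precisely equivalent to $(2\delta)^{d L_j} \ge 1/j$. Each factor in the product with $j \ge j_0$ is therefore at most $(j-1)/j$, and telescoping gives $\prod_{j=j_0}^{k}(j-1)/j = (j_0-1)/k \to 0$, which proves part~(1). For part~(2), the pairs $(B_e, U_e)$ are functions of the edge noise only and are therefore independent of the random graph $G$; conditioning on any realization of $G$ leaves both $\P(E_j \mid G) = (2\delta)^{d L_j}$ and the independence of the $E_j$'s unchanged, so the identical product upper-bounds $\|P_{X_k|G}^{+} - P_{X_k|G}^{-}\|_{\mathsf{TV}}$ pointwise in $G$. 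The one genuinely delicate point I anticipate is the passage from $\{X_k \indep X_0 \mid E_j\}_j$ to $X_k \indep X_0 \mid E$\textemdash this is not automatic from general conditional-independence principles, and is exactly where the partition into $A_j$'s (together with its noise-measurability) is essential. Everything else is either an application of the decoupling representation or elementary telescoping.
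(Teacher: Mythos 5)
Your proposal is correct and follows essentially the same route as the paper's proof in Appendix~C: the same decoupling representation of the $\mathsf{BSC}(\delta)$, the same mutually independent layer-decoupling events with probability $(2\delta)^{dL_j}$, and the same observation that the hypothesis on $L_j$ is equivalent to $(2\delta)^{dL_j}\ge 1/j$. The only (minor) difference is that the paper concludes via the second Borel--Cantelli lemma and continuity of measure, whereas you bound $\P(E^c)$ directly by the telescoping product $\prod_{j\ge j_0}(1-1/j)=(j_0-1)/k$, which is equally valid and even yields an explicit $O(1/k)$ rate; your careful treatment of upgrading $\{X_k\indep X_0\mid E_j\}_j$ to $X_k\indep X_0\mid E$ via the noise-measurable partition into first-occurrence events is a correct elaboration of a step the paper states without detail.
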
 

This proposition is proved in Appendix \ref{Proof of Proposition Slow Growth of Layers}. Part 1 of Proposition \ref{Prop: Slow Growth of Layers} illustrates that when $L_k$ is sub-logarithmic, the ML decoder based on the entire $k$-layer state $X_k$ with knowledge of the deterministic DAG fails to reconstruct the root bit. Similarly, part 2 of Proposition \ref{Prop: Slow Growth of Layers} shows that reconstruction is impossible for random DAGs even if the particular DAG realization $G$ is known and the ML decoder can access $X_k$. Therefore, Proposition \ref{Prop: Slow Growth of Layers} illustrates that our assumption that $L_k \geq C \log(k)$, for some constant $C$ (that depends on $\delta$ and $d$) and all sufficiently large $k$, for reconstruction to be possible in Theorems \ref{Thm:Phase Transition in Random Grid with Majority Rule Processing} and \ref{Thm:Phase Transition in Random Grid with And-Or Rule Processing} is in fact necessary. 

In contrast, consider a deterministic DAG with no restrictions (i.e. no bounded indegree assumption) except for the size of $L_k$. Then, each vertex at level $k$ of this DAG is connected to all $L_{k-1}$ vertices at level $k-1$. The next proposition illustrates that $L_k = \Theta\big(\sqrt{\log(k)}\big)$ is the critical scaling of $L_k$ in this scenario. In particular, reconstruction is possible when $L_k = \Omega\big(\sqrt{\log(k)}\big)$ (i.e. $L_k \geq A(\delta) \sqrt{\log(k)}$ for some large constant $A(\delta)$ and all sufficiently large $k$), and reconstruction is impossible when $L_k = O\big(\sqrt{\log(k)}\big)$ (i.e. $L_k \leq B(\delta) \sqrt{\log(k)}$ for some small constant $B(\delta)$ and all sufficiently large $k$). The proof of this result is deferred to Appendix \ref{Proof of Proposition Broadcasting in Unbounded Degree DAG Model}.

\begin{proposition}[Broadcasting in Unbounded Degree DAG Model]
\label{Prop:Broadcasting in Unbounded Degree DAG Model}
Let $A(\delta)$ and $B(\delta)$ be the constants defined in \eqref{Eq: Specialized Level Size Constant} and \eqref{Eq: Converse Level Size Constant Definition} in Appendix \ref{Proof of Proposition Broadcasting in Unbounded Degree DAG Model}. Consider a deterministic DAG $\mathcal{G}$ such that for every $k \in \N\backslash\!\{0\}$, each vertex at level $k$ has one incoming edge from all $L_{k-1}$ vertices at level $k-1$. Then, for any noise level $\delta \in \big(0,\frac{1}{2}\big)$, we have:
\begin{enumerate}
\item If the number of vertices per level satisfies $L_k \geq A(\delta) \sqrt{\log(k)}$ for all sufficiently large $k$, and all Boolean processing functions in $\mathcal{G}$ are the majority rule (where ties are broken by outputting $1$), then reconstruction is possible in the sense that:
$$ \limsup_{k \rightarrow \infty}{\P(\hat{S}_{k} \neq X_0)} < \frac{1}{2} $$
where we use the majority decoder $\hat{S}_k = \I\big\{\sigma_k \geq \frac{1}{2}\big\}$ at level $k$.
\item If the number of vertices per level satisfies $L_k \leq B(\delta) \sqrt{\log(k)}$ for all sufficiently large $k$, then for all choices of Boolean processing functions (which may vary between vertices), reconstruction is impossible in the sense that: 
$$ \lim_{k \rightarrow \infty}{\left\|P_{X_k}^+ - P_{X_k}^-\right\|_{\mathsf{TV}}} = 0 \, . $$
\end{enumerate} 
\end{proposition}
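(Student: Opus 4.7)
I would condition on $X_0 = 1$ and aim to prove $\P(\sigma_k \geq 1/2 \mid X_0 = 1) > 1/2 + \epsilon$ for some $\epsilon > 0$ and all large $k$; by the odd-symmetry of majority and the uniform prior on $X_0$ this yields the $\limsup$ bound. The critical structural feature of the complete-bipartite DAG is that every level-$k$ vertex is connected to \emph{every} level-$(k-1)$ vertex, so conditionally on $X_{k-1}$ the random variables $X_{k,0},\ldots,X_{k,L_k-1}$ are i.i.d.\ Bernoulli with common parameter $g_{L_{k-1}}(\sigma_{k-1})$, where $g_L(s) \triangleq \P(\mathrm{Bin}(L,\,s * \delta) \geq L/2)$ (with the stated tie-breaking). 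A Chernoff estimate gives $g_L(s) \geq 1 - \exp(-2L(1-2\delta)^2(s-1/2)^2)$ for $s \geq 1/2$, and a second Hoeffding bound on the $L_k$ conditionally independent outputs controls $\sigma_k$ around $g_{L_{k-1}}(\sigma_{k-1})$. Starting from $\sigma_0 = 1$, I would then run an induction on $k$ propagating a ``good event'' $A_k = \{\sigma_k \geq 1/2 + \eta\}$ for a suitable $\eta = \eta(\delta) > 0$: when $A_{k-1}$ holds, $g_{L_{k-1}}(\sigma_{k-1})$ is close to $1$ and the failure $\P(A_k^c \mid X_{k-1})$ decays in $L_k$, and the scaling $L_k \geq A(\delta)\sqrt{\log k}$ together with the initial certainty $\sigma_0 = 1$ keeps the cumulative failure probability strictly below $1/2$ after calibration of $A(\delta)$.

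\textbf{Part 2 (converse).} Because the Boolean processing functions are arbitrary and may vary between vertices, the TV estimate must hold uniformly. I would attack this via an information-theoretic contraction for the channel $X_{k-1} \to X_k$, working with a quantity such as $\chi^2(P_{X_k}^+ \| P_{X_k}^-)$ or $I(X_0;X_k)$ that upper-bounds TV via Cauchy--Schwarz or Pinsker. The natural per-vertex tool is Fourier analysis on $\{0,1\}^{L_{k-1}}$: conditioning on $X_{k-1}$ gives $\E[X_{k,j} \mid X_{k-1}] = (T_{1-2\delta} f_{k,j})(X_{k-1})$, whose $S$-th Fourier coefficient equals $(1-2\delta)^{|S|}\hat{f}_{k,j}(S)$, so by Parseval and Cauchy--Schwarz
\[
\bigl|\E[X_{k,j} \mid X_0 = 1] - \E[X_{k,j} \mid X_0 = 0]\bigr|^2 \leq \sum_{S \subseteq [L_{k-1}]} (1-2\delta)^{2|S|}\bigl(\mu_S^+ - \mu_S^-\bigr)^2,
\]
where $\mu_S^\pm = \E[\chi_S(X_{k-1}) \mid X_0 = \pm 1]$. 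A chain-rule / subadditivity argument across the $L_k$ vertices followed by iteration over the $k$ layers yields an effective contraction coefficient that depends polynomially on $L_k L_{k-1}$ per step; under $L_k \leq B(\delta)\sqrt{\log k}$ with $B(\delta)$ sufficiently small, the iterated product telescopes to $0$ and Pinsker's inequality delivers $\|P_{X_k}^+ - P_{X_k}^-\|_{\mathsf{TV}} \to 0$.

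\textbf{Main obstacle.} The converse is the harder direction. The $L_k$ outputs at layer $k$ are not independent given $X_0$ (they share all of $X_{k-1}$ as common input), so a naïve tensorization of $\chi^2$ or TV across the $L_k$ outputs is too weak; the bound must instead exploit the BSC-Fourier contraction $(1-2\delta)^{|S|}$ uniformly over all Boolean gates, which is what makes a growing-amplification $L_k \cdot L_{k-1}$ still compatible with overall decay when $L_k = O(\sqrt{\log k})$. Pinning down the constants $A(\delta)$ and $B(\delta)$ so that the $\sqrt{\log k}$ rate is sharp on both sides requires delicate bookkeeping of concentration tails on the achievability side and Bonami--Beckner noise decay on the impossibility side.
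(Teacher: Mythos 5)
Your Part 1 follows the paper's route: condition on $\sigma_{k-1}$, note the level-$k$ bits are conditionally i.i.d., and propagate a good event $\{\sigma_k \geq \tfrac{1}{2}+\epsilon\}$ with summable failure probabilities. One quantitative point needs care, though. A plain (additive) Hoeffding bound on the $L_k$ conditionally i.i.d. outputs gives a failure probability of order $\exp(-\Theta(L_k))$, and with $L_k = \Theta(\sqrt{\log k})$ this is $\exp(-c\sqrt{\log k})$, which is \emph{not} summable. The paper instead uses the relative-entropy form of the Chernoff--Hoeffding bound: since $\E[\sigma_k \mid \sigma_{k-1}=\sigma]$ is within $e^{-\Theta(L_{k-1})}$ of $1$, the divergence $D\big(\tfrac{1}{2}+\epsilon \,\|\, \E[\sigma_k|\sigma_{k-1}=\sigma]\big)$ is itself $\Theta(L_{k-1})$, so the failure probability is $\exp(-\Theta(L_{k-1}L_k)) = \exp(-\Theta(\log k))$, which is summable. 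Your sketch sets up both ingredients (the inner Chernoff estimate on $g_{L_{k-1}}$ and the outer concentration of $\sigma_k$) but does not combine them into the product $L_{k-1}L_k$ in the exponent; without that, the $\sqrt{\log k}$ scaling does not close.

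For Part 2 your proposed Fourier/$\chi^2$ contraction is a genuinely different route from the paper's, and as written it has a gap you yourself flag: the chain-rule/subadditivity step across the $L_k$ dependent outputs is exactly the unresolved piece, and it is not clear the per-coefficient factor $(1-2\delta)^{2|S|}$ aggregates into a per-layer contraction whose deficits $1-\eta_k$ sum to infinity under $L_k = O(\sqrt{\log k})$. The paper's argument is far more elementary and avoids this entirely: decompose each $\mathsf{BSC}(\delta)$ as ``copy the input with probability $1-2\delta$, else emit a fresh $\Ber(\tfrac{1}{2})$ bit,'' and let $A_k$ be the event that \emph{all} $L_{k-1}L_k$ edges between levels $k-1$ and $k$ emit fresh bits, so that $X_k$ (and hence all deeper layers) is independent of $X_0$ on $A_k$ regardless of the gates. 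Then $\P(A_k) = (2\delta)^{L_{k-1}L_k} \geq 1/k$ under the hypothesis $L_k \leq B(\delta)\sqrt{\log k}$ with $B(\delta) = \log(1/(2\delta))^{-1/2}$, the $A_k$ are mutually independent, and the second Borel--Cantelli lemma gives that some layer is fully severed almost surely, forcing $\|P_{X_k}^+ - P_{X_k}^-\|_{\mathsf{TV}} \to 0$. This is where the $\sqrt{\log k}$ rate and the constant $B(\delta)$ come from; no hypercontractivity is needed. I recommend replacing your converse with this severance argument, or at minimum supplying the missing tensorization lemma if you wish to pursue the Fourier route.
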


The last impossibility result we present here is an important result from the reliable computation literature due to Evans and Schulman \cite{EvansSchulman1999}. Evans and Schulman studied von Neumann's noisy computation model (which we briefly discussed in subsection \ref{Results on Random DAG Models}), and established general conditions under which reconstruction is impossible in deterministic DAGs due to the decay of mutual information between $X_0$ and $X_k$. Recall that for two discrete random variables $X \in \X$ and $Y \in \Y$ (where $|\X|,|\Y| < \infty$), with joint probability mass function $P_{X,Y}$ and marginals $P_X$ and $P_Y$ respectively, the \textit{mutual information} (in bits) between them is defined as:
\begin{equation}
I(X;Y) \triangleq \sum_{x \in \X}\sum_{y \in \Y}{P_{X,Y}(x,y) \log_2\!\left(\frac{P_{X,Y}(x,y)}{P_X(x)P_Y(y)}\right)} 
\end{equation}
where $\log_2(\cdot)$ is the binary logarithm, and we assume that $0 \log_2\!\big(\frac{0}{q}\big) = 0$ for any $q \geq 0$, and $p \log_2\!\big(\frac{p}{0}\big) = \infty$ for any $p > 0$ (due to continuity considerations). We present a specialization of \cite[Lemma 2]{EvansSchulman1999} for our setting as Proposition \ref{Prop: Evans Schulman} below. This proposition portrays that if $L_k$ is sub-exponential and the parameters $\delta$ and $d$ satisfy $(1 - 2\delta)^2 d < 1$, then reconstruction is impossible in deterministic DAGs regardless of the choices of Boolean processing functions and the choice of decision rule.

\begin{proposition}[Decay of Mutual Information {\cite[Lemma 2]{EvansSchulman1999}}]
\label{Prop: Evans Schulman}
For any deterministic DAG model, we have:
$$ I(X_0;X_k) \leq L_k \left((1-2\delta)^2 d\right)^k $$
where $L_k d^k$ is the total number of paths from $X_0$ to layer $X_k$, and $(1-2\delta)^{2k}$ can be construed as the overall contraction of mutual information along each path. Therefore, if $(1 - 2\delta)^2 d < 1$ and $L_k = o\big(1/((1-2\delta)^2 d)^k\big)$, then for all choices of Boolean processing functions (which may vary between vertices), we have:
$$ \lim_{k \rightarrow \infty}{I(X_0;X_k)} = 0 $$
which implies, by Pinsker's inequality, that:
$$ \lim_{k \rightarrow \infty}{\left\|P_{X_k}^+ - P_{X_k}^-\right\|_{\mathsf{TV}}} = 0 \, . $$ 
\end{proposition}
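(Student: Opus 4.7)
The statement has three parts: the path-counting bound $I(X_0;X_k)\le L_k((1-2\delta)^2 d)^k$, the vanishing of $I(X_0;X_k)$ under the given scaling hypotheses, and the corresponding vanishing of TV distance. Only the first part requires real work. The second is immediate because $L_k((1-2\delta)^2 d)^k\to 0$ under the hypothesis $L_k=o(1/((1-2\delta)^2 d)^k)$. The third follows from Pinsker's inequality combined with the identity $I(X_0;X_k)=\tfrac{1}{2}D(P_{X_k}^+\|P_{X_k})+\tfrac{1}{2}D(P_{X_k}^-\|P_{X_k})$, where $P_{X_k}=\tfrac{1}{2}(P_{X_k}^+ +P_{X_k}^-)$: Pinsker gives $\|P_{X_k}^\pm - P_{X_k}\|_{\mathsf{TV}}\lesssim \sqrt{I(X_0;X_k)}$, and the triangle inequality then yields $\|P_{X_k}^+ - P_{X_k}^-\|_{\mathsf{TV}}\to 0$.

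My plan for the main bound is to follow the Evans--Schulman strategy. The idea is to introduce a ``strong data processing'' functional $\Phi(X_0;\cdot)$ defined on downstream random variables, enjoying three properties: (a) $\Phi$ dominates mutual information, so $I(X_0;Y)\le \Phi(X_0;Y)$ up to universal constants; (b) $\Phi$ contracts through a $\mathsf{BSC}(\delta)$ edge, so if $W=V\oplus Z$ with $Z\sim\Ber(\delta)$ independent and $X_0\to V\to W$ Markov, then $\Phi(X_0;W)\le (1-2\delta)^2\,\Phi(X_0;V)$; (c) $\Phi$ is sub-additive over a collection of random variables whose only common randomness is $X_0$. A natural candidate is a $\chi^2$-type divergence between $P_{\cdot|X_0=0}$ and $P_{\cdot|X_0=1}$, which is known to satisfy the $\mathsf{BSC}$ contraction with sharp constant $(1-2\delta)^2$ and which dominates mutual information by $D\le \chi^2$.

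Equipped with such a $\Phi$, I would proceed by induction on $k$. For each vertex $v$ in layer $k+1$ with parents $u_1,\dots,u_d$ in layer $k$, write $v=f_{k+1}(u_1\oplus Z_1,\dots,u_d\oplus Z_d)$ with independent $Z_i\sim\Ber(\delta)$. Property (b) gives $\Phi(X_0;u_i\oplus Z_i)\le (1-2\delta)^2\Phi(X_0;u_i)$; the data processing property absorbs the deterministic $f_{k+1}$; sub-additivity then yields
\[
\Phi(X_0;v)\;\le\;(1-2\delta)^2\sum_{i=1}^{d}\Phi(X_0;u_i).
\]
Unrolling this recursion over $k$ layers gives $\Phi(X_0;v)\le (1-2\delta)^{2k}\cdot(\text{number of length-}k\text{ paths from }X_0\text{ to }v)\le (1-2\delta)^{2k}d^k$. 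Summing over the $L_k$ vertices of layer $k$ via sub-additivity produces the desired bound $I(X_0;X_k)\le \Phi(X_0;X_k)\le L_k((1-2\delta)^2 d)^k$.

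The main technical obstacle is the sub-additivity step across the full layer vector $X_k$: in a DAG the vertices of layer $k$ share ancestors and are not conditionally independent given $X_0$, so naive additivity of mutual information would fail. The workaround is to choose $\Phi$ so that its sub-additivity is tied to the independence of the injected edge noise $\{Z_{k,j,i}\}$ rather than to independence of the observed vertices; effectively, this ``unrolls'' the DAG into a tree of length-$k$ paths (of cardinality at most $L_k d^k$), whose leaves contribute independently through their disjoint noise variables. Once the correct functional and conditioning on noise are set up, the inductive estimate above is mechanical, and the final factor $L_k((1-2\delta)^2 d)^k$ emerges cleanly from path counting.
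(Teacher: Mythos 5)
Your reduction of the second and third claims to the first is fine (the Jensen--Shannon identity $I(X_0;X_k)=\tfrac12 D(P_{X_k}^+\|P_{X_k})+\tfrac12 D(P_{X_k}^-\|P_{X_k})$ plus Pinsker and the triangle inequality is a standard and correct route to the TV statement). Note, however, that the paper does not reprove the main bound at all: Proposition \ref{Prop: Evans Schulman} is quoted from Evans--Schulman, with the applicability to edge noise justified by the re-derivation in \cite{PolyanskiyWu2017}; the only proof sketch the paper gives is the site-percolation argument of subsection \ref{Further Discussion}, which bounds $I(X_0;X_k|G)$ by the probability $p_k$ of an open path and then applies Markov's inequality to a first-moment path count. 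That route deliberately avoids the issue your proposal runs into.

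The genuine gap in your argument is the sub-additivity property (c), which is exactly the hard part of Evans--Schulman's Lemma 2 and which you assume rather than establish. The per-gate inequality $\Phi(X_0;v)\le(1-2\delta)^2\sum_{i=1}^d\Phi(X_0;u_i)$ is not available for mutual information (since the parents $u_1,\dots,u_d$ share ancestors, $I(X_0;u_1,\dots,u_d)\le\sum_i I(X_0;u_i)$ can fail), and your proposed $\chi^2$-type functional does not repair this: $\chi^2$ tensorizes \emph{multiplicatively} over conditionally independent coordinates (so it is not even sub-additive there) and satisfies no clean additivity bound over dependent coordinates such as the vertices of a layer or the parents of a gate. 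Saying the sub-additivity should be ``tied to the independence of the injected edge noise'' names the obstacle without overcoming it; the unrolled tree of paths has leaves that are \emph{not} independent given $X_0$ because distinct paths reuse the same ancestral vertices, only the edge noises being fresh. The actual proofs close this gap differently: Evans--Schulman induct wire-by-wire using the decomposition of a $\mathsf{BSC}(\delta)$ output as ``copy with probability $1-2\delta$, fresh coin otherwise,'' which yields an inequality of the form $I(X_0;W,U)\le(1-2\delta)^2 I(X_0;V,U)+\big(1-(1-2\delta)^2\big)I(X_0;U)$ on \emph{joint} mutual informations (no per-vertex splitting is ever needed), and unrolling this over all wires produces the sum over paths; Polyanskiy--Wu instead bound $I(X_0;X_k)$ by the probability that the root percolates to layer $k$ when each vertex is open with probability $(1-2\delta)^2$, and then path-count via Markov's inequality. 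Without one of these devices (or a worked-out substitute), your induction does not go through.
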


We make some pertinent remarks about this result. Firstly, Evans and Schulman's original analysis assumes that gates are noisy as opposed to edges (in accordance with von Neumann's setup), but the re-derivation of \cite[Lemma 2]{EvansSchulman1999} in \cite[Corollary 7]{PolyanskiyWu2017} illustrates that the result also holds for our model. In fact, the \textit{site percolation} analysis in \cite[Section 3]{PolyanskiyWu2017} (which we will briefly delineate later) improves upon Evans and Schulman's estimate. Furthermore, this analysis illustrates that the bound in Proposition \ref{Prop: Evans Schulman} also holds for all choices of random Boolean processing functions.

Secondly, while Proposition \ref{Prop: Evans Schulman} holds for deterministic DAGs, we can easily extend it for random DAG models. Indeed, the random DAG model inherits the inequality in Proposition \ref{Prop: Evans Schulman} pointwise:
\begin{equation}
\label{Eq: Intermediate MI Bound given G}
I(X_0;X_k|G = \mathcal{G}) \leq L_k \left((1-2\delta)^2 d\right)^k
\end{equation}
for every realization of the random DAG $G = \mathcal{G}$, where $I(X_0;X_k|G = \mathcal{G})$ is the mutual information between $X_0$ and $X_k$ computed using the joint distribution of $X_0$ and $X_k$ given $G = \mathcal{G}$. This implies that if $L_k$ is sub-exponential and $(1 - 2\delta)^2 d < 1$, then reconstruction based on $X_k$ is impossible regardless of the choices of Boolean processing functions (which may vary between vertices and be graph dependent) and the choice of decision rule even if the decoder knows the particular random DAG realization, i.e. $\lim_{k \rightarrow \infty}{\big\|P_{X_k|G}^+ - P_{X_k|G}^-\big\|_{\mathsf{TV}}} = 0$ pointwise (which trivially implies \eqref{Eq:Strong TV Reconstruction Impossible}). Taking expectations with respect to $G$ in \eqref{Eq: Intermediate MI Bound given G}, we get:
\begin{equation}
\label{Eq: Seemingly Weaker MI Condition}
I(X_0;X_k) \leq I(X_0;X_k|G) \leq L_k \left((1-2\delta)^2 d\right)^k
\end{equation}
where $I(X_0;X_k|G)$ is the conditional mutual information (i.e. the expected value of $I(X_0;X_k|G = \mathcal{G})$ with respect to $G$), and the first inequality follows from the chain rule for mutual information and the fact that $X_0$ is independent of $G$. Since the second inequality in \eqref{Eq: Seemingly Weaker MI Condition} implies \eqref{Eq: Auxiliary Convergence Condition}, invoking the argument at the end of the proof of part 2 of Theorem \ref{Thm:Phase Transition in Random Grid with Majority Rule Processing} in section \ref{Analysis of Majority Rule Processing in Random Grid} also yields that reconstruction is impossible in the sense of \eqref{Eq:Strong TV Reconstruction Impossible} when $L_k$ is sub-exponential and $(1 - 2\delta)^2 d < 1$. Thus, $\lim_{k \rightarrow \infty}{I(X_0;X_k|G)} = 0$ is a sufficient condition for \eqref{Eq:Strong TV Reconstruction Impossible}. In contrast, the first inequality in \eqref{Eq: Seemingly Weaker MI Condition} only yields the impossibility of reconstruction in the sense of \eqref{Eq:TV Reconstruction Impossible} when $L_k$ is sub-exponential and $(1 - 2\delta)^2 d < 1$.

Thirdly, Evans and Schulman's result in Proposition \ref{Prop: Evans Schulman} provides an upper bound on the critical threshold of $\delta$ above which reconstruction of the root bit is impossible. Indeed, the condition, $(1 - 2\delta)^2 d < 1$, under which mutual information decays can be rewritten as (cf. the discussion in \cite[p.2373]{EvansSchulman1999}):
\begin{equation}
\delta_{\mathsf{ES}}(d) \triangleq \frac{1}{2} - \frac{1}{2\sqrt{d}} < \delta < \frac{1}{2}
\end{equation}
and reconstruction is impossible for deterministic or random DAGs in this regime of $\delta$ provided $L_k$ is sub-exponential. As a sanity check, we can verify that $\delta_{\mathsf{ES}}(2) = 0.14644... > 0.08856... = \delta_{\mathsf{andor}}$ in the context of Theorem \ref{Thm:Phase Transition in Random Grid with And-Or Rule Processing}, and $\delta_{\mathsf{ES}}(3) = 0.21132... > 0.16666... = \delta_{\mathsf{maj}}$ in the context of Theorem \ref{Thm:Phase Transition in Random Grid with Majority Rule Processing} with $d = 3$. Although $\delta_{\mathsf{ES}}(d)$ is a general upper bound on the critical threshold for reconstruction, in this paper, it is not particularly useful because we analyze explicit processing functions and decision rules, and derive specific bounds that characterize the corresponding thresholds.

Fourthly, it is worth comparing $\delta_{\mathsf{ES}}(d)$ (which comes from a site percolation argument, cf. \cite[Section 3]{PolyanskiyWu2017}) to an upper bound on the critical threshold for reconstruction derived from \textit{bond percolation} (similar to the percolation analysis in \cite[p.570]{Feder1989}). To this end, consider the random DAG model, and recall that the $\mathsf{BSC}(\delta)$'s along each edge generate independent bits with probability $2\delta$ (as shown in the proof of Proposition \ref{Prop: Slow Growth of Layers} in Appendix \ref{Proof of Proposition Slow Growth of Layers}). So, we can perform bond percolation so that each edge is independently ``removed'' with probability $2\delta$. It can be shown by analyzing this bond percolation process that reconstruction is impossible (in a certain sense) when $\frac{1}{2} - \frac{1}{2d} < \delta < \frac{1}{2}$. Therefore, the Evans-Schulman upper bound of $\delta_{\mathsf{ES}}(d)$ is tighter than the bond percolation upper bound: $\delta_{\mathsf{ES}}(d) < \frac{1}{2} - \frac{1}{2d}$.

Finally, we briefly delineate how the site percolation approach in \cite[Section 3]{PolyanskiyWu2017} allows us to prove that reconstruction is impossible in the random DAG model for the $(1-2\delta)^2 d = 1$ case as well. Consider a site percolation process where each vertex $X_{k,j}$ (for $k \in \N\backslash\!\{0\}$ and $j \in [L_k]$) is independently ``open'' with probability $(1-2\delta)^2$, and ``closed'' with probability $1-(1-2\delta)^2$. (Note that $X_{0,0}$ is open almost surely.) For every $k \in \N\backslash\!\{0\}$, let $p_k$ denote the probability that there is an ``open connected path'' from $X_{0}$ to $X_{k}$ (i.e. there exist $j_1 \in [L_1],\dots,j_k \in [L_k]$ such that $(X_{0,0},X_{1,j_1}),(X_{1,j_1},X_{2,j_2}),\dots,(X_{k-1,j_{k-1}},X_{k,j_k})$ are directed edges in the random DAG $G$ and $X_{1,j_1},\dots,X_{k,j_k}$ are all open). It can be deduced from \cite[Theorem 5]{PolyanskiyWu2017} that for any $k \in \N\backslash\!\{0\}$: 
\begin{equation}
\label{Eq: Polyanskiy-Wu Bound}
I(X_0;X_k|G) \leq p_k \, . 
\end{equation} 
Next, for each $k \in \N$, define the random variable:
\begin{equation}
\lambda_k \triangleq \frac{1}{L_k} \sum_{j \in [L_k]}{\I\!\left\{X_{k,j} \text{ is open and connected}\right\}} 
\end{equation}
which is the proportion of open vertices at level $k$ that are connected to the root by an open path. (Note that $\lambda_0 = 1$.) It is straightforward to verify (using Bernoulli's inequality) that for any $k \in \N\backslash\!\{0\}$:
\begin{align}
\label{Eq: Other Equality Recursion}
\E\!\left[\lambda_k|\lambda_{k-1}\right] & = (1-2\delta)^2 \!\left(1-(1-\lambda_{k-1})^d\right) \\
& \leq (1-2\delta)^2 d \lambda_{k-1} \, . 
\label{Eq: Recursion}
\end{align}
Observe that by Markov's inequality and the recursion from \eqref{Eq: Recursion}, $\E[\lambda_k] \leq (1-2\delta)^2 d \, \E[\lambda_{k-1}]$, we have:
\begin{equation}
\label{Eq: Markov argument}
p_k = \P\!\left(\lambda_k \geq \frac{1}{L_k}\right) \leq L_k \E\!\left[\lambda_k\right] \leq L_k \left((1-2\delta)^2 d\right)^k
\end{equation}
which recovers Evans and Schulman's result (Proposition \ref{Prop: Evans Schulman}) in the context of the random DAG model. Indeed, if $(1-2\delta)^2 d < 1$ and $L_k = o\big(1/((1-2\delta)^2 d)^k\big)$, then $\lim_{k \rightarrow \infty}{p_k} = 0$, and as a result, $\lim_{k \rightarrow \infty}{I(X_0;X_k|G)} = 0$ by \eqref{Eq: Polyanskiy-Wu Bound}. On the other hand, when $(1-2\delta)^2 d = 1$, taking expectations and applying Jensen's inequality to the equality in \eqref{Eq: Other Equality Recursion} produces:
\begin{equation}
\E\!\left[\lambda_k\right] \leq (1-2\delta)^2 \!\left(1-(1-\E\!\left[\lambda_{k-1}\right])^d\right) . 
\end{equation}
This implies that $\E[\lambda_k] \leq F^{-1}(k)$ for every $k \in \N$ using the estimate in \cite[Appendix A]{PolyanskiyWu2016}, where $F:[0,1] \rightarrow \R_{+}, \, F(t) = \int_{t}^{1}{\frac{1}{f(\tau)} \, d\tau}$ with $f:[0,1] \rightarrow [0,1], \, f(t) = t - (1-2\delta)^2 \big(1-(1-t)^d\big)$, and $F^{-1}:\R_{+} \rightarrow [0,1]$ is well-defined. Since $f(t) \geq \frac{d-1}{2} t^2$ for all $t \in [0,1]$, it is straightforward to show that:
\begin{equation}
\E\!\left[\lambda_k\right] \leq F^{-1}(k) \leq \frac{2}{(d-1) k} \, . 
\end{equation}
Therefore, the Markov's inequality argument in \eqref{Eq: Markov argument} illustrates that if $(1-2\delta)^2 d = 1$ and $L_k = o(k)$, then $\lim_{k \rightarrow \infty}{p_k} = 0$ and reconstruction is impossible in the random DAG model due to \eqref{Eq: Polyanskiy-Wu Bound}. Furthermore, the condition on $L_k$ can be improved to $L_k = O(k \log(k))$ using a more sophisticated Borel-Cantelli type of argument.

\section{Analysis of Majority Rule Processing in Random DAG Model}
\label{Analysis of Majority Rule Processing in Random Grid}

In this section, we prove Theorem \ref{Thm:Phase Transition in Random Grid with Majority Rule Processing}. To this end, we first make some pertinent observations. Recall that we have a random DAG model with $d \geq 3$, and all Boolean functions are the majority rule, i.e. $f_{k}(x_1,\dots,x_d) = \maj(x_1,\dots,x_d)$ for every $k \in \N\backslash\!\{0\}$. Note that when the number of $1$'s is equal to the number of $0$'s, the majority rule outputs an independent $\Ber\big(\frac{1}{2}\big)$ bit.\footnote{Although generating a random bit is a natural approach to breaking ties in the majority rule, this means that the rule is no longer purely deterministic when $d$ is even.} Suppose we are given that $\sigma_{k-1} = \sigma$ for any $k \in \N\backslash\!\{0\}$. Then, for every $j \in [L_{k}]$, $X_{k,j} = \maj(Y_1,\dots,Y_d)$ where $Y_1,\dots,Y_d$ are i.i.d. $\Ber(p)$ random variables with $p = \sigma * \delta$. Define the function $g:[0,1] \rightarrow [0,1]$ as follows: 
\begin{align}
g(\sigma) & \triangleq \E\!\left[\maj(Y_1,\dots,Y_d)\right] \\
& = \P\!\left(\sum_{i = 1}^{d}{Y_i} > \frac{d}{2}\right) + \frac{1}{2}\P\!\left(\sum_{i = 1}^{d}{Y_i} = \frac{d}{2}\right) \\
& = \left\{ 
    \begin{array}{ll}
      \displaystyle{\sum_{i = \frac{d}{2} + 1}^{d}{\binom{d}{i} (\sigma * \delta)^i (1 - \sigma * \delta)^{d - i}}} & \\
			\quad \displaystyle{+ \, \frac{1}{2}\binom{d}{\frac{d}{2}} (\sigma * \delta)^{\frac{d}{2}} (1 - \sigma * \delta)^{^{\frac{d}{2}}}} &, \enspace d \text{ even} \\
      \displaystyle{\sum_{i = \frac{d+1}{2}}^{d}{\binom{d}{i} (\underbrace{\sigma * \delta}_{p})^i (\underbrace{1 - \sigma * \delta}_{1-p})^{d - i}}} &, \enspace d \text{ odd}
    \end{array}
	\right.
\label{Eq:Binomial form of g}
\end{align}
which implies that $X_{k,j}$ are i.i.d. $\Ber(g(\sigma))$ for $j \in [L_{k}]$, and $L_k \sigma_k \sim \mathsf{binomial}(L_k,g(\sigma))$, since we have:
\begin{equation}
\label{Eq:Conditional Expectation}
\P(X_{k,j} = 1|\sigma_{k-1} = \sigma) = \E[\sigma_k|\sigma_{k-1} = \sigma] = g(\sigma) \, .
\end{equation}

To compute the first derivative of $g$, we follow the analysis in \cite[Section 2]{Mossel1998}. Recall that a Boolean function $h:\{0,1\}^d \rightarrow \{0,1\}$ is \textit{monotone non-decreasing} (respectively, \textit{non-increasing}) if its value either increases (respectively, decreases) or remains the same whenever any of its input bits is flipped from $0$ to $1$. For any such monotone function $h:\{0,1\}^d \rightarrow \{0,1\}$, the \textit{Margulis-Russo formula} states that \cite{Margulis1974,Russo1981} (alternatively, see \cite[Section 4.1]{Grimmett1997}):
\begin{equation}
\label{Eq:Margulis-Russo formula}
\begin{aligned}
& \frac{d}{dp}\E\!\left[h(Y_1,\dots,Y_d)\right] \\
& \quad \quad = \sum_{i = 1}^{d} \Big(\E\!\left[h(Y_1,\dots,Y_{i-1},1,Y_{i+1},\dots,Y_d)\right] \\
& \quad \quad \quad \quad \quad \, \, - \E\!\left[h(Y_1,\dots,Y_{i-1},0,Y_{i+1},\dots,Y_d)\right]\!\Big) \, .
\end{aligned}
\end{equation}
Hence, since $h = \maj$ is a non-decreasing function, $g^{\prime}:[0,1] \rightarrow \R_{+}$ is given by:
\begin{align}
g^{\prime}(\sigma) & = \frac{dp}{d\sigma} \frac{d}{dp} \E\!\left[h(Y_1,\dots,Y_d)\right] \nonumber \\
& = (1 - 2\delta) \sum_{i = 1}^{d} \Big(\E\!\left[h(Y_1,\dots,Y_{i-1},1,Y_{i+1},\dots,Y_d)\right] \nonumber \\
& \quad \quad \quad \quad \quad \quad - \E\!\left[h(Y_1,\dots,Y_{i-1},0,Y_{i+1},\dots,Y_d)\right]\!\Big) \nonumber \\
& = \left(1 - 2\delta\right) d \, \E\!\left[h(1,Y_{2},\dots,Y_d) - h(0,Y_{2},\dots,Y_d)\right] \nonumber \\
& = \left(1 - 2\delta\right) d \, \cdot \nonumber \\
& \quad \enspace \P\!\left(h(1,Y_{2},\dots,Y_d) = 1, h(0,Y_{2},\dots,Y_d) = 0\right) \label{Eq: Useful Margulis-Russo Calculation} \\
& = \left\{ 
    \begin{array}{ll}
      \displaystyle{\left(1 - 2\delta\right) \frac{d}{2} \, \P\!\left(\sum_{i = 2}^{d}{Y_{i}} = \frac{d}{2} - 1\right)} & \\
		  \displaystyle{\enspace \, \, + \left(1 - 2\delta\right) \frac{d}{2} \, \P\!\left(\sum_{i = 2}^{d}{Y_{i}} = \frac{d}{2}\right)} &, \enspace d \text{ even} \\
      \displaystyle{\left(1 - 2\delta\right) d \, \P\!\left(\sum_{i = 2}^{d}{Y_{i}} = \frac{d-1}{2}\right)} &, \enspace d \text{ odd}
    \end{array}
	\right. \nonumber \\
& = \left\{ 
    \begin{array}{ll}
      \displaystyle{\left(1 - 2\delta\right) \frac{d}{2} \binom{d-1}{\frac{d}{2}-1} p^{\frac{d}{2}-1} (1-p)^{\frac{d}{2}}} & \\
			\displaystyle{+ \left(1 - 2\delta\right) \frac{d}{2} \binom{d-1}{\frac{d}{2}} p^{\frac{d}{2}} (1-p)^{\frac{d}{2}-1} } &, \enspace d \text{ even} \\
      \displaystyle{\left(1 - 2\delta\right) d \, \binom{d-1}{\frac{d-1}{2}} p^{\frac{d-1}{2}} (1-p)^{\frac{d-1}{2}}} &, \enspace d \text{ odd}
    \end{array}
	\right. \nonumber \\
& = \left\{ 
    \begin{array}{ll}
      \displaystyle{\left(1 - 2\delta\right) \frac{d}{4} \binom{d}{\frac{d}{2}} (p(1-p))^{\frac{d}{2}-1} } &\!, \enspace d \text{ even} \\
      \displaystyle{\left(1 - 2\delta\right) \frac{d+1}{2} \binom{d}{\frac{d+1}{2}} (p(1-p))^{\frac{d-1}{2}}} &\!, \enspace d \text{ odd}
    \end{array}
	\right. \nonumber \\
& = \left\{ 
    \begin{array}{ll}
      \displaystyle{\left(1 - 2\delta\right) \frac{d}{4} \binom{d}{\frac{d}{2}} \cdot} & \\
			\displaystyle{\quad \quad \enspace ((\sigma * \delta)(1-\sigma * \delta))^{\frac{d}{2}-1}} &, \enspace d \text{ even} \\
      \displaystyle{\left(1 - 2\delta\right) \frac{d+1}{2} \binom{d}{\frac{d+1}{2}} \cdot} & \\
			\displaystyle{\quad \quad \enspace ((\sigma * \delta)(1-\sigma * \delta))^{\frac{d-1}{2}}} &, \enspace d \text{ odd}
    \end{array}
	\right.
\label{Eq:Derivative of g}
\end{align}
where the second equality follows from $dp/d\sigma = 1-2\delta$ and \eqref{Eq:Margulis-Russo formula}, the third equality holds because $h = \maj$ is symmetric in its input bits, the fourth equality holds because $h = \maj$ is non-decreasing, and the fifth equality follows from the definition of the majority rule. Since $p \mapsto p(1-p)$ is increasing on $\big[0,\frac{1}{2}\big]$ and decreasing on $\big[\frac{1}{2},1\big]$, and $p = \sigma * \delta$ is linear in $\sigma$ with derivative $1-2\delta > 0$ such that $p = \frac{1}{2}$ when $\sigma = \frac{1}{2}$, it is straightforward to verify from \eqref{Eq:Derivative of g} that $g^{\prime}$ is positive on $[0,1]$, increasing on $\big[0,\frac{1}{2}\big]$, and decreasing on $\big[\frac{1}{2},1\big]$. As a result, $g$ is increasing on $[0,1]$, convex on $\big[0,\frac{1}{2}\big]$, and concave on $\big[\frac{1}{2},1\big]$. Furthermore, the Lipschitz constant of $g$ over $[0,1]$, or equivalently, the maximum value of $g^{\prime}$ over $[0,1]$ is:
\begin{align} 
D(\delta,d) & \triangleq \max_{\sigma \in [0,1]}{g^{\prime}(\sigma)} = g^{\prime}\!\left(\frac{1}{2}\right) \\
& = (1 - 2\delta) \left(\frac{1}{2}\right)^{\! d-1} \ceil[\bigg]{\frac{d}{2}} \binom{d}{\ceil[\big]{\frac{d}{2}}}
\label{Eq:Lipschitz constant of g}
\end{align}
regardless of whether $d$ is even or odd.

There are two regimes of interest when we consider the contraction properties and fixed point structure of $g$. As defined in \eqref{Eq:Critical Noise Level}, let $\delta_{\mathsf{maj}}$ be the critical noise level such that the Lipschitz constant $g^{\prime}\big(\frac{1}{2}\big)$ is equal to $1$.\footnote{We can also view $\delta_{\mathsf{maj}}$ as the critical value such that the $d$-input majority gate with independent $\mathsf{BSC}(\delta)$'s at each input is an \textit{amplifier} if and only if $\delta < \delta_{\mathsf{maj}}$. We refer readers to \cite{ShuttyWoottersHayden2018} for more information about amplifiers, and in particular, the relationship between amplifiers and reliable computation.} Then, in the $\delta \in (0,\delta_{\mathsf{maj}})$ regime, the Lipschitz constant $g^{\prime}\big(\frac{1}{2}\big)$ is greater than $1$. Furthermore, since $g\big(\frac{1}{2}\big) = \frac{1}{2}$ and $g(1-\sigma) = 1 - g(\sigma)$ (which are straightforward to verify from \eqref{Eq:Binomial form of g}), the aforementioned properties of $g$ imply that $g$ has three fixed points at $\sigma = 1-\hat{\sigma},\frac{1}{2},\hat{\sigma}$, where the largest fixed point of $g$ is some $\hat{\sigma} \in \big(\frac{1}{2},1\big)$ that depends on $\delta$ (e.g. $\hat{\sigma} = \big(1 + \sqrt{(1-6\delta)/(1-2\delta)^3}\big)/2$ when $d = 3$). In contrast, in the $\delta \in \big(\delta_{\mathsf{maj}},\frac{1}{2}\big)$ regime, the Lipschitz constant $g^{\prime}\big(\frac{1}{2}\big)$ is less than $1$, and the only fixed point of $g$ is $\sigma = \frac{1}{2}$. (We also mention that when $\delta = \delta_{\mathsf{maj}}$, $g$ has only one fixed point at $\sigma = \frac{1}{2}$.) 

Using these observations, we now prove Theorem \ref{Thm:Phase Transition in Random Grid with Majority Rule Processing}.

\renewcommand{\proofname}{Proof of Theorem \ref{Thm:Phase Transition in Random Grid with Majority Rule Processing}}

\begin{proof}
We begin by constructing a useful ``monotone Markovian coupling'' that will help establish both achievability and converse directions (see \cite[Chapter 5]{LevinPeresWilmer2009} for basic definitions of Markovian couplings). Let $\{X^+_k : k \in \N\}$ and $\{X^-_k : k \in \N\}$ denote versions of the Markov chain $\{X_k : k \in \N\}$ (i.e. with the same transition kernels) initialized at $X^+_0 = 1$ and $X^-_0 = 0$, respectively. In particular, the marginal distributions of $X_k^+$ and $X_k^-$ are $P^+_{X_k}$ and $P^-_{X_k}$, respectively. The monotone Markovian coupling $\{(X^-_k,X^+_k) : k \in \N\}$ between the Markov chains $\{X^+_k : k \in \N\}$ and $\{X^-_k : k \in \N\}$ is generated as follows. First, condition on any random DAG realization $G = \G$. Recall that each edge $\mathsf{BSC}(\delta)$ of $\G$ either copies its input bit with probability $1 - 2\delta$, or produces an independent $\Ber\big(\frac{1}{2}\big)$ bit with probability $2\delta$ (as demonstrated in the proof of Proposition \ref{Prop: Slow Growth of Layers} in Appendix \ref{Proof of Proposition Slow Growth of Layers}). Next, couple $\{X^+_k : k \in \N\}$ and $\{X^-_k : k \in \N\}$ so that along any edge BSC of $\G$, say $(X_{k,j},X_{k+1,i})$, $X_{k,j}^+$ and $X_{k,j}^-$ are either both copied with probability $1 - 2\delta$, or a shared independent $\Ber\big(\frac{1}{2}\big)$ bit is produced with probability $2\delta$ that becomes the value of both $X_{k+1,i}^+$ and $X_{k+1,i}^-$. In other words, $\{X^+_k : k \in \N\}$ and $\{X^-_k : k \in \N\}$ ``run'' on the same underlying DAG $\G$ and have common BSCs. Hence, after averaging over all realizations of $G$, it is straightforward to verify that the Markovian coupling $\{(X^-_k,X^+_k) : k \in \N\}$ has the following properties:
\begin{enumerate}
\item The ``marginal'' Markov chains are $\{X^+_k : k \in \N\}$ and $\{X^-_k : k \in \N\}$.
\item For every $k \in \N$, $X_{k+1}^+$ is conditionally independent of $X_{k}^-$ given $X_k^{+}$, and $X_{k+1}^-$ is conditionally independent of $X_{k}^+$ given $X_k^{-}$. 
\item For every $k \in \N$ and every $j \in [L_{k}]$, $X_{k,j}^+ \geq X_{k,j}^-$ almost surely\textemdash this is the monotonicity property of the coupling.
\end{enumerate} 
In particular, the third property holds because $1 = X_{0,0}^+ \geq X_{0,0}^- = 0$ is true by assumption, each edge BSC preserves monotonicity (whether it copies its input or generates a new shared bit), and the majority processing functions are symmetric and monotone non-decreasing. In the sequel, probabilities of events that depend on the coupled vertex random variables $\{(X_{k,j}^-,X_{k,j}^+) : k \in \N, j \in [L_k]\}$ are defined with respect to this Markovian coupling. Note that this coupling also induces a monotone Markovian coupling $\{(\sigma^+_k,\sigma^-_k) : k \in \N\}$ between the Markov chains $\{\sigma^+_k : k \in \N\}$ and $\{\sigma^-_k : k \in \N\}$ (where $\{\sigma^+_k : k \in \N\}$ and $\{\sigma^-_k : k \in \N\}$ denote versions of the Markov chain $\{\sigma_k : k \in \N\}$ initialized at $\sigma^+_0 = 1$ and $\sigma^-_0 = 0$, respectively) such that:
\begin{enumerate}
\item The ``marginal'' Markov chains are $\{\sigma^+_k : k \in \N\}$ and $\{\sigma^-_k : k \in \N\}$.
\item For every $j > k \geq 1$, $\sigma_{j}^+$ is conditionally independent of $\sigma^-_0,\dots,\sigma^-_k,\sigma^+_0,\dots,\sigma^+_{k-1}$ given $\sigma^+_k$, and $\sigma_{j}^-$ is conditionally independent of $\sigma^+_0,\dots,\sigma^+_k,\sigma^-_0,\dots,\sigma^-_{k-1}$ given $\sigma^-_k$. 
\item For every $k \in \N$, $\sigma^+_k \geq \sigma^-_k$ almost surely.
\end{enumerate}

\textbf{Part 1:} We first prove that $\delta \in (0,\delta_{\mathsf{maj}})$ implies that $\limsup_{k \rightarrow \infty}{\P(\hat{S}_{k} \neq \sigma_0)} < \frac{1}{2}$. To this end, we start by showing that there exists $\epsilon = \epsilon(\delta,d) > 0$ (that depends on $\delta$ and $d$) such that for all $k \in \N\backslash\!\{0\}$:
\begin{equation}
\label{Eq: Stability whp}
\P\!\left(\left.\sigma_{k}^+ \geq \hat{\sigma} - \epsilon \, \right| \sigma_{k-1}^+ \geq \hat{\sigma} - \epsilon,A_{k,j}\right) \geq 1 - \exp\!\left(-2 L_k \gamma(\epsilon)^2 \right) 
\end{equation}
where $\gamma(\epsilon) \triangleq g(\hat{\sigma} - \epsilon) - (\hat{\sigma} - \epsilon) > 0$, and $A_{k,j}$ is the non-zero probability event defined as: 
$$ A_{k,j} \triangleq \left\{ 
\begin{array}{lcl}
\!\!\!\{\sigma_{j}^- \leq 1-\hat{\sigma} + \epsilon\} & \!\!\!\!\!, & \!\!\!\! 0 \leq j = k-1 \\
\!\!\!\{\sigma_{k-2}^+ \geq \hat{\sigma} - \epsilon,\dots,\sigma_{j}^+ \geq \hat{\sigma} - \epsilon\} & & \\
\!\cap \, \{\sigma_{j}^- \leq 1-\hat{\sigma} + \epsilon\} & \!\!\!\!\!, & \!\!\!\! 0 \leq j \leq k-2
\end{array} \right.
$$ 
for any $0 \leq j < k$. Since $g^{\prime}(\hat{\sigma}) < 1$ and $g(\hat{\sigma}) = \hat{\sigma}$, $g(\hat{\sigma} - \epsilon) > \hat{\sigma} - \epsilon$ for sufficiently small $\epsilon > 0$. Fix any such $\epsilon > 0$ (which depends on $\delta$ and $d$ because $g$ depends on $\delta$ and $d$) such that $\gamma(\epsilon) > 0$. Recall that $L_k \sigma_k \sim \mathsf{binomial}(L_k,g(\sigma))$ given $\sigma_{k-1} = \sigma$. This implies that for every $k \in \N\backslash\!\{0\}$ and every $0 \leq j < k$:
\begin{align*}
& \P\!\left(\left.\sigma_k^+ < g\!\left(\sigma_{k-1}^+\right) - \gamma(\epsilon) \, \right|\sigma_{k-1}^+ = \sigma,A_{k,j}\right) \\
& \quad \quad \quad \quad \quad \quad = \P(\sigma_k < g(\sigma_{k-1}) - \gamma(\epsilon)|\sigma_{k-1} = \sigma) \\
& \quad \quad \quad \quad \quad \quad \leq \exp\!\left(-2 L_k \gamma(\epsilon)^2 \right) 
\end{align*}
where the equality follows from property 2 of our Markovian coupling $\{(\sigma^+_k,\sigma^-_k) : k \in \N\}$, and the inequality follows from \eqref{Eq:Conditional Expectation} and Hoeffding's inequality \cite[Theorem 1]{Hoeffding1963}. As a result, we have:
\begin{align*}
& \sum_{\sigma \geq \hat{\sigma} - \epsilon} \! \Big(\P\!\left(\left.\sigma_{k-1}^+ = \sigma \, \right| A_{k,j}\right) \cdot \\
& \quad \quad \quad \, \, \P\!\left(\left.\sigma_k^+ < g\!\left(\sigma_{k-1}^+\right) - \gamma(\epsilon) \, \right|\sigma_{k-1}^+ = \sigma,A_{k,j}\right) \!\Big) \\
& \leq \exp\!\left(-2 L_k \gamma(\epsilon)^2 \right) \sum_{\sigma \geq \hat{\sigma} - \epsilon}{\P\!\left(\left.\sigma_{k-1}^+ = \sigma \, \right| A_{k,j}\right)} 
\end{align*}
or equivalently:
\begin{align*}
& \P\!\left(\left.\sigma_k^+ < g\!\left(\sigma_{k-1}^+\right) - \gamma(\epsilon), \sigma_{k-1}^+ \geq \hat{\sigma} - \epsilon \, \right|A_{k,j}\right) \\
& \quad \quad \quad \leq \exp\!\left(-2 L_k \gamma(\epsilon)^2 \right) \P\!\left(\left.\sigma_{k-1}^+ \geq \hat{\sigma} - \epsilon \, \right| A_{k,j}\right) \\
\Leftrightarrow \quad & \P\!\left(\left.\sigma_k^+ < g\!\left(\sigma_{k-1}^+\right) - \gamma(\epsilon) \, \right| \sigma_{k-1}^+ \geq \hat{\sigma} - \epsilon, A_{k,j}\right) \\
& \quad \quad \quad \leq \exp\!\left(-2 L_k \gamma(\epsilon)^2 \right) .
\end{align*}
Finally, notice that $\sigma_k^+ < \hat{\sigma} - \epsilon = g(\hat{\sigma} - \epsilon) - \gamma(\epsilon)$ implies that $\sigma_k^+ < g(\sigma_{k-1}^+) - \gamma(\epsilon)$ when $\sigma_{k-1}^+ \geq \hat{\sigma} - \epsilon$ (since $g$ is non-decreasing and $g(\sigma_{k-1}^+) \geq g(\hat{\sigma} - \epsilon)$). This produces: 
$$ \P\!\left(\left.\sigma_k^+ < \hat{\sigma} - \epsilon \, \right| \sigma_{k-1}^+ \geq \hat{\sigma} - \epsilon, A_{k,j}\right) \leq \exp\!\left(-2 L_k \gamma(\epsilon)^2 \right) $$
which in turn establishes \eqref{Eq: Stability whp}.

Now fix any $\tau > 0$, and choose a sufficiently large value $K = K(\epsilon,\tau) \in \N$ (that depends on $\epsilon$ and $\tau$) such that:
\begin{equation}
\label{Eq: Bound on Tail of Sum}
\sum_{m = K+1}^{\infty}{\exp\!\left(-2 L_m \gamma(\epsilon)^2 \right)} \leq \tau \, . 
\end{equation}
Note that such $K$ exists because $\sum_{m = 1}^{\infty}{1/m^2} = \pi^2 /6 < +\infty$, and for all sufficiently large $m$ (depending on $\delta$ and $d$), we have: 
\begin{equation}
\label{Eq:Relax Condition on L_m}
\exp\!\left(-2 L_m \gamma(\epsilon)^2\right) \leq \frac{1}{m^2} \quad \Leftrightarrow \quad L_m \geq \frac{\log(m)}{\gamma(\epsilon)^2} \, . 
\end{equation}
In \eqref{Eq:Relax Condition on L_m}, we use the assumption that $L_m \geq C(\delta,d) \log(m)$ for all sufficiently large $m$ (depending on $\delta$ and $d$), where we define the constant $C(\delta,d)$ as:
\begin{equation}
\label{Eq: Large Constant before Log}
C(\delta,d) \triangleq \frac{1}{\gamma(\epsilon(\delta,d))^2} > 0 \, .
\end{equation}
Using the continuity of probability measures, observe that:
\begin{align*}
& \P\!\left(\left. \bigcap_{k > K}{\left\{\sigma_k^+ \geq \hat{\sigma} - \epsilon\right\}} \, \right| \sigma_K^+ \geq \hat{\sigma} - \epsilon, \sigma_K^- \leq 1-\hat{\sigma} + \epsilon \right) \\
& \quad \quad \quad \quad \quad = \prod_{k > K}{\P\!\left(\left. \sigma_k^+ \geq \hat{\sigma} - \epsilon \, \right| \sigma_{k-1}^+ \geq \hat{\sigma} - \epsilon,A_{k,K}\right)} \\
& \quad \quad \quad \quad \quad \geq \prod_{k > K}{1 - \exp\!\left(-2 L_k \gamma(\epsilon)^2\right)} \\
& \quad \quad \quad \quad \quad \geq 1 - \sum_{k > K}{\exp\!\left(-2 L_k \gamma(\epsilon)^2\right)} \\
& \quad \quad \quad \quad \quad \geq 1 - \tau
\end{align*}
where the first inequality follows from \eqref{Eq: Stability whp}, the second inequality is straightforward to establish using induction, and the final inequality follows from \eqref{Eq: Bound on Tail of Sum}. Therefore, we have for any $k > K$:
\begin{equation}
\label{Eq: Stability with Extra Conditioning}
\P\!\left(\left. \sigma_k^+ \geq \hat{\sigma} - \epsilon \, \right| \sigma_K^+ \geq \hat{\sigma} - \epsilon, \sigma_K^- \leq 1-\hat{\sigma} + \epsilon \right) \geq 1 - \tau \, . 
\end{equation}
Likewise, we can also prove mutatis mutandis that for any $k > K$:
\begin{equation}
\label{Eq: Stability with Extra Conditioning 2}
\P\!\left(\left. \sigma_k^- \leq 1-\hat{\sigma} + \epsilon \, \right| \sigma_K^+ \geq \hat{\sigma} - \epsilon, \sigma_K^- \leq 1-\hat{\sigma} + \epsilon \right) \geq 1 - \tau 
\end{equation}
where the choices of $\epsilon$, $\tau$, and $K$ in \eqref{Eq: Stability with Extra Conditioning 2} are the same as those in \eqref{Eq: Stability with Extra Conditioning} without loss of generality.

We need to show that $\limsup_{k \rightarrow \infty}{\P(\hat{S}_{k} \neq \sigma_0)} < \frac{1}{2}$, or equivalently, that there exists $\lambda > 0$ such that for all sufficiently large $k \in \N$:
\begin{align*}
\frac{1 - \lambda}{2} & \geq \P\!\left(\hat{S}_k \neq \sigma_0\right) = \frac{1}{2} \,\P\!\left(\left.\hat{S}_k \neq \sigma_0 \, \right| \sigma_0 = 1\right) \\
& \quad \quad \quad \quad \quad \quad \quad \enspace \,\, + \frac{1}{2} \, \P\!\left(\left.\hat{S}_k \neq \sigma_0 \, \right| \sigma_0 = 0\right) \\
\Leftrightarrow \quad 1 - \lambda & \geq \P\!\left(\left.\sigma_k < \frac{1}{2} \, \right| \sigma_0 = 1\right) + \P\!\left(\left.\sigma_k \geq \frac{1}{2} \, \right| \sigma_0 = 0\right) \\
\Leftrightarrow \quad \lambda & \leq \P\!\left(\sigma_k^+ \geq \frac{1}{2}\right) - \P\!\left(\sigma_k^- \geq \frac{1}{2}\right) .
\end{align*}
To this end, let $E = \big\{\sigma_K^+ \geq \hat{\sigma} - \epsilon, \, \sigma_K^- \leq 1-\hat{\sigma} + \epsilon\big\}$, and observe that for all $k > K$:
\begin{align*}
& \P\!\left(\sigma_k^+ \geq \frac{1}{2}\right) - \P\!\left(\sigma_k^- \geq \frac{1}{2}\right) \\
& = \E\!\left[\I\!\left\{\sigma_k^+ \geq \frac{1}{2}\right\} - \I\!\left\{\sigma_k^- \geq \frac{1}{2}\right\} \right] \\
& \geq \E\!\left[\left(\I\!\left\{\sigma_k^+ \geq \frac{1}{2}\right\} - \I\!\left\{\sigma_k^- \geq \frac{1}{2}\right\}\right)\I\!\left\{E\right\}\right] \\
& = \E\!\left[\left. \I\!\left\{\sigma_k^+ \geq \frac{1}{2}\right\} - \I\!\left\{\sigma_k^- \geq \frac{1}{2}\right\} \right| E \right] \P(E) \\
& = \left(\P\!\left(\left. \sigma_k^+ \geq \frac{1}{2} \, \right| E \right) - \P\!\left(\left. \sigma_k^- \geq \frac{1}{2} \, \right| E \right)\right) \P(E) \\
& \geq \left(\P\!\left(\left. \sigma_k^+ \geq \hat{\sigma} - \epsilon \, \right| E \right) - \P\!\left(\left. \sigma_k^- > 1-\hat{\sigma} + \epsilon \, \right| E \right)\right) \P(E) \\
& \geq (1 - 2\tau) \P(E) \triangleq \lambda > 0
\end{align*}
where the first inequality holds because $\I\big\{\sigma_{k}^+ \geq \frac{1}{2}\big\} - \I\big\{\sigma_{k}^- \geq \frac{1}{2}\big\} \geq 0$ almost surely due to the monotonicity (property 3) of the Markovian coupling $\{(\sigma^+_k,\sigma^-_k) : k \in \N\}$, the second inequality holds because $1 - \hat{\sigma} + \epsilon < \frac{1}{2} < \hat{\sigma} - \epsilon$ (since $\epsilon > 0$ is small), and the final inequality follows from \eqref{Eq: Stability with Extra Conditioning} and \eqref{Eq: Stability with Extra Conditioning 2}. This completes the proof for the $\delta \in (0,\delta_{\mathsf{maj}})$ regime. 

\textbf{Part 2:} We next prove that $\delta \in \big(\delta_{\mathsf{maj}},\frac{1}{2}\big)$ implies \eqref{Eq:Strong TV Reconstruction Impossible}. First, notice that conditioned on any realization of the random DAG $G$, we have $X^+_{k,j} \geq X^-_{k,j}$ almost surely for every $k \in \N$ and $j \in [L_k]$ (by construction of our coupling). Hence, conditioned on $G$, we obtain:
\begin{align*}
\left\|P_{X_k|G}^+ - P_{X_k|G}^-\right\|_{\mathsf{TV}} & \leq \P\!\left(X^+_k \neq X^-_k \middle| G\right) \\
& = \P\!\left(\exists j \in [L_k], \, X^+_{k,j} \neq X^-_{k,j} \middle| G\right) \\
& \leq \sum_{j = 0}^{L_k - 1}{\P\!\left(X^+_{k,j} \neq X^-_{k,j} \middle| G\right)} \\
& = \E\!\left[\sum_{j = 0}^{L_k - 1}{X^+_{k,j} - X^-_{k,j}}\middle| G\right] \\
& = L_k \, \E\!\left[\sigma^+_{k} - \sigma^-_{k} \middle| G\right] 
\end{align*}
where the first inequality follows from Dobrushin's maximal coupling representation of TV distance \cite[Chapter 4.2]{LevinPeresWilmer2009}, the third inequality follows from the union bound, and the fourth equality holds because $\P\big(X^+_{k,j} \neq X^-_{k,j} \big| G\big) = \P\big(X^+_{k,j} - X^-_{k,j} = 1 \big| G\big) = \E\big[X^+_{k,j} - X^-_{k,j} \big| G\big]$ due to the monotonicity of our coupling. Then, taking expectations with respect to $G$ yields:
\begin{equation} 
\label{Eq: Initial TV Bound}
\E\!\left[\left\|P_{X_k|G}^+ - P_{X_k|G}^-\right\|_{\mathsf{TV}}\right] \leq L_k \, \E\!\left[\sigma^+_k - \sigma^-_k \right] .
\end{equation}

We can bound $\E\big[\sigma^+_k - \sigma^-_k\big]$ as follows. Firstly, we use the Lipschitz continuity of $g$ (with Lipschitz constant $D(\delta,d)$) and the monotonicity of our coupling to get:
\begin{align*}
0 \leq \E\!\left[\left.\sigma^+_k - \sigma^-_k\right|\sigma^+_{k-1},\sigma^-_{k-1}\right] & = g\!\left(\sigma^+_{k-1}\right) - g\!\left(\sigma^-_{k-1}\right) \\
& \leq D(\delta,d) \left(\sigma^+_{k-1} - \sigma^-_{k-1}\right) . 
\end{align*}
Then, we can take expectations with respect to $\big(\sigma^+_{k-1},\sigma^-_{k-1}\big)$ on both sides of this inequality (and use the tower property on the left hand side) to obtain:
$$ 0 \leq \E\!\left[\sigma^+_k - \sigma^-_k\right] \leq D(\delta,d) \, \E\!\left[\sigma^+_{k-1} - \sigma^-_{k-1}\right]  . $$
Therefore, we recursively have:
$$ 0 \leq \E\!\left[\sigma^+_k - \sigma^-_k\right] \leq D(\delta,d)^{k} $$
where we use the fact that $\E\big[\sigma^+_0 - \sigma^-_0\big] = 1$. Using \eqref{Eq: Initial TV Bound} with this bound, we get:
$$ \E\!\left[\left\|P_{X_k|G}^+ - P_{X_k|G}^-\right\|_{\mathsf{TV}}\right] \leq L_k D(\delta,d)^{k} $$
where letting $k \rightarrow \infty$ yields:
\begin{equation}
\label{Eq: Auxiliary Convergence Condition}
\lim_{k \rightarrow \infty}{\E\!\left[\left\|P_{X_k|G}^+ - P_{X_k|G}^-\right\|_{\mathsf{TV}}\right]} = 0 
\end{equation}
because $L_k = o(D(\delta,d)^{-k})$ by assumption. (It is worth mentioning that although $L_k = o(D(\delta,d)^{-k})$ in this regime, it can diverge to infinity because the Lipschitz constant $D(\delta,d) < 1$.) 

Finally, observe that $\big\|P_{X_k|G}^+ - P_{X_k|G}^-\big\|_{\mathsf{TV}} \in [0,1]$ forms a non-increasing sequence in $k$ for every realization of the random DAG $G$ (since the sequence $\{X_k : k \in \N\}$ forms a Markov chain given $G$, and the data processing inequality for TV distance yields the desired monotonicity). Hence, the pointwise limit (over realizations of $G$) random variable, $\lim_{k \rightarrow \infty}{\big\|P_{X_k|G}^+ - P_{X_k|G}^-\big\|_{\mathsf{TV}}} \in [0,1]$, has mean $\E\big[\lim_{k \rightarrow \infty}{\big\|P_{X_k|G}^+ - P_{X_k|G}^-\big\|_{\mathsf{TV}}}\big] = 0$ due to \eqref{Eq: Auxiliary Convergence Condition} and the bounded convergence theorem. Since a non-negative random variable that has zero mean must be equal to zero almost surely, we have \eqref{Eq:Strong TV Reconstruction Impossible}:
$$ \lim_{k \rightarrow \infty}{\left\|P_{X_k|G}^+ - P_{X_k|G}^-\right\|_{\mathsf{TV}}} = 0 \quad G\text{-}a.s. $$
This completes the proof.
\end{proof}

\renewcommand{\proofname}{Proof}

Finally, the next proposition portrays that the Markov chain $\{\sigma_k : k \in \N\}$ converges almost surely when $\delta \in \big(\delta_{\mathsf{maj}},\frac{1}{2}\big)$, $L_k = \omega(\log(k))$, and all processing functions are majority.

\begin{proposition}[Majority Random DAG Model Almost Sure Convergence]
\label{Prop: Majority Grid Almost Sure Convergence}
If $\delta \in \big(\delta_{\mathsf{maj}},\frac{1}{2}\big)$ and $L_k = \omega(\log(k))$, then $\lim_{k \rightarrow \infty}{\sigma_k} = \frac{1}{2}$ almost surely.
\end{proposition}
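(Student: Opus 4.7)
The plan is to combine a concentration bound on the one-step empirical update with the contraction property of $g$ around its unique fixed point $\tfrac{1}{2}$. Recall from \eqref{Eq:Conditional Expectation} that, conditional on $\sigma_{k-1}$, the quantity $L_k \sigma_k$ is $\mathsf{binomial}(L_k, g(\sigma_{k-1}))$; hence Hoeffding's inequality yields
$$ \P\!\left(\left| \sigma_k - g(\sigma_{k-1}) \right| > \epsilon_k \, \middle| \, \sigma_{k-1} \right) \leq 2 \exp\!\left(-2 L_k \epsilon_k^2\right) $$
for any $\epsilon_k > 0$. Setting $\epsilon_k \triangleq \sqrt{2 \log(k)/L_k}$ (for $k \geq 2$), the right-hand side becomes $2/k^4$, which is summable. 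Because $L_k = \omega(\log(k))$, this same choice forces $\epsilon_k \to 0$. The Borel--Cantelli lemma then ensures that, almost surely, $\left|\sigma_k - g(\sigma_{k-1})\right| \leq \epsilon_k$ for all sufficiently large $k$.

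Next, I would exploit the global Lipschitz bound on $g$ derived earlier in this section. Since $\delta \in \big(\delta_{\mathsf{maj}},\tfrac{1}{2}\big)$, the Lipschitz constant satisfies $D(\delta,d) = g^{\prime}\!\big(\tfrac{1}{2}\big) < 1$, and $g\big(\tfrac{1}{2}\big) = \tfrac{1}{2}$. Thus, on the almost sure event above, for all sufficiently large $k$,
$$ \left|\sigma_k - \tfrac{1}{2}\right| \leq \left|g(\sigma_{k-1}) - g\!\left(\tfrac{1}{2}\right)\right| + \epsilon_k \leq D(\delta,d) \left|\sigma_{k-1} - \tfrac{1}{2}\right| + \epsilon_k. $$
Iterating this linear recursion with contraction factor $D(\delta,d) < 1$ and vanishing noise $\epsilon_k \to 0$ forces $\sigma_k \to \tfrac{1}{2}$ almost surely: for any $\eta > 0$, one chooses $K$ large enough that $\epsilon_k \leq \eta(1 - D(\delta,d))$ for all $k \geq K$, and unrolling the recursion yields $\limsup_{k \to \infty}\left|\sigma_k - \tfrac{1}{2}\right| \leq \eta$. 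Letting $\eta \downarrow 0$ gives the desired almost sure convergence.

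The main obstacle is the simultaneous balancing of two competing requirements on $\epsilon_k$: it must vanish (so the deterministic contraction eventually dominates) while also making the Hoeffding deviation probabilities Borel--Cantelli summable. The strengthened growth assumption $L_k = \omega(\log(k))$, in place of the $\Omega(\log(k))$ hypothesis used in Theorem \ref{Thm:Phase Transition in Random Grid with Majority Rule Processing}, is precisely what permits the simple choice $\epsilon_k = \sqrt{2\log(k)/L_k}$ to satisfy both conditions at once; under merely $\Omega(\log(k))$ one can extract a random subsequence on which $\sigma_k$ is close to $\tfrac{1}{2}$, but not pin down an almost sure limit. Beyond this balancing step, the argument reduces to a routine contraction-with-additive-noise analysis, and no further structural properties of the random DAG model are needed besides the monotonicity of $g$ and its sub-unit Lipschitz constant in the regime $\delta > \delta_{\mathsf{maj}}$.
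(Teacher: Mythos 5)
Your proposal is correct and follows essentially the same route as the paper's proof in Appendix F: Hoeffding's inequality on the conditional binomial update with $\epsilon_k \asymp \sqrt{\log(k)/L_k}$, a summability/Borel--Cantelli step (the paper phrases it as a union bound valid with probability $1-\tau$ for each $\tau>0$), and the sub-unit Lipschitz constant $D(\delta,d)<1$ of $g$ to absorb the accumulated errors. The only cosmetic difference is that you contract $\bigl|\sigma_k - \tfrac{1}{2}\bigr|$ directly against the fixed point, whereas the paper bounds $\bigl|\sigma_k - g^{(k-K)}(\sigma_K)\bigr|$ and then invokes the fixed point theorem; both yield the same conclusion.
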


Proposition \ref{Prop: Majority Grid Almost Sure Convergence} is proved in Appendix \ref{Proof of Proposition Majority Grid Almost Sure Convergence}. It can be construed as a ``weak'' impossibility result since it demonstrates that the average number of $1$'s tends to $\frac{1}{2}$ in the $\delta \in \big(\delta_{\mathsf{maj}},\frac{1}{2}\big)$ regime regardless of the initial state of the Markov chain $\{\sigma_k : k \in \N\}$.

\section{Analysis of AND-OR Rule Processing in Random DAG Model}
\label{Analysis of And-Or Rule Processing in Random Grid}

In this section, we prove Theorem \ref{Thm:Phase Transition in Random Grid with And-Or Rule Processing}. As before, we begin by making some pertinent observations. Recall that we have a random DAG model with $d = 2$, and all Boolean functions at even levels are the AND rule, and all Boolean functions at odd levels are the OR rule, i.e. $f_{k}(x_1,x_2) = x_1 \land x_2$ for every $k \in 2\N\backslash\!\{0\}$, and $f_{k}(x_1,x_2) = x_1 \lor x_2$ for every $k \in \N\backslash 2\N$. Suppose we are given that $\sigma_{k-1} = \sigma$ for any $k \in \N\backslash\!\{0\}$. Then, for every $j \in [L_{k}]$:
\begin{equation}
X_{k,j} = \left\{ 
\begin{array}{lcl}
\Ber(\sigma * \delta) \land \Ber(\sigma * \delta) & , & k \text{ even} \\ 
\Ber(\sigma * \delta) \lor \Ber(\sigma * \delta) & , & k \text{ odd}
\end{array} 
\right.
\end{equation}
for two i.i.d. Bernoulli random variables. Since we have:
\begin{align}
\P(X_{k,j} = 1|\sigma_{k-1} = \sigma) & = \left\{
\begin{array}{lcl}
(\sigma * \delta)^2 & , & k \text{ even} \\
1 - (1 - \sigma * \delta)^2 & , & k \text{ odd}
\end{array}
\right. \\
& = \E[\sigma_k|\sigma_{k-1} = \sigma] \, ,
\end{align}
$X_{k,j}$ are i.i.d. $\Ber(g_{k \, (\mathsf{mod} \, 2)}(\sigma))$ for $j \in [L_{k}]$, and $L_k \sigma_k \sim \mathsf{binomial}(L_k,g_{k \, (\mathsf{mod} \, 2)}(\sigma))$, where we define $g_0:[0,1] \rightarrow [0,1]$ as $g_0(\sigma) \triangleq (\sigma * \delta)^2$, and $g_1:[0,1] \rightarrow [0,1]$ as $g_1(\sigma) \triangleq 1 - (1 - \sigma * \delta)^2 = 2(\sigma * \delta) - (\sigma * \delta)^2$. The derivatives of $g_0$ and $g_1$ are:
\begin{align}
\label{Eq: g_0 Derivative}
g_0^{\prime}(\sigma) & = 2(1-2\delta)(\sigma * \delta) \geq 0 \, , \\
g_1^{\prime}(\sigma) & = 2(1 - 2\delta)(1-\sigma * \delta) \geq 0 \, . 
\label{Eq: g_1 Derivative}
\end{align}
Consider the composition of $g_0$ and $g_1$, denoted $g \triangleq g_0 \circ g_1:[0,1] \rightarrow [0,1]$, $g(\sigma) = \big(\big(2(\sigma * \delta) - (\sigma * \delta)^2\big) * \delta\big)^{2}$, which has derivative $g^{\prime}:[0,1] \rightarrow \R_{+}$ given by:
\begin{align}
 g^{\prime}(\sigma) & = g_0^{\prime}(g_1(\sigma)) g_1^{\prime}(\sigma) \nonumber \\
& = 4(1-2\delta)^2 (g_1(\sigma) * \delta) (1-\sigma * \delta) \geq 0 \, .
\end{align}
This is a cubic function of $\sigma$ with maximum value:
\begin{align}
D(\delta) & \triangleq \max_{\sigma \in [0,1]}{g^{\prime}(\sigma)} \\
& = \left\{
\begin{array}{lcl}
\!\!\! g^{\prime}\!\left(\frac{1-\delta}{1-2\delta} - \sqrt{\frac{1-\delta}{3(1-2\delta)^3}}\right) & , & \delta \in \left(0,\frac{9-\sqrt{33}}{12}\right] \\
\!\!\! g^{\prime}(0) & , & \delta \in \left(\frac{9-\sqrt{33}}{12},\frac{1}{2}\right)
\end{array} \right. \\
& = \left\{
\begin{array}{lcl}
\!\!\! \left(\frac{4(1-\delta)(1-2\delta)}{3}\right)^{\frac{3}{2}} & \!\!\!\!\!, & \!\!\!\!\delta \!\in\! \left(0,\frac{9-\sqrt{33}}{12}\right] \\
\!\!\! 4\delta(1-\delta)^2 (1-2\delta)^2 (3-2\delta) < 1 & \!\!\!\!\!, & \!\!\!\!\delta \!\in\! \left(\frac{9-\sqrt{33}}{12},\frac{1}{2}\right)
\end{array} \right.
\label{Eq: Lipschitz constant}
\end{align}
which follows from standard calculus and algebraic manipulations, and Wolfram Mathematica computations. Hence, $D(\delta)$ in \eqref{Eq: Lipschitz constant} is the Lipschitz constant of $g$ over $[0,1]$. Since $4(1-\delta)(1-2\delta)/3 \in (0,1) \Leftrightarrow \delta \in ((3-\sqrt{7})/4,(9 - \sqrt{33})/12]$, $D(\delta) < 1$ if and only if $\delta \in ((3-\sqrt{7})/4,1/2)$. Moreover, $D(\delta) > 1$ if and only if $\delta \in (0,(3-\sqrt{7})/4)$ (and $D(\delta) = 1$ when $\delta = (3-\sqrt{7})/4$).

We next summarize the fixed point structure of $g$. Solving the equation $g(\sigma) = \sigma$ in Wolfram Mathematica produces:
\begin{align}
\sigma & = \frac{1-6\delta + 4\delta^2 \pm \sqrt{1-12\delta + 8\delta^2}}{2(1-2\delta)^2} \\
\text{or} \quad \sigma & = \frac{3-6\delta + 4\delta^2 \pm \sqrt{5-12\delta + 8\delta^2}}{2(1-2\delta)^2}
\end{align}
where the first pair is real when $\delta \in [0,(3-\sqrt{7})/4]$, and the second pair is always real. From these solutions, it is straightforward to verify that the only fixed points of $g$ in the interval $[0,1]$ are:
\begin{align}
t_0 & \triangleq \frac{2(1-\delta)(1-2\delta) - 1 - \sqrt{4(1-\delta)(1-2\delta) - 3}}{2(1-2\delta)^2} \\
t_1 & \triangleq \frac{2(1-\delta)(1-2\delta) - 1 + \sqrt{4(1-\delta)(1-2\delta) - 3}}{2(1-2\delta)^2} \\
t & \triangleq \frac{2(1-\delta)(1-2\delta) + 1 - \sqrt{4(1-\delta)(1-2\delta) + 1}}{2(1-2\delta)^2} 
\label{Eq: Middle Fixed Point}
\end{align}
where $t_0$ and $t_1$ are valid when $\delta \in [0,(3-\sqrt{7})/4]$, the fixed points satisfy $t_0 = t_1 = t$ when $\delta = (3-\sqrt{7})/4$, and $t_0 = 0, \, t_1 = 1$ when $\delta = 0$. Furthermore, observe that for $\delta \in (0,(3-\sqrt{7})/4)$:
\begin{align}
t_1 - t & = \frac{\sqrt{a} + \sqrt{a + 4} - 2}{2(1-2\delta)^2} > 0 \\
\text{and} \quad t - t_0 & = \frac{\sqrt{a} - \sqrt{a+4} + 2}{2(1-2\delta)^2} > 0 
\end{align}
where $a = 4(1-\delta)(1-2\delta) - 3 > 0$, $t_1 - t > 0$ because $x \mapsto \sqrt{x}$ is strictly increasing ($\Rightarrow \sqrt{a} + \sqrt{a + 4} > 2$), and $t - t_0 > 0$ because $x \mapsto \sqrt{x}$ is strictly subadditive ($\Rightarrow \sqrt{a} + 2 > \sqrt{a + 4}$). Hence, $0 < t_0 < t < t_1 < 1$ when $\delta \in (0,(3-\sqrt{7})/4)$.

Therefore, there are again two regimes of interest. Define the critical threshold $\delta_{\mathsf{andor}}$ as in \eqref{Eq: NAND threshold}. In the regime $\delta \in (0,\delta_{\mathsf{andor}})$, $g$ has three fixed points $0 < t_0 < t < t_1 < 1$, and $D(\delta) > 1$. In contrast, in the regime $\delta \in \big(\delta_{\mathsf{andor}},\frac{1}{2}\big)$, $g$ has only one fixed point at $t \in (0,1)$, and $D(\delta) < 1$. 

We now prove Theorem \ref{Thm:Phase Transition in Random Grid with And-Or Rule Processing}. (The proof closely resembles the proof of Theorem \ref{Thm:Phase Transition in Random Grid with Majority Rule Processing} in section \ref{Analysis of Majority Rule Processing in Random Grid}.)

\renewcommand{\proofname}{Proof of Theorem \ref{Thm:Phase Transition in Random Grid with And-Or Rule Processing}}

\begin{proof}
As in the proof of Theorem \ref{Thm:Phase Transition in Random Grid with Majority Rule Processing}, we begin by constructing a monotone Markovian coupling $\{(X^-_k,X^+_k) : k \in \N\}$ between the Markov chains $\{X^+_k : k \in \N\}$ and $\{X^-_k : k \in \N\}$ (which are versions of the Markov chain $\{X_k : k \in \N\}$ initialized at $X^+_0 = 1$ and $X^-_0 = 0$, respectively), and this coupling induces a monotone Markovian coupling $\{(\sigma^+_k,\sigma^-_k) : k \in \N\}$ between the Markov chains $\{\sigma^+_k : k \in \N\}$ and $\{\sigma^-_k : k \in \N\}$ (which are versions of the Markov chain $\{\sigma_k : k \in \N\}$ initialized at $\sigma^+_0 = 1$ and $\sigma^-_0 = 0$, respectively). This monotone Markovian coupling satisfies the following properties:
\begin{enumerate}
\item The ``marginal'' Markov chains are $\{X^+_k : k \in \N\}$ and $\{X^-_k : k \in \N\}$.
\item For every $k \in \N$, $X_{k+1}^+$ is conditionally independent of $X_{k}^-$ given $X_k^{+}$, and $X_{k+1}^-$ is conditionally independent of $X_{k}^+$ given $X_k^{-}$.
\item For every $j > k \geq 1$, $\sigma_{j}^+$ is conditionally independent of $\sigma^-_0,\dots,\sigma^-_{k},\sigma^+_0,\dots,\sigma^+_{k-1}$ given $\sigma^+_{k}$, and $\sigma_{j}^-$ is conditionally independent of $\sigma^+_0,\dots,\sigma^+_{k},\sigma^-_0,\dots,\sigma^-_{k-1}$ given $\sigma^-_{k}$.  
\item For every $k \in \N$ and every $j \in [L_{k}]$, $X_{k,j}^+ \geq X_{k,j}^-$ almost surely. 
\item Due to the previous property, $\sigma^+_{k} \geq \sigma^-_{k}$ almost surely for every $k \in \N$.
\end{enumerate} 
As before, the fourth property above holds because $1 = X_{0,0}^+ \geq X_{0,0}^- = 0$ is true by assumption, each edge BSC preserves monotonicity, and the AND and OR processing functions are symmetric and monotone non-decreasing. 

\textbf{Part 1:} We first prove that $\delta \in (0,\delta_{\mathsf{andor}})$ implies that $\limsup_{k \rightarrow \infty}{\P(\hat{T}_{2k} \neq \sigma_0)} < \frac{1}{2}$. To this end, we start by establishing that there exists $\epsilon = \epsilon(\delta) > 0$ (that depends on $\delta$) such that for all $k \in \N\backslash\!\{0\}$:
\begin{equation}
\label{Eq: Stability whp 2}
\begin{aligned}
& \P\!\left(\left.\sigma_{2k}^+ \geq t_1 - \epsilon \, \right| \sigma_{2k-2}^+ \geq t_1 - \epsilon,A_{k,j}\right) \\
& \quad \quad \quad \quad \quad \geq 1 - 4 \exp\!\left(-\frac{(L_{2k} \wedge L_{2k-1}) \gamma(\epsilon)^2}{8}\right)
\end{aligned}
\end{equation}
where $\wedge$ denotes the minimum operation (not to be confused with the AND operation), $\gamma(\epsilon) \triangleq g(t_1 - \epsilon) - (t_1 - \epsilon) > 0$, and $A_{k,j}$ is the non-zero probability event defined as:
$$ A_{k,j} \!\triangleq\! \left\{ 
\begin{array}{lcl}
\!\!\!\{\sigma_{2j}^- \leq t_0 + \epsilon\} & \!\!\!\!\!\!\!\!, & \!\!\!\!\!\!\! 0 \leq j = k-1 \\
\!\!\!\{\sigma_{2k-4}^+ \geq t_1 - \epsilon,\sigma_{2k-6}^+ \geq t_1 - \epsilon, & & \\
\!\!\dots,\sigma_{2j}^+ \geq t_1 - \epsilon\} \!\cap\! \{\sigma_{2j}^- \leq t_0 + \epsilon\} & \!\!\!\!\!\!\!\!, & \!\!\!\!\!\!\! 0 \leq j \leq k-2
\end{array} \right.
$$ 
for any $0 \leq j < k$. Since $g^{\prime}(t_1) = 4\delta(3-2\delta) < 1$ and $g(t_1) = t_1$, $g(t_1 - \epsilon) > t_1 - \epsilon$ for sufficiently small $\epsilon > 0$. Fix any such $\epsilon > 0$ (which depends on $\delta$ because $g$ depends on $\delta$) such that $\gamma(\epsilon) > 0$. Observe that for every $k \in \N\backslash\!\{0\}$ and $\xi > 0$, we have:
\begin{align}
& \P(|\sigma_{2k} - g(\sigma_{2k-2})| > \xi \, | \, \sigma_{2k-2} = \sigma) \nonumber \\
& \leq \P\Big(|\sigma_{2k} - g_0(\sigma_{2k-1})| \nonumber \\
& \quad \quad \, \, + |g_0(\sigma_{2k-1}) - g_0(g_1(\sigma_{2k-2}))| > \xi \, \Big| \, \sigma_{2k-2} = \sigma\Big) \nonumber \\
& \leq \P\Big(|\sigma_{2k} - g_0(\sigma_{2k-1})| \nonumber \\
& \quad \quad \, \, + g_0^{\prime}(1) |\sigma_{2k-1} - g_1(\sigma_{2k-2})| > \xi \, \Big| \, \sigma_{2k-2} = \sigma\Big) \nonumber \\
& \leq \P\bigg(\!\left\{|\sigma_{2k} - g_0(\sigma_{2k-1})| > \frac{\xi}{2}\right\} \nonumber \\
& \quad \quad \enspace \cup \left\{g_0^{\prime}(1)|\sigma_{2k-1} - g_1(\sigma_{2k-2})| > \frac{\xi}{2}\right\} \bigg| \, \sigma_{2k-2} = \sigma\bigg) \nonumber \\
& \leq \P\!\left(|\sigma_{2k} - g_0(\sigma_{2k-1})| > \frac{\xi}{2} \, \middle| \, \sigma_{2k-2} = \sigma\right) \nonumber \\
& \quad \, + \P\!\left(|\sigma_{2k-1} - g_1(\sigma_{2k-2})| > \frac{\xi}{2 g_0^{\prime}(1)} \, \middle| \, \sigma_{2k-2} = \sigma\right) \nonumber \\
& \leq \E\!\left[\P\!\left(|\sigma_{2k} - g_0(\sigma_{2k-1})| > \frac{\xi}{2} \, \middle| \, \sigma_{2k-1}\right) \middle| \, \sigma_{2k-2} = \sigma\right] \nonumber \\
& \quad \, + 2 \exp\!\left(-\frac{L_{2k-1} \xi^2}{8(1-\delta)^2 (1-2\delta)^2} \right) \nonumber \\
& \leq 2 \exp\!\left(-\frac{L_{2k} \xi^2}{2}\right) + 2 \exp\!\left(-\frac{L_{2k-1} \xi^2}{8(1-\delta)^2 (1-2\delta)^2} \right) \nonumber \\
& \leq 4 \exp\!\left(-\frac{(L_{2k} \wedge L_{2k-1}) \xi^2}{8}\right)
\label{Eq: Hoeffding Consequence 2}
\end{align} 
where the first inequality follows from the triangle inequality and the fact that $g = g_0 \circ g_1$, the second inequality holds because the Lipschitz constant of $g_0$ on $[0,1]$ is $\max_{\sigma \in [0,1]}{g_0^{\prime}(\sigma)} = g_0^{\prime}(1) = 2(1-\delta)(1-2\delta)$ using \eqref{Eq: g_0 Derivative}, the fourth inequality follows from the union bound, the fifth and sixth inequalities follow from the Markov property and Hoeffding's inequality (as well as the fact that $L_k \sigma_k \sim \mathsf{binomial}(L_k,g_{k \, (\mathsf{mod} \, 2)}(\sigma))$ given $\sigma_{k-1} = \sigma$), and the final inequality holds because $(1-\delta)^2 (1-2\delta)^2 \leq 1$. Hence, for any $k \in \N\backslash\!\{0\}$ and any $0 \leq j < k$, we have:
\begin{align*}
& \P\!\left(\sigma_{2k}^+ < g\!\left(\sigma_{2k-2}^+\right) - \gamma(\epsilon)\, \middle| \,\sigma_{2k-2}^+ = \sigma,A_{k,j}\right) \\
& \quad \quad \quad \quad = \P\!\left(\sigma_{2k} < g\!\left(\sigma_{2k-2}\right) - \gamma(\epsilon)\, \middle| \,\sigma_{2k-2} = \sigma\right) \\
& \quad \quad \quad \quad \leq \P\!\left(|\sigma_{2k} - g(\sigma_{2k-2})| > \gamma(\epsilon) \, \middle| \, \sigma_{2k-2} = \sigma\right) \\
& \quad \quad \quad \quad \leq 4 \exp\!\left(-\frac{(L_{2k} \wedge L_{2k-1}) \gamma(\epsilon)^2}{8}\right)
\end{align*} 
where the first equality follows from property 3 of the Markovian coupling, and the final inequality follows from \eqref{Eq: Hoeffding Consequence 2}. As shown in the proof of Theorem \ref{Thm:Phase Transition in Random Grid with Majority Rule Processing}, this produces:
\begin{align*}
& \P\!\left(\sigma_{2k}^+ < g\!\left(\sigma_{2k-2}^+\right) - \gamma(\epsilon) \, \middle| \, \sigma_{2k-2}^+ \geq t_1 - \epsilon, A_{k,j}\right) \\
& \quad \quad \quad \quad \quad \quad \quad \leq 4 \exp\!\left(-\frac{(L_{2k} \wedge L_{2k-1}) \gamma(\epsilon)^2}{8}\right) 
\end{align*}
which implies that:
\begin{align*}
& \P\!\left(\sigma_{2k}^+ < t_1 - \epsilon \, \middle| \, \sigma_{2k-2}^+ \geq t_1 - \epsilon, A_{k,j}\right) \\
& \quad \quad \quad \quad \quad \quad \quad \leq 4 \exp\!\left(-\frac{(L_{2k} \wedge L_{2k-1}) \gamma(\epsilon)^2}{8}\right)
\end{align*}
because $\sigma_{2k}^+ < t_1 - \epsilon = g(t_1 - \epsilon) - \gamma(\epsilon)$ implies that $\sigma_{2k}^+ < g(\sigma_{2k-2}^+) - \gamma(\epsilon)$ when $\sigma_{2k-2}^+ \geq t_1 - \epsilon$ (since $g$ is non-decreasing and $g(\sigma_{2k-2}^+) \geq g(t_1 - \epsilon)$). This proves \eqref{Eq: Stability whp 2}.

Now fix any $\tau > 0$, and choose a sufficiently large even integer $K = K(\epsilon,\tau) \in 2\N$ (that depends on $\epsilon$ and $\tau$) such that:
\begin{equation}
\label{Eq: Bound on Tail of Sum 2}
4 \sum_{m = \frac{K}{2}+1}^{\infty}{\exp\!\left(-\frac{(L_{2m} \wedge L_{2m-1}) \gamma(\epsilon)^2}{8}\right)} \leq \tau \, . 
\end{equation}
Note that such $K$ exists because $\sum_{m = 1}^{\infty}{1/(2m-1)^2} \leq 1 + \sum_{m = 2}^{\infty}{1/(2m-2)^2} = 1 + (\pi^2 /24) < +\infty$, and for sufficiently large $m$ (depending on $\delta$), we have: 
\begin{align}
\exp\!\left(-\frac{(L_{2m} \wedge L_{2m-1}) \gamma(\epsilon)^2}{8}\right) & \leq \frac{1}{(2m-1)^2} \nonumber \\
\Leftrightarrow \quad L_{2m} \wedge L_{2m-1} & \geq \frac{16 \log(2m-1)}{\gamma(\epsilon)^2} \, .
\label{Eq:Relax Condition on L_m 2}
\end{align}
As before, in \eqref{Eq:Relax Condition on L_m 2}, we utilize the assumption that $L_m \geq C(\delta) \log(m)$ for all sufficiently large $m$ (depending on $\delta$), where we define the constant $C(\delta)$ as:
\begin{equation}
\label{Eq: Large Constant before Log 2}
C(\delta) \triangleq \frac{16}{\gamma(\epsilon(\delta))^2} > 0 \, .
\end{equation}
Using the continuity of probability measures, observe that:
\begin{align*}
& \P\!\left(\left. \bigcap_{k > \frac{K}{2}}{\left\{\sigma_{2k}^+ \geq t_1 - \epsilon\right\}} \, \right| \sigma_{K}^+ \geq t_1 - \epsilon, \sigma_{K}^- \leq t_0 + \epsilon \right) \\
& \quad \quad \quad \quad = \prod_{k > \frac{K}{2}}{\P\!\left(\sigma_{2k}^+ \geq t_1 - \epsilon  \left| \, \sigma_{2k-2}^+ \geq t_1 - \epsilon,A_{k,\frac{K}{2}} \right.\right)} \\
& \quad \quad \quad \quad \geq \prod_{k > \frac{K}{2}}{1 - 4 \exp\!\left(- \frac{(L_{2k} \wedge L_{2k-1}) \gamma(\epsilon)^2}{8}\right)} \\
& \quad \quad \quad \quad \geq 1 - 4 \sum_{k > \frac{K}{2}}{\exp\!\left(- \frac{(L_{2k} \wedge L_{2k-1}) \gamma(\epsilon)^2}{8}\right)} \\
& \quad \quad \quad \quad \geq 1 - \tau
\end{align*}
where the first inequality follows from \eqref{Eq: Stability whp 2}, and the final inequality follows from \eqref{Eq: Bound on Tail of Sum 2}. Therefore, we have for any $k > \frac{K}{2}$:
\begin{equation}
\label{Eq: And-Or Stability with Extra Conditioning}
\P\!\left(\left. \sigma_{2k}^+ \geq t_1 - \epsilon \, \right| \sigma_K^+ \geq t_1 - \epsilon, \sigma_K^- \leq t_0 + \epsilon \right) \geq 1 - \tau \, . 
\end{equation}
Likewise, we can also prove mutatis mutandis that for any $k > \frac{K}{2}$:
\begin{equation}
\label{Eq: And-Or Stability with Extra Conditioning 2}
\P\!\left(\left. \sigma_{2k}^- \leq t_0 + \epsilon \, \right| \sigma_K^+ \geq t_1 - \epsilon, \sigma_K^- \leq t_0 + \epsilon \right) \geq 1 - \tau 
\end{equation}
where $\epsilon$, $\tau$, and $K$ in \eqref{Eq: And-Or Stability with Extra Conditioning 2} can be chosen to be the same as those in \eqref{Eq: And-Or Stability with Extra Conditioning} without loss of generality.

Finally, we let $E = \big\{\sigma_K^+ \geq t_1 - \epsilon, \, \sigma_K^- \leq t_0 + \epsilon\big\}$, and observe that for all $k > \frac{K}{2}$:
\begin{align*}
& \P\!\left(\sigma_{2k}^+ \geq t\right) - \P\!\left(\sigma_{2k}^- \geq t\right) \\
& \geq \E\!\left[\left(\I\!\left\{\sigma_{2k}^+ \geq t\right\} - \I\!\left\{\sigma_{2k}^- \geq t\right\}\right)\I\!\left\{E\right\}\right] \\
& = \left(\P\!\left(\left. \sigma_{2k}^+ \geq t \, \right| E \right) - \P\!\left(\left. \sigma_{2k}^- \geq t \, \right| E \right)\right) \P(E) \\
& \geq \left(\P\!\left(\left. \sigma_{2k}^+ \geq t_1 - \epsilon \, \right| E \right) - \P\!\left(\left. \sigma_{2k}^- > t_0 + \epsilon \, \right| E \right)\right) \P(E) \\
& \geq (1 - 2\tau) \P(E) > 0
\end{align*}
where the first inequality holds because $\I\big\{\sigma_{2k}^+ \geq t\big\} - \I\big\{\sigma_{2k}^- \geq t\big\} \geq 0$ almost surely due to the monotonicity (property 5) of our Markovian coupling, the second inequality holds because $t_0 + \epsilon < t < t_1 - \epsilon$ (since $\epsilon > 0$ is small), and the final inequality follows from \eqref{Eq: And-Or Stability with Extra Conditioning} and \eqref{Eq: And-Or Stability with Extra Conditioning 2}. As argued in the proof of Theorem \ref{Thm:Phase Transition in Random Grid with Majority Rule Processing}, this illustrates that $\limsup_{k \rightarrow \infty}{\P(\hat{T}_{2k} \neq \sigma_0)} < \frac{1}{2}$. 

\textbf{Part 2:} We next prove that $\delta \in \big(\delta_{\mathsf{andor}},\frac{1}{2}\big)$ implies \eqref{Eq:Strong TV Reconstruction Impossible}. Following the proof of Theorem \ref{Thm:Phase Transition in Random Grid with Majority Rule Processing}, we can show that:
\begin{equation} 
\label{Eq: Initial TV Bound 2}
\E\!\left[\left\|P_{X_{2k}|G}^+ - P_{X_{2k}|G}^-\right\|_{\mathsf{TV}}\right] \leq L_{2k} \, \E\!\left[\sigma^+_{2k} - \sigma^-_{2k}\right] .
\end{equation}
In order to bound $\E\big[\sigma^+_{2k} - \sigma^-_{2k}\big]$, we proceed as follows. Firstly, for any $k \in \N\backslash\!\{0\}$, we have:
\begin{align}
& \E\!\left[\left.\sigma^+_{2k} - \sigma^-_{2k}\right|\sigma^+_{2k-2},\sigma^-_{2k-2}\right] \nonumber \\
& \quad \quad = \E\!\left[\left.\E\!\left[\left.\sigma^+_{2k} - \sigma^-_{2k}\right|\sigma^+_{2k-1},\sigma^-_{2k-1}\right]\right|\sigma^+_{2k-2},\sigma^-_{2k-2}\right] \nonumber \\
& \quad \quad = \E\!\left[\left.g_0\!\left(\sigma^+_{2k-1}\right) - g_0\!\left(\sigma^-_{2k-1}\right)\right|\sigma^+_{2k-2},\sigma^-_{2k-2}\right] 
\label{Eq: g0 Difference}
\end{align}
where the first equality follows from the tower and Markov properties, and the second equality holds because $L_{2k} \sigma_{2k} \sim \mathsf{binomial}(L_{2k},g_{0}(\sigma))$ given $\sigma_{2k-1} = \sigma$. Then, recalling that $g_0(\sigma) = (\sigma * \delta)^2 = (1-2\delta)^2 \sigma^2 + 2 \delta (1-2\delta) \sigma + \delta^2$, we can compute:
\begin{align}
& \E\!\left[\left.g_0\!\left(\sigma^+_{2k-1}\right)\right|\sigma^+_{2k-2},\sigma^-_{2k-2}\right] \nonumber \\
& = \E\!\left[\left.g_0\!\left(\sigma^+_{2k-1}\right)\right|\sigma^+_{2k-2}\right] \nonumber \\
& = \E\!\left[\left.(1-2\delta)^2 \sigma^{+\quad 2}_{2k-1} + 2 \delta (1-2\delta) \sigma^+_{2k-1} + \delta^2 \right|\sigma^+_{2k-2}\right] \nonumber \\
& = (1-2\delta)^2 \! \left(\VAR\!\left(\left.\sigma^{+}_{2k-1} \right|\sigma^+_{2k-2}\right) + \E\!\left[\left.\sigma^{+}_{2k-1} \right|\sigma^+_{2k-2}\right]^2\right) \nonumber \\
& \quad \, + 2 \delta (1-2\delta) \E\!\left[\left.\sigma^{+}_{2k-1} \right|\sigma^+_{2k-2}\right] + \delta^2 \nonumber \\
& = (1-2\delta)^2 g_1\!\left(\sigma^+_{2k-2}\right)^2 + 2 \delta (1-2\delta) g_1\!\left(\sigma^+_{2k-2}\right) + \delta^2 \nonumber \\
& \quad \, + (1-2\delta)^2 \, \frac{g_1\!\left(\sigma^+_{2k-2}\right) \! \left(1 - g_1\!\left(\sigma^+_{2k-2}\right)\right)}{L_{2k-1}} \nonumber \\
& = g\!\left(\sigma^+_{2k-2}\right) + (1-2\delta)^2 \, \frac{g_1\!\left(\sigma^+_{2k-2}\right) \! \left(1 - g_1\!\left(\sigma^+_{2k-2}\right)\right)}{L_{2k-1}} 
\label{Eq: g0 Expectation}
\end{align}
where the first equality uses property 3 of the monotone Markovian coupling, and the fourth equality uses the fact that $L_{2k-1} \sigma_{2k-1} \sim \mathsf{binomial}(L_{2k-1},g_{1}(\sigma))$ given $\sigma_{2k-2} = \sigma$. Using \eqref{Eq: g0 Difference} and \eqref{Eq: g0 Expectation}, we get:
\begin{align}
& \E\!\left[\left.\sigma^+_{2k} - \sigma^-_{2k}\right|\sigma^+_{2k-2},\sigma^-_{2k-2}\right] \nonumber \\
& \quad \quad = g\!\left(\sigma^+_{2k-2}\right) - g\!\left(\sigma^-_{2k-2}\right) \nonumber \\
& \quad \quad \quad + (1-2\delta)^2 \Bigg(\frac{g_1\!\left(\sigma^+_{2k-2}\right) \! \left(1 - g_1\!\left(\sigma^+_{2k-2}\right)\right)}{L_{2k-1}} \nonumber \\
& \quad \quad \quad \quad \quad \quad \quad \quad \quad - \frac{g_1\!\left(\sigma^-_{2k-2}\right) \! \left(1 - g_1\!\left(\sigma^-_{2k-2}\right)\right)}{L_{2k-1}}\Bigg) \nonumber \\
& \quad \quad = g\!\left(\sigma^+_{2k-2}\right) - g\!\left(\sigma^-_{2k-2}\right) \nonumber \\
& \quad \quad \quad + (1-2\delta)^2 \Bigg(\frac{g_1\!\left(\sigma^+_{2k-2}\right) - g_1\!\left(\sigma^-_{2k-2}\right)}{L_{2k-1}} \nonumber \\
& \quad \quad \quad \quad \quad \quad \quad \quad \quad - \frac{g_1\!\left(\sigma^+_{2k-2}\right)^2 - g_1\!\left(\sigma^-_{2k-2}\right)^2}{L_{2k-1}}\Bigg) \nonumber \\
& \quad \quad \leq g\!\left(\sigma^+_{2k-2}\right) - g\!\left(\sigma^-_{2k-2}\right) \nonumber \\
& \quad \quad \quad + (1-2\delta)^2 \Bigg(\frac{g_1\!\left(\sigma^+_{2k-2}\right) - g_1\!\left(\sigma^-_{2k-2}\right)}{L_{2k-1}}\Bigg) \nonumber \\
& \quad \quad \leq \left(D(\delta) + \frac{2(1 - \delta)(1-2\delta)^3}{L_{2k-1}}\right) \! \left(\sigma^+_{2k-2} - \sigma^-_{2k-2}\right) \nonumber \\
& \quad \quad \leq \left(D(\delta) + \frac{2}{L_{2k-1}}\right) \! \left(\sigma^+_{2k-2} - \sigma^-_{2k-2}\right)
\label{Eq: Lipschitz expectation 2}
\end{align}
where the first inequality holds because $g_1(\sigma^+_{2k-2})^2 - g_1(\sigma^-_{2k-2})^2 \geq 0$ almost surely (since $g_1$ is non-negative and non-decreasing by \eqref{Eq: g_1 Derivative}, and $\sigma^+_{2k-2} \geq \sigma^-_{2k-2}$ almost surely by property 5 of the monotone Markovian coupling), the second inequality holds because $\sigma^+_{2k-2} \geq \sigma^-_{2k-2}$ almost surely and $g$ and $g_1$ have Lipschitz constants $D(\delta)$ and $\max_{\sigma \in [0,1]}{g_1^{\prime}(\sigma)} = 2(1 - \delta)(1 - 2\delta)$ respectively, and the final inequality holds because $(1 - \delta)(1-2\delta)^3 \leq 1$. Then, as in the proof of Theorem \ref{Thm:Phase Transition in Random Grid with Majority Rule Processing}, we can take expectations in \eqref{Eq: Lipschitz expectation 2} to obtain:
$$ 0 \leq \E\!\left[\sigma^+_{2k} - \sigma^-_{2k}\right] \leq \left(D(\delta) + \frac{2}{L_{2k-1}}\right) \E\!\left[\sigma^+_{2k-2} - \sigma^-_{2k-2}\right] $$
which recursively produces:
$$ 0 \leq \E\!\left[\sigma^+_{2k} - \sigma^-_{2k}\right] \leq \prod_{i = 1}^{k}{\left(D(\delta) + \frac{2}{L_{2i-1}}\right)} $$
where we use the fact that $\E\big[\sigma^+_{0} - \sigma^-_{0}\big] = 1$. 

Next, using \eqref{Eq: Initial TV Bound 2} with this bound, we get:
\begin{equation}
\label{Eq: Final TV Bound}
\E\!\left[\left\|P_{X_{2k}|G}^+ - P_{X_{2k}|G}^-\right\|_{\mathsf{TV}}\right] \leq L_{2k} \, \prod_{i = 1}^{k}{\left(D(\delta) + \frac{2}{L_{2i-1}}\right)} . 
\end{equation}
Recall that $L_k = o\big(E(\delta)^{-\frac{k}{2}}\big)$ and $\liminf_{k \rightarrow \infty}{L_k} > \frac{2}{E(\delta) - D(\delta)}$ for some $E(\delta) \in (D(\delta),1)$ (that depends on $\delta$). Hence, there exists $K = K(\delta) \in \N$ (that depends on $\delta$) such that for all $i > K$, $L_{2i-1} \geq \frac{2}{E(\delta) - D(\delta)}$. This means that we can further upper bound \eqref{Eq: Final TV Bound} as follows:
\begin{align*}
\forall k > K, \enspace & \E\!\left[\left\|P_{X_{2k}|G}^+ - P_{X_{2k}|G}^-\right\|_{\mathsf{TV}}\right] \\
& \quad \quad \leq L_{2k} \, E(\delta)^{k - K} \prod_{i = 1}^{K}{\left(D(\delta) + \frac{2}{L_{2i-1}}\right)} 
\end{align*}
and letting $k \rightarrow \infty$ produces:
\begin{equation}
\label{Eq: Auxiliary Convergence Condition 2}
\lim_{k \rightarrow \infty}{\E\!\left[\left\|P_{X_{2k}|G}^+ - P_{X_{2k}|G}^-\right\|_{\mathsf{TV}}\right]} = 0 \, . 
\end{equation}

Finally, as shown in the proof of Theorem \ref{Thm:Phase Transition in Random Grid with Majority Rule Processing}, $\big\|P_{X_{2k}|G}^+ - P_{X_{2k}|G}^-\big\|_{\mathsf{TV}} \in [0,1]$ forms a non-increasing sequence in $k$ for every realization of the random DAG $G$, and the pointwise limit random variable, $\lim_{k \rightarrow \infty}{\big\|P_{X_{2k}|G}^+ - P_{X_{2k}|G}^-\big\|_{\mathsf{TV}}} \in [0,1]$, has mean $\E\big[\lim_{k \rightarrow \infty}{\big\|P_{X_{2k}|G}^+ - P_{X_{2k}|G}^-\big\|_{\mathsf{TV}}}\big] = 0$ due to \eqref{Eq: Auxiliary Convergence Condition 2} and the bounded convergence theorem. Therefore, we must have:
$$ \lim_{k \rightarrow \infty}{\left\|P_{X_{2k}|G}^+ - P_{X_{2k}|G}^-\right\|_{\mathsf{TV}}} = 0 \quad G\text{-}a.s. $$
Moreover, since $\big\|P_{X_{k}|G}^+ - P_{X_{k}|G}^-\big\|_{\mathsf{TV}} \in [0,1]$ forms a non-increasing sequence in $k$, we have:
$$ \lim_{k \rightarrow \infty}{\left\|P_{X_{k}|G}^+ - P_{X_{k}|G}^-\right\|_{\mathsf{TV}}} = \lim_{k \rightarrow \infty}{\left\|P_{X_{2k}|G}^+ - P_{X_{2k}|G}^-\right\|_{\mathsf{TV}}} $$ 
for every realization of the random DAG $G$. Hence, we obtain \eqref{Eq:Strong TV Reconstruction Impossible}, which completes the proof.
\end{proof}

\renewcommand{\proofname}{Proof}

We remark that when $\delta \in \big(\delta_{\mathsf{andor}},\frac{1}{2}\big)$ and the condition $\liminf_{k \rightarrow \infty}{L_k} > \frac{2}{E(\delta) - D(\delta)}$ cannot be satisfied by any $E(\delta)$, if $L_k$ satisfies the condition of Proposition \ref{Prop: Slow Growth of Layers} (in subsection \ref{Further Discussion}), then part 2 of Proposition \ref{Prop: Slow Growth of Layers} still yields the desired converse result. Finally, the next proposition demonstrates that the Markov chain $\{\sigma_{2k} : k \in \N\}$ converges almost surely when $\delta \in \big(\delta_{\mathsf{andor}},\frac{1}{2}\big)$, $L_k = \omega(\log(k))$, all processing functions at even levels are the AND rule, and all processing functions at odd levels are the OR rule. 

\begin{proposition}[AND-OR Random DAG Model Almost Sure Convergence]
\label{Prop: And-Or Grid Almost Sure Convergence}
If $\delta \in \big(\delta_{\mathsf{andor}},\frac{1}{2}\big)$ and $L_k = \omega(\log(k))$, then $\lim_{k \rightarrow \infty}{\sigma_{2k}} = t$ almost surely.
\end{proposition}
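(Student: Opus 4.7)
My plan is to combine the one-step Hoeffding concentration derived in the proof of Theorem \ref{Thm:Phase Transition in Random Grid with And-Or Rule Processing} with the contraction property of $g = g_0 \circ g_1$ in the regime $\delta > \delta_{\mathsf{andor}}$. Inequality \eqref{Eq: Hoeffding Consequence 2} already shows that for every $\xi > 0$ and every $k \in \N\backslash\!\{0\}$,
\begin{equation*}
\P\bigl(|\sigma_{2k} - g(\sigma_{2k-2})| > \xi \,\bigm|\, \sigma_{2k-2}\bigr) \leq 4\exp\!\left(-\frac{(L_{2k} \wedge L_{2k-1})\xi^2}{8}\right),
\end{equation*}
so the two-step chain $\{\sigma_{2k} : k \in \N\}$ tracks the deterministic mean map $g$ closely whenever the layer sizes are large.

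The first step is to apply Borel--Cantelli. Since $L_k = \omega(\log k)$, for each fixed $\xi > 0$ we have $L_{2k} \wedge L_{2k-1} \geq M \log k$ eventually for arbitrarily large $M$; choosing $M > 16/\xi^2$ makes the right-hand side above summable in $k$. Borel--Cantelli then yields that for every rational $\xi > 0$, almost surely $|\sigma_{2k} - g(\sigma_{2k-2})| \leq \xi$ for all sufficiently large $k$. Intersecting over the countable family $\xi = 1/m$, $m \in \N\backslash\!\{0\}$, we conclude that almost surely $|\sigma_{2k} - g(\sigma_{2k-2})| \to 0$ as $k \to \infty$.

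The second step is a deterministic contraction argument. Since $\delta \in (\delta_{\mathsf{andor}}, 1/2)$ gives $D(\delta) < 1$ by \eqref{Eq: Lipschitz constant}, and the only fixed point of $g$ in $[0,1]$ is $t$ (as established in the discussion preceding Theorem \ref{Thm:Phase Transition in Random Grid with And-Or Rule Processing}), the map $g$ satisfies $|g(\sigma) - t| \leq D(\delta)|\sigma - t|$ globally on $[0,1]$. On the almost sure event identified above, the triangle inequality then gives
\begin{equation*}
|\sigma_{2k} - t| \leq |\sigma_{2k} - g(\sigma_{2k-2})| + D(\delta)\,|\sigma_{2k-2} - t|.
\end{equation*}
Given any $\epsilon > 0$, one picks $\xi > 0$ with $\xi/(1-D(\delta)) < \epsilon/2$; for $k$ large the first term is at most $\xi$, and iterating the recursion from some index $K$ yields $|\sigma_{2(K+n)} - t| \leq \xi/(1-D(\delta)) + D(\delta)^n|\sigma_{2K} - t|$. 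Since $\sigma_{2K} \in [0,1]$ is bounded and $D(\delta)^n \to 0$, we obtain $\limsup_k |\sigma_{2k} - t| \leq \xi/(1-D(\delta)) < \epsilon$; letting $\epsilon \downarrow 0$ along a countable sequence gives $\sigma_{2k} \to t$ almost surely.

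The main obstacle should be essentially cosmetic: the only place the argument could stumble is if $g$ failed to be a \emph{global} contraction on $[0,1]$, but the bound $D(\delta) < 1$ from \eqref{Eq: Lipschitz constant} is precisely a uniform bound on $g^{\prime}$ over $[0,1]$, so the contraction is global. The slightly delicate piece is making sure the Borel--Cantelli step can be applied simultaneously for a sequence $\xi_m \downarrow 0$ to obtain genuine almost sure convergence (rather than merely tightness of $\sigma_{2k}$ around $t$), which is handled by the countable intersection argument above.
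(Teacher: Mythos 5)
Your proposal is correct and follows essentially the same route as the paper: one-step Hoeffding concentration of $\sigma_{2k}$ around $g(\sigma_{2k-2})$, upgraded to an almost sure statement via summability of the tail bounds (you use Borel--Cantelli plus a countable intersection over $\xi = 1/m$, where the paper uses a union bound at level $\tau$ with a vanishing sequence $\epsilon_k$), followed by the global contraction $D(\delta) < 1$ of $g$ toward its unique fixed point $t$. The only cosmetic difference is that you iterate the contraction centered at $t$ directly via $|g(\sigma) - t| \leq D(\delta)|\sigma - t|$, whereas the paper compares $\sigma_{2k}$ to the deterministic orbit $g^{(k-K)}(\sigma_{2K})$ and then invokes the fixed point theorem; both yield the same conclusion.
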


Proposition \ref{Prop: And-Or Grid Almost Sure Convergence} is proved in Appendix \ref{Proof of Proposition And-Or Grid Almost Sure Convergence}, and much like Proposition \ref{Prop: Majority Grid Almost Sure Convergence}, it can also be construed as a ``weak'' impossibility result.

\section{Deterministic Quasi-Polynomial Time and Randomized Polylogarithmic Time Constructions of DAGs where Broadcasting is Possible}
\label{Deterministic Quasi-Polynomial Time and Randomized Polylogarithmic Time Constructions of DAGs where Broadcasting is Possible}

In this section, we prove Theorem \ref{Thm: Reconstruction in Expander DAGs} and Proposition \ref{Prop: DAG Construction using Expander Graphs} by constructing deterministic bounded degree DAGs with $L_k = \Theta(\log(k))$ where broadcasting is possible. As mentioned in subsection \ref{Explicit Construction of DAGs where Broadcasting is Possible}, our construction is based on $d$-regular bipartite lossless $(d^{-6/5},d-2d^{4/5})$-expander graphs. So, we first verify that such graphs actually exist. Recall that we represent a $d$-regular bipartite graph as $B = (U,V,E)$, where $U$ and $V$ are disjoint sets of vertices and $E$ is the set of undirected edges. The next proposition is a specialization of \cite[Proposition 1, Appendix II]{SipserSpielman1996} which illustrates that randomly generated regular bipartite graphs are good expanders with high probability.

\begin{proposition}[Random Expander Graph {\cite[Lemma 1]{Pinsker1973}, \cite[Proposition 1, Appendix II]{SipserSpielman1996}}]
\label{Prop: Random Expander Graph}
Fix any fraction $\alpha \in (0,1)$ and any degree $d \in \N\backslash\!\{0\}$. Then, for every sufficiently large $n$ (depending on $\alpha$ and $d$), the randomly generated $d$-regular bipartite graph $\mathsf{B} = (U,V,\mathsf{E})$ with $|U| = |V| = n$ satisfies:
\begin{align*}
& \P\Big(\forall S \subseteq U \text{ with } |S| = \alpha n, \\
& \quad \, |\Gamma(S)| \geq n \!\left(1-(1-\alpha)^d - \sqrt{2 d \alpha H(\alpha)}\right)\!\Big) \\
& \quad \quad \quad \quad \quad \quad \quad \quad \quad \quad > 1 - \binom{n}{n\alpha} \exp(-n H(\alpha)) \\
& \quad \quad \quad \quad \quad \quad \quad \quad \quad \quad \geq 1 - \frac{e}{2 \pi \sqrt{\alpha (1- \alpha) n}}
\end{align*}
where $H(\alpha) \triangleq -\alpha \log(\alpha) - (1-\alpha)\log(1-\alpha)$ is the binary Shannon entropy function.
\end{proposition}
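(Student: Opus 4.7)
The plan is to prove the bound by a standard first-moment-plus-union-bound argument on subsets $S \subseteq U$ of size $\alpha n$. I would first fix a convenient model for the random $d$-regular bipartite graph $\mathsf{B}$: let each vertex $v \in V$ independently choose its $d$ neighbors uniformly at random with replacement from $U$. Under this model $\mathsf{B}$ is $d$-regular on the $V$ side and, for large $n$, the $U$-side degrees concentrate tightly around $d$, so one can either condition on exact $d$-regularity of $U$ or pass to the uniform distribution over $d$-regular bipartite graphs at the end with negligible loss.

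For any fixed $S \subseteq U$ with $|S| = \alpha n$ and any $v \in V$, the $d$ independent uniform choices give
$$ \P(v \notin \Gamma(S)) = \left(\frac{n-|S|}{n}\right)^{\!d} = (1-\alpha)^d , $$
and the indicator events $\{v \notin \Gamma(S)\}$ for $v \in V$ are mutually independent. Hence $|V \setminus \Gamma(S)|$ is a sum of $n$ i.i.d.\ $\mathrm{Bernoulli}((1-\alpha)^d)$ random variables with mean $n(1-\alpha)^d$, and Hoeffding's inequality yields, for every $t > 0$,
$$ \P\!\left(|\Gamma(S)| < n(1-(1-\alpha)^d) - n t\right) \leq \exp(-2 n t^2) . $$
I would then choose $t = \sqrt{2 d \alpha H(\alpha)}$, producing an upper bound of $\exp(-4 n d \alpha H(\alpha))$, which is at most $\exp(-n H(\alpha))$ in the regime of interest. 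A union bound over the $\binom{n}{\alpha n}$ subsets of size $\alpha n$ in $U$ gives the first stated bound $\binom{n}{\alpha n}\exp(-nH(\alpha))$, and the second follows from Stirling's approximation in the refined form $\binom{n}{\alpha n} \leq \frac{e}{2\pi\sqrt{\alpha(1-\alpha)n}} \exp(n H(\alpha))$, after which the exponential factors cancel.

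The main obstacle I anticipate is reconciling the convenient product-measure model (in which the $V$-side neighborhoods are independent but the $U$-side degrees are only approximately $d$) with the strict $d$-regularity on both sides demanded by the proposition. The cleanest fix is to work directly with the configuration model for random $d$-regular bipartite graphs; there the indicator events $\{v \notin \Gamma(S)\}$ are no longer independent, so I would replace the Hoeffding step by an application of McDiarmid's inequality to $|\Gamma(S)|$ viewed as a function of the $n d$ edge-endpoint choices, each of which perturbs $|\Gamma(S)|$ by at most one. This delivers a deviation inequality of the same qualitative shape, up to a constant loss in the exponent that is absorbed by the ``sufficiently large $n$'' hypothesis.
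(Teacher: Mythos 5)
Your overall architecture (per-set concentration of $|\Gamma(S)|$ around its mean $n(1-(1-\alpha)^d)$, union bound over the $\binom{n}{\alpha n}$ sets, Stirling for the second inequality) is the same as the argument the paper relies on, and your Stirling step is fine. But the concentration step has a genuine quantitative gap. The deviation $\sqrt{2d\alpha H(\alpha)}$ in the statement is tuned so that the per-set failure probability is exactly $\exp(-nH(\alpha))$, which is what it must be to cancel the $\binom{n}{\alpha n}\approx\exp(nH(\alpha))$ union bound. Your Hoeffding bound in the independent-choices model gives $\exp(-2nt^2)=\exp(-4nd\alpha H(\alpha))$ with $t=\sqrt{2d\alpha H(\alpha)}$, and $4d\alpha H(\alpha)\geq H(\alpha)$ only when $\alpha\geq 1/(4d)$. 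The proposition is stated for all $\alpha\in(0,1)$, and the paper applies it with $\alpha=d^{-6/5}$, where $4d\alpha=4d^{-1/5}\ll 1$; in that regime your bound is exponentially too weak. Your proposed repair is worse, not better: McDiarmid over all $nd$ edge-endpoint choices gives $\exp(-\lambda^2/(2nd))$ for a deviation $\lambda=n\sqrt{2d\alpha H(\alpha)}$, i.e.\ $\exp(-n\alpha H(\alpha))$, which never beats the union bound. And the remark that a constant-factor loss in the exponent is ``absorbed by sufficiently large $n$'' is false here, since the union bound contributes $\exp(+nH(\alpha))$ exactly, so any loss makes the product diverge.

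The missing idea, which is what \cite{SipserSpielman1996} actually does (the paper simply cites their Appendix II for the first inequality and only proves the Stirling estimate itself), is to run the Doob martingale over only the $d|S|=d\alpha n$ edges emanating from $S$ in the configuration model: $\Gamma(S)$ is a function of those edges alone, each revelation changes the conditional expectation by $O(1)$, and Azuma then gives $\exp\!\big(-\lambda^2/(2d\alpha n)\big)=\exp(-nH(\alpha))$ for $\lambda=n\sqrt{2d\alpha H(\alpha)}$. The factor $\alpha$ in the denominator of the Azuma exponent is exactly what your two variants lose. (Relatedly, in your independent model the right tool for small $\alpha$ would be a Bernstein-type bound exploiting that the variance of $|V\setminus\Gamma(S)|$ is only about $nd\alpha$, not the variance-free Hoeffding form; and conditioning that model on exact $d$-regularity of $U$ is not a ``negligible loss,'' since that event has probability roughly $(2\pi d)^{-n/2}$.)
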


We note that the probability measure $\P$ in Proposition \ref{Prop: Random Expander Graph} is defined by the random $d$-regular bipartite graph $\mathsf{B}$, whose vertices $U$ and $V$ are fixed and edges $\mathsf{E}$ are random. In particular, $\mathsf{B}$ is generated as follows (cf. \textit{configuration model} in \cite[Section 2.4]{Bollobas2001}): 
\begin{enumerate}
\item Fix a complete bipartite graph $\hat{B} = (\hat{U},\hat{V},\hat{E})$ such that $|\hat{U}| = |\hat{V}| = d n$.
\item Randomly and uniformly select a perfect matching $\mathsf{M} \subseteq \hat{E}$ in $\hat{B}$. 
\item Group sets of $d$ consecutive vertices in $\hat{U}$, respectively $\hat{V}$, to generate a set of $n$ super-vertices $U$, respectively $V$. 
\item This yields a random $d$-regular bipartite graph $\mathsf{B} = (U,V,\mathsf{E})$, where every edge in $\mathsf{E}$ is an edge between super-vertices in $\mathsf{M}$. 
\end{enumerate}
Note that we allow for the possibility that two vertices in $\mathsf{B}$ have multiple edges between them. The first inequality in Proposition \ref{Prop: Random Expander Graph} is proved in \cite[Appendix II]{SipserSpielman1996}. On the other hand, the second inequality in Proposition \ref{Prop: Random Expander Graph} is a straightforward consequence of estimating the binomial coefficient using precise Stirling's formula bounds, cf. \cite[Chapter II, Section 9, Equation (9.15)]{Feller1968}: 
\begin{equation}
\forall n \in \N\backslash\!\{0\}, \enspace \sqrt{2 \pi n} \left(\frac{n}{e}\right)^n \leq n! \leq e \sqrt{n} \left(\frac{n}{e}\right)^n .
\end{equation}
The second inequality portrays that the probability in Proposition \ref{Prop: Random Expander Graph} tends to $1$ as $n \rightarrow \infty$. Moreover, strictly speaking, $\alpha n$ must be an integer, but we will neglect this detail throughout our exposition for simplicity (as in subsection \ref{Explicit Construction of DAGs where Broadcasting is Possible}). We next use this proposition to establish the existence of pertinent regular bipartite lossless expander graphs. 

\begin{corollary}[Lossless Expander Graph]
\label{Cor: Lossless Expander Graph}
Fix any $\epsilon \in (0,1)$ and any degree $d \geq \big(\frac{2}{\epsilon}\big)^{\!5}$. Then, for every sufficient large $n$ (depending on $d$), the randomly generated $d$-regular bipartite graph $\mathsf{B} = (U,V,\mathsf{E})$ with $|U| = |V| = n$ satisfies:
\begin{align*}
& \P\!\left(\forall S \subseteq U \text{ with } |S| = \frac{n}{d^{6/5}}, \, |\Gamma(S)| \geq (1-\epsilon) d |S| \right) \\
& \quad \quad \quad \quad \quad \quad \quad \quad \quad > 1 - \frac{e}{2 \pi \sqrt{d^{-6/5} \!\left(1 - d^{-6/5}\right) n}} \, . 
\end{align*}
Hence, for every sufficient large $n$ (depending on $d$), there exists a $d$-regular bipartite lossless $(d^{-6/5},(1-\epsilon)d)$-expander graph $B = (U,V,E)$ with $|U| = |V| = n$ such that for every subset of vertices $S \subseteq U$, we have:
$$ |S| = \frac{n}{d^{6/5}} \quad \Rightarrow \quad |\Gamma(S)| \geq (1-\epsilon) d |S| = (1-\epsilon) \frac{n}{d^{1/5}} \, . $$
\end{corollary}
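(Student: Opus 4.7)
The plan is to apply Proposition \ref{Prop: Random Expander Graph} directly with the specific choice $\alpha = d^{-6/5}$. With this substitution, $|U| = |V| = n$, and the second inequality in Proposition \ref{Prop: Random Expander Graph} transforms verbatim into the probability lower bound stated in the corollary. Thus the entire proof reduces to verifying that the expansion $n\!\left(1 - (1-\alpha)^d - \sqrt{2 d\alpha H(\alpha)}\right)$ guaranteed by the first inequality of Proposition \ref{Prop: Random Expander Graph} is at least the target value $(1-\epsilon) d |S| = (1-\epsilon) n\, d^{-1/5}$, where we used $|S| = \alpha n = n\, d^{-6/5}$.

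After dividing by $d\alpha = d^{-1/5}$, the inequality to be verified becomes
$$\frac{1 - (1-\alpha)^d}{d\alpha} \;-\; \frac{\sqrt{2 d\alpha H(\alpha)}}{d\alpha} \;\geq\; 1 - \epsilon.$$
For the first quotient I would invoke the elementary alternating-series bound $1 - (1-\alpha)^d \geq d\alpha - \binom{d}{2}\alpha^2$, yielding $\big(1 - (1-\alpha)^d\big)/(d\alpha) \geq 1 - (d-1)\alpha/2 \geq 1 - d^{-1/5}/2$. For the second, the standard estimate $H(\alpha) \leq \alpha\big(1 + \log(1/\alpha)\big)$ (a consequence of $-(1-\alpha)\log(1-\alpha) \leq \alpha$) combined with $\log(1/\alpha) = (6/5)\log d$ reduces the inequality further to
$$\frac{d^{-1/5}}{2} \;+\; d^{-1/2}\sqrt{2\!\left(1 + \tfrac{6}{5}\log d\right)} \;\leq\; \epsilon.$$
The main obstacle is the careful numerical verification of this bound under the precise hypothesis $d \geq (2/\epsilon)^5$; the first summand is at most $\epsilon/4$ because $d^{-1/5} \leq \epsilon/2$, and since $d$ grows polynomially in $1/\epsilon$ while $\sqrt{\log d}$ grows only polylogarithmically, the second summand can be absorbed into the remaining budget $3\epsilon/4$ (tightening the constant $2/\epsilon$ slightly inside the fifth power if necessary, or appealing to the ``sufficiently large $n$'' clause to rule out small-$d$ corner cases).

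Having verified the deterministic expansion target for every $S \subseteq U$ with $|S| = n/d^{6/5}$, the strictly positive probability conclusion of Proposition \ref{Prop: Random Expander Graph} then yields, by the probabilistic method, the existence of at least one $d$-regular bipartite graph $B = (U,V,E)$ with $|U| = |V| = n$ realizing $|\Gamma(S)| \geq (1-\epsilon) d |S| = (1-\epsilon) n/d^{1/5}$ for all such $S$. This is precisely the $d$-regular bipartite lossless $(d^{-6/5},(1-\epsilon)d)$-expander graph claimed by the corollary.
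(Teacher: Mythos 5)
Your proposal is correct and follows essentially the same route as the paper: apply Proposition \ref{Prop: Random Expander Graph} with $\alpha = d^{-6/5}$, inherit the probability bound verbatim, and verify that the guaranteed expansion dominates $(1-\epsilon)d\alpha n$ by showing both deficit terms are small. The only deviation is your entropy estimate $H(\alpha) \leq \alpha(1+\log(1/\alpha))$ in place of the paper's $H(\alpha) \leq 2\log(2)\sqrt{\alpha}$, which turns the second term into one of order $d^{-1/2}\sqrt{\log d}$ rather than $d^{-1/5}$; one can check directly that $\tfrac{1}{2}d^{-1/5} + d^{-1/2}\sqrt{2(1+\tfrac{6}{5}\log d)} \leq \epsilon$ holds for every $d \geq (2/\epsilon)^5 > 32$, so neither the hedge about tightening the constant nor the appeal to ``sufficiently large $n$'' (which cannot rescue a purely $d$-dependent inequality in any case) is actually needed.
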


Corollary \ref{Cor: Lossless Expander Graph} is proved in Appendix \ref{Proof of Corollary Lossless Expander Graph}. We remark that explicit constructions of bipartite lossless expander graphs $B$ where only the vertices in $U$ are $d$-regular can be found in the literature, cf. \cite{Capalboetal2002}, but we require the vertices in $V$ to be $d$-regular in our construction.  

As we discussed in subsection \ref{Explicit Construction of DAGs where Broadcasting is Possible}, $d$-regular bipartite lossless expander graphs can be concatenated to produce a DAG where broadcasting is possible. To formally establish this, we first argue that a single $d$-regular bipartite lossless expander graph, when perceived as two successive layers of a deterministic DAG, exhibits a ``one-step broadcasting'' property. Fix any crossover probability $\delta \in \big(0,\frac{1}{2}\big)$, and choose any sufficiently large \textit{odd degree} $d = d(\delta)$ (that depends on $\delta$) such that \eqref{Eq: Relation between d and delta} (reproduced below) holds: 
\begin{equation}
\frac{8}{d^{1/5}} + d^{6/5} \exp\!\left(-\frac{(1 - 2\delta)^2 (d - 4)^2}{8 d}\right) \leq \frac{1}{2}
\end{equation}
where the left hand side tends to $0$ as $d \rightarrow \infty$ for fixed $\delta$, and the minimum value of $d$ satisfying this inequality increases as $\delta \rightarrow \frac{1}{2}^{-}$. Then, Corollary \ref{Cor: Lossless Expander Graph} demonstrates that for any sufficiently large $n$ (depending on $d$), there exists a $d$-regular bipartite lossless $(d^{-6/5},d-2d^{4/5})$-expander graph $B = (U,V,E)$ with $|U| = |V| = n$ such that the expansion property in \eqref{Eq: Expansion Property 2} (reproduced below) holds:
\begin{equation}
|S| = \frac{n}{d^{6/5}} \enspace \Rightarrow \enspace |\Gamma(S)| \geq (1-\epsilon) \frac{n}{d^{1/5}} \text{ with } \epsilon = \frac{2}{d^{1/5}}
\end{equation}
for every subset of vertices $S \subseteq U$. Note that in the statements of Lemma \ref{Lemma: One-Step Broadcasting} (see below), Theorem \ref{Thm: Reconstruction in Expander DAGs}, and Proposition \ref{Prop: DAG Construction using Expander Graphs}, we assume the existence of such $d$-regular bipartite lossless $(d^{-6/5},d-2d^{4/5})$-expander graphs without proof due to Corollary \ref{Cor: Lossless Expander Graph}. Let us assume that the undirected edges in $E$ are actually all directed from $U$ to $V$, and construe $B$ as two consecutive levels of a deterministic DAG upon which we are broadcasting (as in subsection \ref{Explicit Construction of DAGs where Broadcasting is Possible}). In particular, let the Bernoulli random variable corresponding to any vertex $v \in U \cup V$ be denoted by $X_v$, and suppose each (directed) edge of $B$ is an independent $\mathsf{BSC}(\delta)$ as before. Furthermore, let the Boolean processing function at each vertex in $V$ be the majority rule, which is always well-defined as $d$ is odd. This defines a Bayesian network on $B$, and the ensuing lemma establishes the feasibility of ``one-step broadcasting'' down this Bayesian network.

\begin{lemma}[One-Step Broadcasting in Expander DAG]
\label{Lemma: One-Step Broadcasting}
For any fixed noise level $\delta \in \big(0,\frac{1}{2}\big)$, any sufficiently large odd degree $d = d(\delta) \geq 5$ (that depends on $\delta$) satisfying \eqref{Eq: Relation between d and delta}, and any sufficiently large $n$ (depending on $d$), consider any fixed $d$-regular bipartite lossless $(d^{-6/5},d-2d^{4/5})$-expander graph $B = (U,V,E)$ with $|U| = |V| = n$ such that the edges have independent $\mathsf{BSC}(\delta)$ noise and the vertices in $V$ use majority Boolean processing functions (as outlined above). Then, for every input distribution on $\{X_u : u \in U\}$, we have:
$$ \P\!\left(\sum_{v \in V}{X_v} > \frac{n}{d^{6/5}} \, \middle| \, \sum_{u \in U}{X_u} \leq \frac{n}{d^{6/5}}\right) \leq \exp\!\left(- \frac{n}{2 d^{12/5}}\right) . $$
\end{lemma}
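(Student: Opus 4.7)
The plan is to condition on $S := \{u \in U : X_u = 1\}$ (with $|S| \leq n/d^{6/5}$) and prove the bound uniformly for every such $S$; the claim then follows by averaging. Conditional on $S$, the bits $\{X_v : v \in V\}$ are mutually independent, because even when distinct $v, v' \in V$ share an incoming neighbor $u$, the BSC noises on the two edges $(u,v)$ and $(u,v')$ are independent. For each $v$, let $k_v$ denote the number of edges from $S$ to $v$ (with multiplicity); then $X_v$ is the majority of $d$ independent Bernoulli bits, of which $k_v$ have mean $1-\delta$ and $d-k_v$ have mean $\delta$. Their sum has mean $d\delta + k_v(1-2\delta)$ and must reach $(d+1)/2$ for the majority to equal $1$, so Hoeffding's inequality applied to $d$ independent $[0,1]$-valued random variables yields
$$ \P(X_v = 1 \mid S) \;\leq\; \exp\!\left(-\frac{2}{d}\!\left(\frac{(d-2k_v)(1-2\delta)}{2} + \frac{1}{2}\right)^{\!2}\right). $$
I will choose the threshold $k^\star := \lceil d/4 \rceil$, which satisfies $d/4 \leq k^\star \leq (d+4)/4$; then every $v$ with $k_v \leq k^\star$ has gap at least $(d-4)(1-2\delta)/4$, so
$$ p_v := \P(X_v = 1 \mid S) \;\leq\; \exp\!\left(-\frac{(1-2\delta)^2 (d-4)^2}{8 d}\right). $$

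Next I control the bad set $V^\sharp := \{v \in V : k_v > k^\star\}$ using the expansion property. If $|S| < n/d^{6/5}$, I enlarge $S$ to an arbitrary $S'$ with $|S'| = n/d^{6/5}$; since this only enlarges the bad set, it suffices to bound $|V^\sharp(S')|$. The identity $\sum_v k_v(S') = d|S'|$ combined with $|\Gamma(S')| \geq (d - 2d^{4/5})|S'|$ gives
$$ \sum_{v : k_v(S') \geq 2}\!\!(k_v(S') - 1) \;\leq\; d|S'| - |\Gamma(S')| \;\leq\; 2 d^{4/5}|S'|, $$
and since every $v \in V^\sharp(S')$ contributes at least $k^\star$ to this sum (as $k_v(S') \geq k^\star+1 \geq 2$ using $d \geq 5$), we obtain $|V^\sharp| \leq 2 d^{4/5}|S'|/k^\star \leq 8 n/d^{7/5}$ from $k^\star \geq d/4$. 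Combining the trivial bound $p_v \leq 1$ on $V^\sharp$ with the Hoeffding bound on $V \setminus V^\sharp$,
$$ \mu := \E\!\left[\sum_{v} X_v \,\Big|\, S\right] \;\leq\; \frac{n}{d^{6/5}}\!\left(\frac{8}{d^{1/5}} + d^{6/5} \exp\!\left(-\frac{(1-2\delta)^2 (d-4)^2}{8 d}\right)\right) \;\leq\; \frac{n}{2 d^{6/5}}, $$
where the last inequality is precisely the standing hypothesis \eqref{Eq: Relation between d and delta}.

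Finally, since $\{X_v\}_{v \in V}$ are conditionally independent and $[0,1]$-valued given $S$, Hoeffding applied to the sum $\sum_v X_v$ produces
$$ \P\!\left(\sum_v X_v > \frac{n}{d^{6/5}} \,\Big|\, S\right) \;\leq\; \exp\!\left(-\frac{2}{n}\!\left(\frac{n}{d^{6/5}} - \mu\right)^{\!2}\right) \;\leq\; \exp\!\left(-\frac{n}{2 d^{12/5}}\right), $$
and averaging over the random $S$ (on the event $|S| \leq n/d^{6/5}$) completes the argument. The only delicate step is the choice of $k^\star$: it must be large enough for the expander-based count $|V^\sharp| \lesssim n/d^{7/5}$, which forces $k^\star \gtrsim d/4$, yet small enough for the per-vertex Hoeffding bound to produce the exponent $(1-2\delta)^2(d-4)^2/(8d)$, which forces $k^\star \lesssim d/4$. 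The hypothesis \eqref{Eq: Relation between d and delta} is calibrated so that the two resulting contributions to $\mu/(n/d^{6/5})$ sum to at most $1/2$, which is exactly the slack needed for the concluding Hoeffding step to produce the stated exponent $-n/(2d^{12/5})$.
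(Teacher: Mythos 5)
Your proof is correct and follows essentially the same route as the paper's: the same threshold $\lceil d/4\rceil$, the same expansion-based count $|V^\sharp|\leq 8n/d^{7/5}$ of vertices with too many parents in $S$, the same per-vertex Hoeffding bound $\exp(-(1-2\delta)^2(d-4)^2/(8d))$, and the same final Hoeffding step calibrated by \eqref{Eq: Relation between d and delta}. The only (mild, and arguably simplifying) difference is that you condition on the exact realization of $S$ rather than on the event $\{|S|\leq n/d^{6/5}\}$, which makes the conditional independence of $\{X_v : v\in V\}$ immediate and lets you bound the conditional mean of the full layer sum directly, bypassing the paper's stochastic-dominance and binomial-domination steps.
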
 

\begin{proof}
We begin with some useful definitions. For any vertex $v \in V$, let $\mathsf{pa}(v)$ denote the multiset of vertices in $U$ that are parents of $v$. (Note that $\mathsf{pa}(v)$ is a multiset because there may be multiple edges between two vertices, and $|\mathsf{pa}(v)| = d$.) Let $S \triangleq \{u \in U : X_u = 1\} \subseteq U$ denote the subset of vertices in $U$ that take value $1$, which implies that $|S| = \sum_{u \in U}{X_u}$. Furthermore, for any vertex $v \in V$, let $N_v \triangleq \sum_{u \in \mathsf{pa}(v)}{X_u}$ denote the number of parents of $v$ in $S$ that have value $1$ (counting with repetition). Finally, let $T \triangleq \{v \in V : N_v \geq t\} \subseteq V$ denote the subset of vertices in $V$ with at least $t \in \N\backslash\!\{0,1\}$ parents in $S$. We will assign an appropriate value to $t$ below.

Suppose $|S| = \sum_{u \in U}{X_u} \leq n/d^{6/5}$ (which is the event we condition upon in the lemma statement). Consider the case where $|S| = n/d^{6/5}$. Then, applying the expansion property in \eqref{Eq: Expansion Property 2} yields (the ``vertex counting'' bound):
\begin{equation}
\label{Eq: Vertex Counting}
|\Gamma(S)| = |T| + |\Gamma(S)\backslash T| \geq (1-\epsilon) \frac{n}{d^{1/5}}
\end{equation}
where $T \subseteq \Gamma(S)$ by definition of $T$, and $\epsilon = 2 d^{-1/5}$. Moreover, we also have the ``edge counting'' bound:
\begin{equation}
\label{Eq: Edge Counting}
t|T| + |\Gamma(S)\backslash T| \leq d |S| = \frac{n}{d^{1/5}}
\end{equation}
since each vertex in $T$ has at least $t$ edges from $S$, each vertex in $\Gamma(S)\backslash T$ has at least $1$ edge from $S$, and the total number of outgoing edges from $S$ is $d |S|$. Combining \eqref{Eq: Vertex Counting} and \eqref{Eq: Edge Counting} produces:
$$ (1-\epsilon) \frac{n}{d^{1/5}} - |T| \leq |\Gamma(S)\backslash T| \leq \frac{n}{d^{1/5}} - t |T| $$
which implies that:
\begin{equation}
\label{Eq: Bound on |T|}
|T| \leq \frac{n \epsilon}{d^{1/5} (t-1)} = \frac{2 n}{d^{2/5} (t-1)} \, . 
\end{equation}
On the other hand, in the the case where $|S| < n/d^{6/5}$, if we flip the values of vertices in $U\backslash S$ to $1$ and subsequently increase the cardinality of $S$, then the cardinality of $T$ also increases or remains the same. Hence, if $|S| = \sum_{u \in U}{X_u} \leq n/d^{6/5}$, then \eqref{Eq: Bound on |T|} also holds.

Now, for any input distribution on $\{X_u : u \in U\}$, observe that:
\begin{align}
& \P\!\left(\sum_{v \in V}{X_v} > \frac{n}{d^{6/5}} \, \middle| \, \sum_{u \in U}{X_u} \leq \frac{n}{d^{6/5}}\right) \nonumber \\
& = \P\!\left(\sum_{v \in T}{X_v} + \sum_{v \in V\backslash T}{X_v} > \frac{n}{d^{6/5}} \, \middle| \, |S| \leq \frac{n}{d^{6/5}}\right) \nonumber \\
& \leq \P\!\left(\sum_{v \in V\backslash T}{X_v} > \frac{n}{d^{6/5}} - |T| \, \middle| \, |S| \leq \frac{n}{d^{6/5}}\right) \nonumber \\
& \leq \P\!\left(\sum_{v \in V\backslash T}{X_v} > \frac{n}{d^{6/5}} - \frac{2 n}{d^{2/5} (t-1)} \, \middle| \, |S| \leq \frac{n}{d^{6/5}}\right) \nonumber \\
& = \E\vast[\P\vast(\!\left.\sum_{v \in V\backslash T}{X_v} > \frac{n}{d^{6/5}} - \frac{2 n}{d^{2/5} (t-1)} \, \right| V\backslash T, \nonumber \\
& \quad \quad \quad \quad \enspace \left.\left\{N_v : v \in V\backslash T\right\}\!\vast) \middle| \, |S| \leq \frac{n}{d^{6/5}}\right.\!\vast] 
\label{Eq: Expander Key Conditional Independence Step}
\\
& \leq \E\vast[\P\vast(\!\left.\sum_{v \in V\backslash T}{X_v} > \frac{n}{d^{6/5}} - \frac{2 n}{d^{2/5} (t-1)} \, \right| V\backslash T, \nonumber \\
& \quad \quad \quad \quad \enspace \left.\left\{\forall v \in V \backslash T, \, N_v = t - 1\right\}\!\vast) \middle| \, |S| \leq \frac{n}{d^{6/5}}\right.\!\vast] \nonumber \\ 
& = \E\bigg[\P\bigg(\mathsf{binomial}(|V\backslash T|, \P(X_v = 1 | N_v = t-1)) \nonumber \\
& \quad \quad \quad \enspace \, \left. > \frac{n}{d^{6/5}} - \frac{2 n}{d^{2/5} (t-1)} \, \middle| \, V\backslash T \bigg) \right| |S| \leq \frac{n}{d^{6/5}}\bigg] \nonumber \\
& \leq \P\bigg(\mathsf{binomial}(n, \P(X_v = 1 | N_v = t-1)) \nonumber \\
& \quad \quad \enspace > \frac{n}{d^{6/5}} - \frac{2 n}{d^{2/5} (t-1)} \bigg) \nonumber \\
& \leq \P\Bigg(\mathsf{binomial}\!\left(n, \exp\!\left(-2 d (1 - 2\delta)^2 \! \left(\frac{1}{2} - \frac{t-1}{d}\right)^{\! 2}\right)\right) \nonumber \\
& \quad \quad \enspace > \frac{n}{d^{6/5}} - \frac{2 n}{d^{2/5} (t-1)} \Bigg)
\label{Eq: Bound used in second Hoeffding application}
\end{align}
where the steps hold due to the following reasons:
\begin{enumerate}
\item In the first equality, $T$ and $V\backslash T$ are random sets.
\item The second inequality holds because $X_v \in \{0,1\}$ for all $v \in T$.
\item The third inequality follows from \eqref{Eq: Bound on |T|}.
\item The fourth equality uses the fact that $\{X_v : v \in V\backslash T\}$ are conditionally independent of the event $\{|S| \leq n/d^{6/5}\}$ given $V\backslash T$ and $\{N_v : v \in V\backslash T\}$, and the conditional expectation in the fourth equality is over the random set $V\backslash T$ and the random variables $\{N_v : v \in V\backslash T\}$.
\item The fifth inequality holds because $N_v \leq t - 1$ for every $v \in V \backslash T$, and a straightforward monotone coupling argument shows that the distribution $P_{X_v|N_v = t - 1}$ stochastically dominates the distribution $P_{X_v|N_v = k}$ for any $k < t-1$. Furthermore, the conditional expectation in the fifth inequality is only over the random set $V\backslash T$.
\item The sixth equality holds because $\{X_v : v \in V \backslash T\}$ are conditionally i.i.d. given $V \backslash T$ and the event $\{\forall v \in V \backslash T, \, N_v = t - 1\}$.
\item The seventh inequality holds because $|V \backslash T| \leq n$, and a simple monotone coupling argument establishes that a $\mathsf{binomial}(n, \P(X_v = 1 | N_v = t-1))$ random variable stochastically dominates a $\mathsf{binomial}(|V\backslash T|, \P(X_v = 1 | \allowbreak N_v = t-1))$ random variable.
\item The eighth inequality holds because a $\mathsf{binomial}(n,p)$ random variable stochastically dominates a $\mathsf{binomial}(n,\allowbreak q)$ random variable when $p \geq q$ (again via a monotone coupling argument), and Hoeffding's inequality yields:
\begin{align}
& \P(X_v = 1 | N_v = t-1) \nonumber \\
& \quad \quad = \P\!\left(\sum_{i = 1}^{t-1}{Z_i} + \sum_{j = 1}^{d - t + 1}{Y_j} > \frac{d}{2}\right) \nonumber \\
& \quad \quad \leq \exp\!\left(-2 d (1 - 2\delta)^2 \!\left(\frac{1}{2} - \frac{t-1}{d}\right)^{\! 2}\right)
\label{Eq: First Assumption}
\end{align}
where $Z_i$ are i.i.d. $\Ber(1-\delta)$, $Y_j$ are i.i.d. $\Ber(\delta)$, $\{Z_i : i \in \{1,\dots, t-1\}\}$ and $\{Y_j : j \in \{1,\dots,d-t+1\}\}$ are independent, we assume that $\frac{t-1}{d} < \frac{1}{2}$, and we use the fact that $X_v$ is the majority of its parents' values after passing them through independent $\mathsf{BSC}(\delta)$'s.
\end{enumerate}
Finally, applying Hoeffding's inequality once more to \eqref{Eq: Bound used in second Hoeffding application} yields:
\begin{equation}
\label{Eq: Intermediate one-step bound}
\begin{aligned}
& \P\!\left(\sum_{v \in V}{X_v} > \frac{n}{d^{6/5}} \, \middle| \, \sum_{u \in U}{X_u} \leq \frac{n}{d^{6/5}}\right) \\
& \leq \exp\!\vast(\!\!-2 n \Bigg(\frac{1}{d^{6/5}} - \frac{2}{d^{2/5} (t-1)} \\
& \quad \quad \quad \quad \quad \quad \, \, - \exp\!\left(-2 d (1 - 2\delta)^2 \!\left(\frac{1}{2} - \frac{t-1}{d}\right)^{\! 2}\right)\!\Bigg)^{\! 2} \vast)
\end{aligned}
\end{equation}
where we assume that:
\begin{equation}
\label{Eq: Second Assumption}
\frac{1}{d^{6/5}} - \frac{2}{d^{2/5} (t-1)} > \exp\!\left(-2 d (1 - 2\delta)^2 \!\left(\frac{1}{2} - \frac{t-1}{d}\right)^{\! 2}\right) . 
\end{equation}
 
Next, let $t = 1 + \ceil[\big]{\frac{d}{4}}$ so that:\footnote{The choice of $t$ is arbitrary and we could have chosen any $t$ such that $0 < \frac{t-1}{d} < \frac{1}{2}$.}
$$ \frac{1}{4} \leq \frac{t-1}{d} \leq \frac{1}{4} + \frac{1}{d} \, . $$
Since we have assumed in the lemma statement that $d \geq 5$, the upper bound on $\frac{t-1}{d}$ illustrates that $\frac{t-1}{d} < \frac{1}{2}$, which ensures that \eqref{Eq: First Assumption} is valid. Furthermore, using both the upper and lower bounds on $\frac{t-1}{d}$, notice that \eqref{Eq: Second Assumption} is also valid if we have:
\begin{align*}
\frac{1}{d^{6/5}} - \frac{8}{d^{7/5}} & > \exp\!\left(-\frac{(1 - 2\delta)^2 (d - 4)^2}{8 d}\right) \\
\Leftrightarrow \quad 1 & > \frac{8}{d^{1/5}} + d^{6/5} \exp\!\left(-\frac{(1 - 2\delta)^2 (d - 4)^2}{8 d}\right) 
\end{align*}
which is true by our assumption in \eqref{Eq: Relation between d and delta}. In fact, a simple computation shows that:
\begin{align*}
& \frac{1}{d^{6/5}} - \frac{2}{d^{2/5} (t-1)} - \exp\!\left(-2 d (1 - 2\delta)^2 \!\left(\frac{1}{2} - \frac{t-1}{d}\right)^{\! 2}\right) \\
& \quad \quad \enspace \geq \frac{1}{d^{6/5}} - \frac{8}{d^{7/5}} - \exp\!\left(-\frac{(1 - 2\delta)^2 (d - 4)^2}{8 d}\right) \\
& \quad \quad \enspace \geq \frac{1}{2 d^{6/5}} 
\end{align*}
where the second inequality is equivalent to \eqref{Eq: Relation between d and delta}. Therefore, we have from \eqref{Eq: Intermediate one-step bound}:
$$ \P\!\left(\sum_{v \in V}{X_v} > \frac{n}{d^{6/5}} \, \middle| \, \sum_{u \in U}{X_u} \leq \frac{n}{d^{6/5}}\right) \leq \exp\!\left(- \frac{n}{2 d^{12/5}} \right) $$
which completes the proof.
\end{proof}

Intuitively, Lemma \ref{Lemma: One-Step Broadcasting} parallels \eqref{Eq: Stability whp} in the proof of Theorem \ref{Thm:Phase Transition in Random Grid with Majority Rule Processing} in section \ref{Analysis of Majority Rule Processing in Random Grid}. The lemma portrays that if the proportion of $1$'s is small in a given layer, then it remains small in the next layer with high probability when the edges between the layers are defined by a regular bipartite lossless expander graph. We next prove Theorem \ref{Thm: Reconstruction in Expander DAGs} by showing using Lemma \ref{Lemma: One-Step Broadcasting} that the root bit of an expander-based DAG (outlined in subsection \ref{Explicit Construction of DAGs where Broadcasting is Possible}) can be reconstructed using the majority decision rule $\hat{S}_k = \I\big\{\sigma_k \geq \frac{1}{2}\big\}$.

\renewcommand{\proofname}{Proof of Theorem \ref{Thm: Reconstruction in Expander DAGs}}

\begin{proof}
Fix any $\delta \in \big(0,\frac{1}{2}\big)$, any sufficiently large odd $d = d(\delta) \geq 5$ satisfying \eqref{Eq: Relation between d and delta}, and any sufficiently large constant $N = N(\delta) \in \N$ such that $M = \exp(N/(4 d^{12/5})) \geq 2$ and for every $n \geq N$, there exists a $d$-regular bipartite lossless $(d^{-6/5},d - 2d^{4/5})$-expander graph $B_n = (U_n,V_n,E_n)$ with $|U_n| = |V_n| = n$ that satisfies \eqref{Eq: Expansion Property 2} for every subset $S \subseteq U_n$. We will argue that broadcasting is possible for the Bayesian network defined on the bounded degree DAG with level sizes $\{L_k : k \in \N\}$ given by \eqref{Eq: Expander Level Sizes} and edge configuration described in the statement of Theorem \ref{Thm: Reconstruction in Expander DAGs}. Note that it is straightforward to verify that $L_k = \Theta(\log(k))$ (for fixed $\delta$). 

As before, we follow the proof of Theorem \ref{Thm:Phase Transition in Random Grid with Majority Rule Processing} in section \ref{Analysis of Majority Rule Processing in Random Grid}. So, we first construct a monotone Markovian coupling $\{(X^-_k,X^+_k) : k \in \N\}$ between the Markov chains $\{X^+_k : k \in \N\}$ and $\{X^-_k : k \in \N\}$ (which denote versions of the Markov chain $\{X_k : k \in \N\}$ initialized at $X^+_0 = 1$ and $X^-_0 = 0$, respectively) such that along any edge BSC of the deterministic expander-based DAG, say $(X_{k,j},X_{k+1,i})$, $X_{k,j}^+$ and $X_{k,j}^-$ are either both copied with probability $1 - 2\delta$, or a shared independent $\Ber\big(\frac{1}{2}\big)$ bit is produced with probability $2\delta$ that becomes the value of both $X_{k+1,i}^+$ and $X_{k+1,i}^-$. This coupling satisfies the three properties delineated at the outset of the proof of Theorem \ref{Thm:Phase Transition in Random Grid with Majority Rule Processing} in section \ref{Analysis of Majority Rule Processing in Random Grid}. Furthermore, let $\sigma^+_k$ and $\sigma_k^-$ for $k \in \N$ be random variables with distributions $P_{\sigma_k|\sigma_0 = 1}$ and $P_{\sigma_k|\sigma_0 = 0}$, respectively (which means that $\sigma^+_0 = 1$ and $\sigma^-_0 = 0$).

Notice that Lemma \ref{Lemma: One-Step Broadcasting} implies the following result:
\begin{equation}
\label{Eq: One Step Bound 1}
\P\!\left(\sigma_k^- \leq \frac{1}{d^{6/5}} \, \middle| \, \sigma_{k-1}^- \leq \frac{1}{d^{6/5}}\right) \geq 1 - \exp\!\left(- \frac{L_{k-1}}{2 d^{12/5}}\right)
\end{equation}
for every pair of consecutive levels $k-1$ and $k$ such that $L_{k} = L_{k-1}$. Moreover, for every pair of consecutive levels $k-1$ and $k$ such that $L_{k} = 2 L_{k-1}$, we have:
\begin{align*}
& \P\!\left(\sigma_k^- > \frac{1}{d^{6/5}} \, \middle| \, \sigma_{k-1}^- \leq \frac{1}{d^{6/5}}\right) \\
& = \P\vast(\frac{1}{L_{k-1}}\sum_{i = 0}^{L_{k-1} - 1}{X_{k,i}^{-}} \\
& \quad \quad \enspace \, + \frac{1}{L_{k-1}}\sum_{j = L_{k-1}}^{L_{k} - 1}{X_{k,j}^{-}} > \frac{2}{d^{6/5}} \left| \, \sigma_{k-1}^- \leq \frac{1}{d^{6/5}}\vast)\right. \\
& = \P\vast(\!\left\{\frac{1}{L_{k-1}}\sum_{i = 0}^{L_{k-1} - 1}{X_{k,i}^{-}} > \frac{1}{d^{6/5}}\right\} \\
& \quad \quad \enspace \, \, \cup \!\left.\left\{ \frac{1}{L_{k-1}}\sum_{j = L_{k-1}}^{L_{k} - 1}{X_{k,j}^{-}} > \frac{1}{d^{6/5}}\right\} \right| \sigma_{k-1}^- \leq \frac{1}{d^{6/5}}\vast) \\
& \leq \P\!\left(\frac{1}{L_{k-1}}\sum_{i = 0}^{L_{k-1} - 1}{X_{k,i}^{-}} > \frac{1}{d^{6/5}} \, \middle| \, \sigma_{k-1}^- \leq \frac{1}{d^{6/5}}\right) \\
& \quad \, + \P\!\left(\frac{1}{L_{k-1}}\sum_{j = L_{k-1}}^{L_{k} - 1}{X_{k,j}^{-}} > \frac{1}{d^{6/5}}\, \middle| \, \sigma_{k-1}^- \leq \frac{1}{d^{6/5}}\right) \\
& \leq 2\exp\!\left(- \frac{L_{k-1}}{2 d^{12/5}}\right)
\end{align*}
where the first inequality follows from the union bound, and the final inequality follows from Lemma \ref{Lemma: One-Step Broadcasting} and the construction of our DAG (recall that two separate $d$-regular bipartite lossless $(d^{-6/5},d-2d^{4/5})$-expander graphs make up the edges between $X_{k-1}$ and $X_{k}^1$, and between $X_{k-1}$ and $X_{k}^2$, respectively). This implies that:
\begin{equation}
\label{Eq: One Step Bound 2}
\P\!\left(\sigma_k^- \leq \frac{1}{d^{6/5}} \, \middle| \, \sigma_{k-1}^- \leq \frac{1}{d^{6/5}}\right) \geq 1 - 2\exp\!\left(- \frac{L_{k-1}}{2 d^{12/5}}\right)
\end{equation}
for every pair of consecutive levels $k-1$ and $k$ such that $L_{k} = 2 L_{k-1}$, as well as for every pair of consecutive levels $k-1$ and $k$ such that $L_{k} = L_{k-1}$ (by slackening the bound in \eqref{Eq: One Step Bound 1}). Hence, the bound in \eqref{Eq: One Step Bound 2} holds for all levels $k \geq 2$.

Now fix any $\tau > 0$, and choose a sufficiently large value $K = K(\delta,\tau) \in \N$ (that depends on $\delta$ and $\tau$) such that:
\begin{equation}
\label{Eq: Bound on Tail of Sum 3}
2 \sum_{k = K+1}^{\infty}{\exp\!\left(- \frac{L_{k-1}}{2 d^{12/5}}\right)} \leq \tau \, . 
\end{equation}
Note that such $K$ exists because $2\sum_{k = 1}^{\infty}{1/k^2} = \pi^2 /3 < +\infty$, and for every $m \in \N$ and every $M^{\floor{2^{m-1}}} < k \leq M^{2^m}$,\footnote{Note that the floor function in $M^{\floor{2^{m-1}}}$ ensures that $M^{\floor{2^{m-1}}} = 1$ when $m = 0$.} we have: 
$$ \exp\!\left(- \frac{L_{k}}{2 d^{12/5}}\right) \leq \frac{1}{k^2} \quad \Leftrightarrow \quad k \leq \exp\!\left(\frac{2^m N}{4 d^{12/5}}\right) = M^{2^m} $$
where the right hand side holds due to the construction of our deterministic DAG (see \eqref{Eq: Expander Level Sizes}). Using the continuity of probability measures, observe that:
\begin{align*}
& \P\!\left(\bigcap_{k > K}\!{\left\{\sigma_k^- \leq \frac{1}{d^{6/5}}\right\}} \, \middle| \, \sigma_K^+ \geq 1 - \frac{1}{d^{6/5}}, \, \sigma_K^- \leq \frac{1}{d^{6/5}} \right) \\
& \quad \quad \quad \quad \quad \quad = \prod_{k > K}{\P\!\left(\sigma_k^- \leq \frac{1}{d^{6/5}} \, \middle| \, \sigma_{k-1}^- \leq \frac{1}{d^{6/5}}, \, A_{k}\right)} \\
& \quad \quad \quad \quad \quad \quad \geq \prod_{k > K}{1 - 2\exp\!\left(- \frac{L_{k-1}}{2 d^{12/5}}\right)} \\
& \quad \quad \quad \quad \quad \quad \geq 1 - 2 \sum_{k > K}{\exp\!\left(- \frac{L_{k-1}}{2 d^{12/5}}\right)} \\
& \quad \quad \quad \quad \quad \quad \geq 1 - \tau
\end{align*}
where $A_{k}$ for $k > K$ is the non-zero probability event defined as:
$$ A_{k} \triangleq \left\{ 
\begin{array}{lcl}
\!\!\!\left\{\sigma_{K}^+ \geq 1-\frac{1}{d^{6/5}}\right\} & \!\!\!, & \!\! k = K+1 \\
\!\!\!\left\{\sigma_{k-2}^- \leq \frac{1}{d^{6/5}},\dots,\sigma_{K}^- \leq \frac{1}{d^{6/5}}\right\} & & \\
\!\cap \left\{\sigma_{K}^+ \geq 1-\frac{1}{d^{6/5}}\right\} & \!\!\!, & \!\! k \geq K + 2
\end{array} \right. , $$
the first inequality follows from \eqref{Eq: One Step Bound 2}, and the final inequality follows from \eqref{Eq: Bound on Tail of Sum 3}. When using \eqref{Eq: One Step Bound 2} in the calculation above, we can neglect the effect of the conditioning event $A_k$, because a careful perusal of the proof of Lemma \ref{Lemma: One-Step Broadcasting} (which yields \eqref{Eq: One Step Bound 2} as a consequence) shows that \eqref{Eq: One Step Bound 2} continues to hold when we condition on events like $A_k$. Indeed, in step \eqref{Eq: Expander Key Conditional Independence Step} of the proof, the random variables $\{X_v : v \in V\backslash T\}$ are conditionally independent of the \textit{$\sigma$-algebra generated by random variables in previous layers of the DAG} given $V\backslash T$ and $\{N_v : v \in V\backslash T\}$. Moreover, this observation extends appropriately to the current Markovian coupling setting. We have omitted these details from Lemma \ref{Lemma: One-Step Broadcasting} for the sake of clarity. Therefore, we have for any $k > K$:
\begin{equation}
\label{Eq: Expander Stability with Extra Conditioning}
\P\!\left(\sigma_k^- \leq \frac{1}{d^{6/5}} \, \middle| \, \sigma_K^+ \geq 1 - \frac{1}{d^{6/5}}, \, \sigma_K^- \leq \frac{1}{d^{6/5}} \right) \geq 1 - \tau \, . 
\end{equation}
Likewise, due to the symmetry of the role of $0$'s and $1$'s in our deterministic DAG model, we can also prove mutatis mutandis that for any $k > K$:
\begin{equation}
\label{Eq: Expander Stability with Extra Conditioning 2}
\P\!\left(\sigma_k^+ \geq 1-\frac{1}{d^{6/5}} \, \middle| \, \sigma_K^+ \geq 1-\frac{1}{d^{6/5}}, \, \sigma_K^- \leq \frac{1}{d^{6/5}} \right) \geq 1 - \tau
\end{equation}
where $\tau$ and $K$ in \eqref{Eq: Expander Stability with Extra Conditioning 2} can be chosen to be the same as those in \eqref{Eq: Expander Stability with Extra Conditioning} without loss of generality.

Finally, define the event $E = \big\{\sigma_K^+ \geq 1-\frac{1}{d^{6/5}}, \, \sigma_K^- \leq \frac{1}{d^{6/5}}\big\}$, and observe that for all $k > K$:
\begin{align*}
& \P\!\left(\sigma_k^+ \geq \frac{1}{2}\right) - \P\!\left(\sigma_k^- \geq \frac{1}{2}\right) \\
& \geq \E\!\left[\left(\I\!\left\{\sigma_k^+ \geq \frac{1}{2}\right\} - \I\!\left\{\sigma_k^- \geq \frac{1}{2}\right\}\right)\I\!\left\{E\right\}\right] \\ 
& = \left(\P\!\left(\sigma_k^+ \geq \frac{1}{2} \, \middle| \, E \right) - \P\!\left(\sigma_k^- \geq \frac{1}{2} \, \middle| \, E \right)\right) \P(E) \\
& \geq \left(\P\!\left(\sigma_k^+ \geq 1-\frac{1}{d^{6/5}} \, \middle| \, E \right) - \P\!\left(\sigma_k^- > \frac{1}{d^{6/5}} \, \middle| \, E \right)\right) \P(E) \\
& \geq (1 - 2\tau) \P(E) > 0
\end{align*}
where the first inequality holds because $\I\big\{\sigma_{k}^+ \geq \frac{1}{2}\big\} - \I\big\{\sigma_{k}^- \geq \frac{1}{2}\big\} \geq 0$ almost surely due to the monotonicity of our Markovian coupling, the second inequality holds because $\frac{1}{d^{6/5}} < \frac{1}{2} < 1-\frac{1}{d^{6/5}}$ (since $d \geq 5$), and the final inequality follows from \eqref{Eq: Expander Stability with Extra Conditioning} and \eqref{Eq: Expander Stability with Extra Conditioning 2}. As argued in the proof of Theorem \ref{Thm:Phase Transition in Random Grid with Majority Rule Processing} in section \ref{Analysis of Majority Rule Processing in Random Grid}, this illustrates that $\limsup_{k \rightarrow \infty}{\P(\hat{S}_{k} \neq X_0)} < \frac{1}{2}$, which completes the proof.  
\end{proof}

We finally prove Proposition \ref{Prop: DAG Construction using Expander Graphs} by delineating two simple algorithms for constructing the constituent expander graphs of the deterministic bounded degree DAGs with $L_k = \Theta(\log(k))$ (from Theorem \ref{Thm: Reconstruction in Expander DAGs}) that admit reconstruction: a \textit{deterministic quasi-polynomial time} algorithm and a \textit{randomized polylogarithmic time} algorithm.

\renewcommand{\proofname}{Proof of Proposition \ref{Prop: DAG Construction using Expander Graphs}}

\begin{proof}
Fix any noise level $\delta \in \big(0,\frac{1}{2}\big)$ and consider the deterministic DAG from Theorem \ref{Thm: Reconstruction in Expander DAGs} with sufficiently large odd degree $d = d(\delta) \geq 5$ satisfying \eqref{Eq: Relation between d and delta}, level sizes given by \eqref{Eq: Expander Level Sizes} for sufficiently large $N = N(\delta) \in \N$ and $M = \exp(N/(4 d^{12/5})) \geq 2$, and edge configuration given by E-1, E-2, and E-3. The remainder of the proof is split into two parts. We first present a deterministic quasi-polynomial time algorithm to generate the constituent expander graphs of the aforementioned deterministic DAG, and then present a randomized polylogarithmic time algorithm for the same task.

\textbf{Part 1: Deterministic Quasi-Polynomial Time Algorithm}

We will require two useful facts:
\begin{enumerate}
\item For fixed sets of labeled vertices $U_n$ and $V_n$ with $|U_n| = |V_n| = n$, the total number of $d$-regular bipartite graphs $B_n = (U_n,V_n,E_n)$ is given by the multinomial coefficient:
$$ \binom{nd}{d,d,\dots,d} = \frac{(nd)!}{(d!)^n} \leq n^{nd} = \exp(d n \log(n)) $$
where we allow multiple edges between two vertices, and the inequality follows from e.g. \cite[Lemma 2.2]{CsiszarShields2004}. To see this, first attach $d$ edges to each vertex in $U_n$, and then successively count the number of ways to choose $d$ edges for each vertex in $V_n$.\footnote{Since the vertices in $U_n$ and $V_n$ are labeled, the total number of non-isomorphic $d$-regular bipartite graphs is smaller than $(nd)!/(d!)^n$. However, it is larger than $(nd)!/((2n)! (d!)^n)$, and the quasi-polynomial nature of our running time does not change with a more careful calculation of the number of non-isomorphic $d$-regular bipartite graphs.}  
\item Checking whether a given $d$-regular bipartite graph $B_n = (U_n,V_n,E_n)$ with $|U_n| = |V_n| = n$ satisfies \eqref{Eq: Expansion Property 2} for all subsets $S \subseteq U_n$ using brute force takes $O(n^2 \exp(n H(d^{-6/5})))$ time. To see this, note that there are $\binom{n}{n d^{-6/5}} \leq \exp(n H(d^{-6/5}))$ (cf. \cite[Lemma 2.2]{CsiszarShields2004}) subsets $S \subseteq U_n$ with $|S| = n d^{-6/5}$, and verifying \eqref{Eq: Expansion Property 2} takes $O(n^2)$ time for each such subset $S$.
\end{enumerate}

Consider any level $M^{2^{m-1}} < r \leq M^{2^m}$ with some associated $m \in \N\backslash\!\{0\}$. We show that the distinct expander graphs making up levels $0,\dots,r$ of the deterministic DAG in Theorem \ref{Thm: Reconstruction in Expander DAGs} can be constructed in quasi-polynomial time in $r$. In particular, we need to generate $m+1$ $d$-regular bipartite lossless $(d^{-6/5},d - 2d^{4/5})$-expander graphs $B_{N}, B_{2 N},\dots, B_{2^{m} N}$. So, for each $i \in \{0,\dots,m\}$, we generate $B_{2^{i} N}$ by exhaustively enumerating over the all possible $d$-regular bipartite graphs with $|U_{2^{i} N}| = |V_{2^{i} N}| = 2^{i} N$ until we find one that satisfies the desired expansion condition. (Note that such expander graphs are guaranteed to exist due to Corollary \ref{Cor: Lossless Expander Graph}.) Using the aforementioned facts 1 and 2, generating all $m+1$ desired graphs takes running time:
\begin{equation}
\begin{aligned}
& O\!\left((m+1) L_r^2 \exp(L_r H(d^{-6/5})) \exp(d L_r \log(L_r))\right) \\
& \qquad \qquad = O(\exp(\Theta(\log(r) \log\log(r))))
\end{aligned}
\end{equation}
where we also use the facts that $L_r = 2^m N = \Theta(\log(r))$ and $m = \Theta(\log\log(r))$ because $M^{2^{m-1}} < r \leq M^{2^m}$.\footnote{In our descriptions and analyses of the two algorithms, the Bachmann-Landau big-$O$ and big-$\Theta$ asymptotic notation conceal constants that depend on the fixed crossover probability parameter $\delta$.} Therefore, we can construct the constituent expander graphs in levels $0,\dots,r$ of our DAG in quasi-polynomial time with brute force. Note that we neglect details of how intermediate graphs are represented in our analysis. Moreover, we are not concerned with optimizing the quasi-polynomial running time.

\textbf{Part 2: Randomized Polylogarithmic Time Algorithm} 

We will require another useful fact:
\begin{enumerate}
\item[3)] A random $d$-regular bipartite graph $\mathsf{B} = (U_{n},V_{n},\mathsf{E})$ with $|U_{n}| = |V_{n}| = n$ can be generated according to the distribution described after Proposition \ref{Prop: Random Expander Graph} in $O(n)$ time. To see this, as outlined after Proposition \ref{Prop: Random Expander Graph}, we must first generate a uniform random perfect matching in a complete bipartite graph $\hat{B} = (\hat{U}_{dn},\hat{V}_{dn},\hat{E})$ such that $|\hat{U}_{dn}| = |\hat{V}_{dn}| = dn$. Observe that the edges in a perfect matching can be written as a permutation of the sequence $(1,2,\dots,dn)$, because each index and its corresponding value in the (permuted) sequence encodes an edge. So, perfect matchings in $\hat{B}$ are in bijective correspondence with permutations of the sequence $(1,2,\dots,dn)$. Therefore, we can generate a uniform random perfect matching by generating a uniform random permutation of $(1,2,\dots,dn)$ in $O(dn)$, or equivalently $O(n)$, time using the \textit{Fisher-Yates-Durstenfeld-Knuth shuffle}, cf. \cite[Section 3.4.2, p.145]{Knuth1997} and the references therein. (Note that we do not take the running time of the random number generation process into account.) All that remains is to create super-vertices, which can also be done in $O(n)$ time. 
\end{enumerate}

Suppose that the constant $N = N(\delta)$ also satisfies the additional condition:
\begin{equation}
\label{Eq: Additional Assumption}
N > \frac{e^2}{\left(6-4\sqrt{2}\right)\! \pi^2 d^{-6/5} \!\left(1 - d^{-6/5}\right)} 
\end{equation}
where $N$ still depends only on $\delta$ (through the dependence of $d$ on $\delta$). Consider any level $M^{2^{m-1}} < r \leq M^{2^m}$ with some associated $m \in \N\backslash\!\{0\}$. We present a \textit{Monte Carlo algorithm} that constructs the distinct expander graphs making up levels $0,\dots,r$ of the deterministic DAG in Theorem \ref{Thm: Reconstruction in Expander DAGs} with strictly positive success probability (that depends on $\delta$ but not on $r$) in polylogarithmic time in $r$. As in the previous algorithm, we ideally want to output $m+1$ $d$-regular bipartite lossless $(d^{-6/5},d - 2d^{4/5})$-expander graphs $B_{N}, B_{2 N},\dots, B_{2^{m} N}$. So, using the aforementioned fact 3, for each $i \in \{0,\dots,m\}$, we can generate a random $d$-regular bipartite graph $\mathsf{B} = (U_{2^{i} N},V_{2^{i} N},\mathsf{E})$ with $|U_{2^{i} N}| = |V_{2^{i} N}| = 2^{i} N$ according to the distribution in Corollary \ref{Cor: Lossless Expander Graph} in at most $O(2^{m} N)$ time. The total running time of the algorithm is thus:
\begin{equation}
O((m+1) 2^{m} N) = O(\log(r) \log\log(r))
\end{equation}
since $2^m N = \Theta(\log(r))$ and $m = \Theta(\log\log(r))$ as before. Furthermore, by Corollary \ref{Cor: Lossless Expander Graph}, the outputted random graphs satisfy \eqref{Eq: Expansion Property 2} for all relevant subsets of vertices with probability at least:
\begin{align}
& \prod_{i = 0}^{m}{\!\left(1 - \frac{e}{2\pi\sqrt{d^{-6/5} \! \left(1 - d^{-6/5}\right) 2^i N}}\right)} \nonumber \\
& \quad \geq 1 - \frac{e}{2\pi\sqrt{d^{-6/5} \! \left(1 - d^{-6/5}\right) N}}\sum_{i = 0}^{m}{\left(\frac{1}{\sqrt{2}}\right)^{\! i}} \nonumber \\
& \quad \geq 1 - \frac{e}{2\pi\sqrt{d^{-6/5} \! \left(1 - d^{-6/5}\right) N}}\sum_{i = 0}^{\infty}{\left(\frac{1}{\sqrt{2}}\right)^{\! i}} \nonumber\\
& \quad = 1 - \frac{e}{\left(2 - \sqrt{2}\right)\! \pi \sqrt{d^{-6/5} \! \left(1 - d^{-6/5}\right) N}} > 0
\label{Eq: Success Probability of Monte Carlo Algorithm}
\end{align}
where the first inequality is easily proved by induction, and the quantity in the final equality is strictly positive by assumption \eqref{Eq: Additional Assumption}. Hence, our Monte Carlo algorithm constructs the constituent expander graphs in levels $0,\dots,r$ of our DAG with strictly positive success probability in polylogarithmic time. Once again, note that we neglect details of how intermediate graphs are represented in our analysis. Moreover, we do not account for the running time of actually printing out levels $0,\dots,r$ of the DAG.

Finally, it is worth mentioning that the aforementioned fact 2 conveys that testing whether the $m+1$ $d$-regular random bipartite graphs our Monte Carlo algorithm generates are lossless $(d^{-6/5},d - 2d^{4/5})$-expander graphs takes polynomial running time:
\begin{equation}
\begin{aligned}
& O\!\left((m+1) 2^{2m} N^2 \exp(2^{m} N H(d^{-6/5}))\right) \\
& \qquad \qquad = O\!\left(\log\log(r) \log(r)^2 r^{8 d^{12/5} H(d^{-6/5})}\right)
\end{aligned}
\end{equation} 
where we use the fact that $2^m N < 2 N \log(r) / \! \log(M) = 8 d^{12/5} \log(r)$ since $r > M^{2^{m-1}}$ and $\log(M) = N / (4d^{12/5})$. Therefore, by repeatedly running our Monte Carlo algorithm until a valid set of $m+1$ $d$-regular bipartite lossless $(d^{-6/5},d - 2d^{4/5})$-expander graphs is produced, we obtain a \textit{Las Vegas algorithm} that runs in expected polynomial time $O\big(\log\log(r) \log(r)^2 r^{8 d^{12/5} H(d^{-6/5})}\big)$. 
\end{proof}

\renewcommand{\proofname}{Proof}

\section{Conclusion}
\label{Conclusion}

To conclude, we recapitulate the main contributions of this work. For random DAG models with indegree $d \geq 3$, we considered the intuitively reasonable setting where all Boolean processing functions are the majority rule. We proved in Theorem \ref{Thm:Phase Transition in Random Grid with Majority Rule Processing} that reconstruction of the root bit for this model is possible using the majority decision rule when $\delta < \delta_{\mathsf{maj}}$ and $L_k = \Omega(\log(k))$, and impossible using the ML decision rule in all but a zero measure set of DAGs when $\delta > \delta_{\mathsf{maj}}$ and $L_k$ is sub-exponential. On the other hand, when the indegree $d = 2$ so that the choices of Boolean processing functions are unclear, we derived a similar phase transition in Theorem \ref{Thm:Phase Transition in Random Grid with And-Or Rule Processing} for random DAG models with AND processing functions at all even levels and OR processing functions at all odd levels. These main results on random DAG models established the existence of deterministic DAGs where broadcasting is possible via the probabilistic method. For example, we conveyed in Corollary \ref{Cor: Existence of Grids where Reconstruction is Possible} that for any indegree $d \geq 3$, any noise level $\delta < \delta_{\mathsf{maj}}$, and $L_k = \Theta(\log(k))$, there exists a deterministic DAG with all majority processing functions such that reconstruction of the root bit is possible. In fact, Proposition \ref{Prop: Slow Growth of Layers} showed that the scaling $L_k = \Theta(\log(k))$ is optimal for such DAGs where broadcasting is possible. Finally, for any $\delta \in \big(0,\frac{1}{2}\big)$ and any sufficiently large bounded indegrees and outdegrees, we constructed explicit deterministic DAGs with $L_k = \Theta(\log(k))$ and all majority processing functions such that broadcasting is possible in Theorem \ref{Thm: Reconstruction in Expander DAGs} and Proposition \ref{Prop: DAG Construction using Expander Graphs}. Our construction utilized regular bipartite lossless expander graphs between successive layers of the DAGs, and we showed that the constituent expander graphs can be generated in either deterministic quasi-polynomial time or randomized polylogarithmic time in the number of levels. 

We close our discussion with a brief list of open problems that could serve as compelling directions for future research:
\begin{enumerate}
\item We conjectured in subsection \ref{Results on Random DAG Models} that in the random DAG model with $L_k = O(\log(k))$ and fixed $d \geq 3$, reconstruction is impossible for all choices of Boolean processing functions when $\delta \geq \delta_{\mathsf{maj}}$. Naturally, the analogous question for $d = 2$ is also open. Based on the reliable computation literature (see the discussion in subsection \ref{Results on Random DAG Models}), we can conjecture that majority processing functions are optimal for odd $d \geq 3$, and alternating levels of AND and OR processing is optimal for $d = 2$, but it is not obvious which processing functions are optimal for general even $d \geq 4$.
\item We provided some evidence for the previous conjecture in the odd $d \geq 3$ case in part 2 of Proposition \ref{Prop: Single Vertex Reconstruction}. A potentially simpler open question is to extend the proof of part 2 of Proposition \ref{Prop: Single Vertex Reconstruction} in Appendix \ref{Proof of Proposition Single Vertex Reconstruction} to show the impossibility of reconstruction using two (or more) vertices in the odd $d \geq 3$ case regardless of the choices of Boolean processing functions. 
\item It is unknown whether a result similar to part 2 of Proposition \ref{Prop: Single Vertex Reconstruction} holds for even $d \geq 2$. For the $d = 2$ setting, a promising direction is to try and exploit the \textit{potential function contraction} approach in \cite{Unger2008} instead of the TV distance contraction approach in \cite{HajekWeller1991,EvansSchulman2003}. 
\item As mentioned in subsection \ref{Explicit Construction of DAGs where Broadcasting is Possible}, it is an open problem to find a deterministic polynomial time algorithm to construct deterministic DAGs with sufficiently large $d$ and $L_k = \Theta(\log(k))$ given some $\delta$ for which broadcasting is possible. Indeed, the deterministic algorithm in Proposition \ref{Prop: DAG Construction using Expander Graphs} takes quasi-polynomial time.
\item As indicated above, for fixed $\delta$, Theorem \ref{Thm: Reconstruction in Expander DAGs} and Proposition \ref{Prop: DAG Construction using Expander Graphs} can only construct deterministic DAGs with sufficiently large $d$ such that broadcasting is possible. However, Corollary \ref{Cor: Existence of Grids where Reconstruction is Possible} elucidates that such deterministic DAGs exist for every $d \geq 3$ as long as $\delta < \delta_{\mathsf{maj}}$. It is an open problem to efficiently construct deterministic DAGs with $L_k = \Theta(\log(k))$ for arbitrary $d \geq 3$ and $\delta < \delta_{\mathsf{maj}}$, or $d = 2$ and $\delta < \delta_{\mathsf{andor}}$, such that broadcasting is possible.
\end{enumerate}

\appendices

\section{Proof of Proposition \ref{Prop: Single Vertex Reconstruction}}
\label{Proof of Proposition Single Vertex Reconstruction}

\begin{proof}
In this proof, we assume familiarity with the development and notation in section \ref{Analysis of Majority Rule Processing in Random Grid} and the proof of Theorem \ref{Thm:Phase Transition in Random Grid with Majority Rule Processing}. 

\textbf{Part 1:} We first prove part 1. Observe that for any $k \in \N\backslash\!\{0\}$:
\begin{align}
\P\!\left(X_{k,0} \neq X_{0,0}\right) & = \frac{1}{2}\P\!\left(X_{k,0}^+ = 0\right) + \frac{1}{2}\P\!\left(X_{k,0}^- = 1\right) \nonumber \\
& = \frac{1}{2}\E\!\left[\P\!\left(X_{k,0}^+ = 0 \middle| \sigma_{k}^{+}\right)\right] \nonumber \\
& \quad \, + \frac{1}{2}\E\!\left[\P\!\left(X_{k,0}^- = 1 \middle| \sigma_{k}^{-}\right)\right]  \nonumber \\
& = \frac{1}{2} \E\!\left[1 - \sigma_{k}^{+}\right] + \frac{1}{2} \E\!\left[\sigma_{k}^{-}\right]  \nonumber \\
& = \frac{1}{2} \left(1 - \E\!\left[\sigma_{k}^{+} - \sigma_{k}^{-}\right]\right)
\label{Eq: Equivalent Form of Error Probability}
\end{align}
where the third equality holds because $X_{k,0} \sim \Ber(\sigma)$ given $\sigma_k = \sigma$. To see this, recall the relation \eqref{Eq: Exchangeability} from subsection \ref{Random Grid Model}. Using this relation, it is straightforward to verify that $X_k$ is conditionally independent of $X_{0,0}$ given $\sigma_k$. Moreover, the conditional distribution $P_{X_k|\sigma_k}$ can be computed using \eqref{Eq: Exchangeability}, and this yields the desired conditional distribution $P_{X_{k,0}|\sigma_{k}}$ mentioned above. (We omit these calculations because it is intuitively obvious that random bits at level $k$ can be generated by first generating $\sigma_k$, then setting a uniformly and randomly chosen subset of vertices in $X_k$ of size $L_k \sigma_k$ to be $1$, and finally setting the remaining vertices in $X_k$ to be $0$.) 

Due to \eqref{Eq: Equivalent Form of Error Probability}, it suffices to prove that:
\begin{equation}
\label{Eq: STP Exchangeability}
\liminf_{k \rightarrow \infty}{\E\!\left[\sigma_{k}^{+} - \sigma_{k}^{-}\right]} > 0 \, . 
\end{equation}
To this end, recall from the proof of Theorem \ref{Thm:Phase Transition in Random Grid with Majority Rule Processing} that for any sufficiently small $\epsilon = \epsilon(\delta,d) > 0$ (that depends on $\delta$ and $d$) and any $\tau > 0$, there exists $K = K(\epsilon,\tau) \in \N$ (that depends on $\epsilon$ and $\tau$) such that for all $k > K$, \eqref{Eq: Stability with Extra Conditioning} and \eqref{Eq: Stability with Extra Conditioning 2} (which are reproduced below) hold:
\begin{align}
\label{Eq: Stability with Extra Conditioning Restated}
\P\!\left( A \middle| E \right) & \geq 1 - \tau \\
\P\!\left( B \middle| E \right) & \geq 1 - \tau 
\label{Eq: Stability with Extra Conditioning 2 Restated}
\end{align} 
where the events are:
\begin{align*}
A & = \left\{\sigma_k^+ \geq \hat{\sigma} - \epsilon\right\} , \\
B & = \left\{\sigma_k^- \leq 1-\hat{\sigma} + \epsilon\right\} , \\
E & = \left\{\sigma_K^+ \geq \hat{\sigma} - \epsilon, \, \sigma_K^- \leq 1-\hat{\sigma} + \epsilon\right\} .
\end{align*}
Now notice that for all $k > K$:
\begin{align*}
\E\!\left[\sigma_{k}^{+} - \sigma_{k}^{-}\right] & = \E\!\left[\sigma_{k}^{+} - \sigma_{k}^{-} \middle| E\right] \P(E) \\
& \quad \, + \E\!\left[\sigma_{k}^{+} - \sigma_{k}^{-} \middle| E^{c}\right] \P(E^{c}) \\
& \geq \E\!\left[\sigma_{k}^{+} - \sigma_{k}^{-} \middle| E\right] \P(E) \\
& = \P(E) \Big(\E\!\left[\sigma_{k}^{+} \middle| E, A\right] \P(A|E) \\
& \quad \quad \quad \enspace \, \, + \E\!\left[\sigma_{k}^{+} \middle| E, A^{c}\right] \P(A^{c}|E) \\
& \quad \quad \quad \enspace \, \, - \E\!\left[\sigma_{k}^{-} \middle| E\right] \!\Big) \\
& \geq \P(E) \Big(\E\!\left[\sigma_{k}^{+} \middle| E, A\right] \P(A|E) - \E\!\left[\sigma_{k}^{-} \middle| E\right] \!\Big) \\
& = \P(E) \Big(\E\!\left[\sigma_{k}^{+} \middle| E, A\right] \P(A|E) \\
& \quad \quad \quad \enspace \, \, - \E\!\left[\sigma_{k}^{-} \middle| E, B\right] \P(B|E) \\
& \quad \quad \quad \enspace \, \, - \E\!\left[\sigma_{k}^{-} \middle| E, B^{c}\right] \P(B^{c}|E) \Big) \\
& \geq \P(E) \Big(\E\!\left[\sigma_{k}^{+} \middle| E, A\right] \P(A|E) - \E\!\left[\sigma_{k}^{-} \middle| E, B\right] \\
& \quad \quad \quad \enspace \, \, - \P(B^{c}|E) \Big) \\
& \geq \P(E) \Big(\E\!\left[\sigma_{k}^{+} \middle| E, A\right] (1 - \tau) \\
& \quad \quad \quad \enspace \, \, - \E\!\left[\sigma_{k}^{-} \middle| E, B\right] - \tau \Big) \\
& \geq \P(E) \Big((\hat{\sigma} - \epsilon)(1 - \tau) - (1-\hat{\sigma} + \epsilon) - \tau \Big) \\
& = \P(E) \Big(\hat{\sigma} - (1 - \hat{\sigma}) - 2\epsilon - \tau(1 +  \hat{\sigma} - \epsilon) \Big) \\
& > 0
\end{align*}
where the second line holds because $\sigma_k^+ \geq \sigma_k^-$ almost surely (monotonicity), the fourth line holds because $\sigma_k^+ \geq 0$ almost surely, the sixth line holds because $\sigma_k^- \leq 1$ almost surely, the seventh line follows from \eqref{Eq: Stability with Extra Conditioning Restated} and \eqref{Eq: Stability with Extra Conditioning 2 Restated}, the eighth line follows from the definitions of $A$ and $B$, and the quantity in the ninth line does not depend on $k$ and is strictly positive for sufficiently small $\epsilon$ and $\tau$ (which now depends on $\delta$ and $d$) because $\hat{\sigma} > 1 - \hat{\sigma}$. Therefore, we have \eqref{Eq: STP Exchangeability}, which completes the proof of part 1.

\textbf{Part 2:} We next prove part 2. We begin with a few seemingly unrelated observations that will actually be quite useful later. Recall that $R_k = \inf_{n \geq k}{L_n}$ for every $k \in \N$ and $R_k = O\big(d^{2k}\big)$. Hence, there exists a constant $\alpha = \alpha(\delta,d) > 0$ (that depends on $\delta$ and $d$) such that for all sufficiently large $k$ (depending on $\delta$ and $d$), we have:
\begin{equation}
\label{Eq: Big O Consequence}
R_k \leq \alpha d^{2k} \, . 
\end{equation}
Let $\beta = \frac{\log(\alpha)}{6 \log(d)}$, and define the sequence $\{m(k) \in \N\}$ (indexed by $k$) as:
\begin{equation}
\label{Eq: Sequence m Definition}
m = m(k) \triangleq \floor[\Bigg]{\frac{\log\!\left(R_{\floor{(2 k/3) - \beta}}\right)}{4\log(d)}} 
\end{equation}
where $\frac{2 k}{3} \geq \beta$ for all sufficiently large $k$ (depending on $\delta$ and $d$) so that the sequence is eventually well-defined. This sequence satisfies the following conditions:
\begin{align}
\label{Eq: Limit of m}
\lim_{k \rightarrow \infty}{m(k)} & = \infty \, , \\
\lim_{k \rightarrow \infty}{\frac{d^{2m}}{R_{k-m}}} & = 0 \, .
\label{Eq: Domination by R}
\end{align}
The first limit in \eqref{Eq: Limit of m} holds because $\lim_{k \rightarrow \infty}{R_k} = \liminf_{k \rightarrow \infty}{L_k} = \infty$ (by assumption), and the second limit in \eqref{Eq: Domination by R} is true because for all sufficiently large $k$ (depending on $\delta$ and $d$):
\begin{align*}
\frac{d^{2m}}{R_{k-m}} & \leq \frac{\sqrt{R_{\floor{(2 k/3)-\beta}}}}{R_{k-m}} \\
& \leq \frac{\sqrt{R_{\floor{(2 k/3)-\beta}}}}{R_{\floor{(2k/3)-\beta}}} \\
& = \frac{1}{\sqrt{R_{\floor{(2 k/3)-\beta}}}} 
\end{align*}
where the first inequality follows from \eqref{Eq: Sequence m Definition}, and the second inequality holds because $\{R_k : k \in \N\}$ is non-decreasing, and $m \leq \log\!\big(\alpha d^{(4k/3) - 2\beta}\big)/(4\log(d)) = \frac{k}{3} + \beta$ for all sufficiently large $k$ using \eqref{Eq: Big O Consequence} and \eqref{Eq: Sequence m Definition}.

We next establish that a small portion of the random DAG $G$ above the vertex $X_{k,0}$ is a directed tree with high probability. To this end, for any sufficiently large $k \in \N$ (depending on $\delta$ and $d$) such that $k - m \geq 0$, let $G_{k}$ denote the (random) induced subgraph of the random DAG $G$ consisting of all vertices in levels $k-m,\dots,k$ that have a path to $X_{k,0}$, where $m = m(k)$ is defined in \eqref{Eq: Sequence m Definition}. (Note that $X_{k,0}$ always has a path to itself.) Moreover, define the event:
$$ T_{k} \triangleq \left\{G_{k} \text{ is a directed tree}\right\} . $$
Now, for any sufficiently large $k$ (depending on $\delta$ and $d$) such that $d^{2r} \leq R_{k - r} \leq L_{k - r}$ for every $r \in \{1,\dots,m\}$ (which is feasible due to \eqref{Eq: Domination by R}, and ensures that the ensuing steps are valid), notice that:
\begin{align}
\P(T_k) & = \prod_{r = 1}^{m}{\prod_{s = 0}^{d^{r}-1}{\!\left(1 - \frac{s}{L_{k-r}}\right)}} \nonumber \\
& \geq \prod_{r = 1}^{m}{\!\left(1 - \frac{1}{L_{k-r}}\sum_{s = 0}^{d^{r}-1}{s}\right)} \nonumber \\
& = \prod_{r = 1}^{m}{\!\left(1 - \frac{d^{r} (d^{r}-1)}{2 L_{k-r}}\right)} \nonumber \\
& \geq 1 - \frac{1}{2}\sum_{r = 1}^{m}{\frac{d^{r} (d^{r}-1)}{L_{k-r}}} \nonumber \\
& \geq 1 - \frac{1}{2 R_{k-m}}\sum_{r = 1}^{m}{d^{2r}} \nonumber \\
& = 1 - \frac{1}{2 R_{k-m}} \left(\frac{d^2 (d^{2m} - 1)}{d^2 - 1}\right) \nonumber \\
& \geq 1 - \left(\frac{d^2}{2 (d^2 - 1)}\right) \frac{d^{2m}}{R_{k-m}}
\label{Eq: Tree whp}
\end{align}
where the first equality holds because the edges of $G$ are chosen randomly and independently and we must ensure that the parents of every vertex in $G_k$ are distinct, the second and fourth inequalities are straightforward to prove by induction, and the third and sixth equalities follow from arithmetic and geometric series computations, respectively. The bound in \eqref{Eq: Tree whp} conveys that $\lim_{k \rightarrow \infty}{\P(T_k)} = 1$ due to \eqref{Eq: Domination by R}, i.e. $G_k$ is a directed tree with high probability for large $k$.

We introduce some useful notation for the remainder of this proof. First, condition on any realization of the random DAG $G$ such that the event $T_k$ occurs (for sufficiently large $k$ such that \eqref{Eq: Tree whp} holds). This also fixes the choices of Boolean processing functions at the vertices (which may vary between vertices and be graph dependent). For any vertex $X_{n,j}$ in the tree $G_k$ with $n < k$, let $\tilde{X}_{n,j}$ denote the output of the edge $\mathsf{BSC}(\delta)$ with input $X_{n,j}$ in $G_k$. (Hence, $\tilde{X}_{n,j}$ is the input of a Boolean processing function at a single vertex in level $n+1$ of $G_k$.) On the other hand, let $\tilde{X}_{k,0}$ be the output of an independent $\mathsf{BSC}(\delta)$ channel (which is not necessarily in $G$) with input $X_{k,0}$. Since $G_k$ is a tree, the random variables $\{\tilde{X}_{n,j} : X_{n,j} \text{ is a vertex of } G_k\}$ describe the values at the gates of a \textit{noisy formula} $\tilde{G}_k$, where the Boolean functions in $G_k$ correspond to $d$-input $\delta$-noisy gates in $\tilde{G}_k$ (and we think of the independent BSC errors as occurring at the gates rather than the edges). Next, in addition to conditioning on $G$ and $T_k$, we also condition on one of two realizations $X_{k-m} = x_0$ or $X_{k-m} = x_1$ for any $x_0,x_1 \in \{0,1\}^{L_{k-m}}$. In particular, corresponding to any binary random variable $Y \in \{0,1\}$ in $\tilde{G}_{k}$, define the following $2$-tuple in $[0,1]^2$, cf. \cite{HajekWeller1991,EvansSchulman2003}:
\begin{equation}
\label{Eq: Lambda Definition}
\begin{aligned}
\lambda^{Y} & \triangleq \Big(\P(Y \neq 0 | X_{k-m} = x_0, G, T_k) \, ,\\
& \quad \quad \P(Y \neq 1 | X_{k-m} = x_1, G, T_k)\Big) . 
\end{aligned}
\end{equation}
Lastly, for any constant $a \in [0,1]$, let $S(a) \subseteq [0,1]^2$ be the convex hull of the points $\{(a,a),(1-a,1-a),(0,1),(1,0)\}$, cf. \cite{HajekWeller1991,EvansSchulman2003}. With these definitions, we can state a version of the pivotal lemma in \cite[Lemma 2]{EvansSchulman2003}, which was proved in the $d = 3$ case in \cite{HajekWeller1991}.

\begin{lemma}[TV Distance Contraction in Noisy Formulae {\cite[Lemma 2]{EvansSchulman2003}}]
\label{Lemma: TV Distance Contraction in Noisy Formulae}
If $d \geq 3$ is odd and $\delta \geq \delta_{\mathsf{maj}}$, then for every possible $d$-input $\delta$-noisy gate in $\tilde{G}_k$ with inputs $Y_1,\dots,Y_d \in \{0,1\}$ and output $Y \in \{0,1\}$, we have:
$$ \lambda^{Y_1},\dots,\lambda^{Y_d} \in S(a) \text{ with } a \in \left[0,\frac{1}{2}\right] \quad \Rightarrow \quad \lambda^{Y} \in S(f(a)) $$
where the function $f:[0,1] \rightarrow [0,1]$ is defined in \eqref{Eq: General Von Neumann Recursion}.
\end{lemma}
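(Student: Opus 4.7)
The plan is to follow the strategy introduced for $d=3$ by Hajek and Weller \cite{HajekWeller1991} and extended to general odd $d$ by Evans and Schulman \cite{EvansSchulman2003}. The first step is to decompose the noisy gate as $Y = g(Y_1,\dots,Y_d) \oplus Z$, where $g:\{0,1\}^d \to \{0,1\}$ is the underlying deterministic Boolean gate and $Z \sim \Ber(\delta)$ is an independent noise bit. Conditioning on the event $T_k$ that $G_k$ is a tree guarantees that the inputs $Y_1,\dots,Y_d$ are conditionally independent given the bottom-level vector $X_{k-m}$, so the full joint conditional distribution of $(Y_1,\dots,Y_d)$ given $X_{k-m}\in\{x_0,x_1\}$ is completely encoded in the marginal pairs $\lambda^{Y_i}=(p_i,q_i)$.

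Next I would exploit convexity. For $a\in(0,1/2)$ the set $S(a)$ is a parallelogram whose extreme points are the two constants $(0,1),(1,0)$ and the two correlated sources $(a,a),(1-a,1-a)$; moreover the map $(\lambda^{Y_1},\dots,\lambda^{Y_d}) \mapsto \lambda^{Y}$ is multilinear (affine in each $\lambda^{Y_i}$ with the other coordinates held fixed). Hence it suffices to establish $\lambda^{Y} \in S(f(a))$ when every input $\lambda^{Y_i}$ is an extreme point of $S(a)$. Extremes of the constant type reduce the effective arity of $g$, which we may then pad with dummy coordinates back to $d$; extremes of the type $(1-a,1-a)$ differ from $(a,a)$ only by a bit flip, which can be absorbed into $g$ (replacing $g$ by its composition with a NOT on the corresponding coordinate, still a $d$-input Boolean gate). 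This reduces the problem to the canonical case where every $\lambda^{Y_i}=(a,a)$ and the $Y_i$ are conditionally i.i.d.\ given $X_{k-m}$, and where $g$ ranges over all $d$-input Boolean gates.

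In this canonical case a direct computation gives $\lambda^{g} = (g_{a},\, 1 - g_{1-a})$, where $g_{p} \triangleq \E[g(Y'_1,\dots,Y'_d)]$ for i.i.d.\ $\Ber(p)$ variables $Y'_i$, so the BSC composition yields $\lambda^{Y} = (\delta \ast g_{a},\, 1 - \delta \ast g_{1-a})$. Thus the lemma is reduced to a finite scalar optimization: over all Boolean $g:\{0,1\}^d\to\{0,1\}$, the point $(\delta \ast g_{a},\, 1 - \delta \ast g_{1-a})$ must satisfy the four linear inequalities defining $S(f(a))$. Since $h$ is the transfer function of the majority gate, the four vertices of $S(f(a))$ are precisely achieved by $g=\maj$, $g=\overline{\maj}$, $g\equiv 0$, and $g\equiv 1$; the content of the lemma is that no other gate can push $\lambda^{Y}$ outside the convex hull of these four points.

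The hard part will be exactly this extremization over Boolean gates, which is where the hypotheses $d\geq 3$ odd and $\delta\geq\delta_{\mathsf{maj}}$ enter. Oddness of $d$ ensures that $\maj$ is well-defined without tie-breaking and that $h(1/2)=1/2$; the condition $\delta\geq\delta_{\mathsf{maj}}$ ensures $f'(1/2)\leq 1$, yielding $f(a)\geq a$ on $[0,1/2]$ and hence the nesting $S(f(a))\subseteq S(a)$ that drives the iterative contraction. For $d=3$ the extremization is verified by enumerating the finitely many monotone gates in \cite[Proposition 2]{HajekWeller1991}, while for arbitrary odd $d$ a convexification-plus-monotonization argument in \cite[Section 3]{EvansSchulman2003} reduces the claim to a one-dimensional inequality in $a$ that can be checked by calculus using $\delta\geq\delta_{\mathsf{maj}}$. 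Rather than re-derive this optimization, I would invoke the established result from those works.
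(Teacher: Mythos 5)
Your proposal is correct and follows essentially the same route as the paper: the paper likewise does not re-derive the gate extremization but invokes \cite[Lemma 2]{EvansSchulman2003} directly, noting only that the tree structure of $G_k$ makes each gate's inputs conditionally independent given $X_{k-m}$, so that the Evans--Schulman argument goes through verbatim with the modified definition of $\lambda^{Y}$ in \eqref{Eq: Lambda Definition} --- exactly the point you identify as the crux. (One minor imprecision: the constant gates $g\equiv 0$ and $g\equiv 1$ yield $\lambda^{Y}=(\delta,1-\delta)$ and $(1-\delta,\delta)$, which lie on the edge of $S(f(a))$ joining $(0,1)$ and $(1,0)$ rather than at those vertices, but this does not affect the argument.)
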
  

We remark that Lemma \ref{Lemma: TV Distance Contraction in Noisy Formulae} differs from \cite[Lemma 2]{EvansSchulman2003} in the definition of the $2$-tuple $\lambda^{Y}$ for any binary random variable $Y$ in the noisy formula. Since \cite[Lemma 2]{EvansSchulman2003} is used to yield the impossibility results on reliable computation discussed in subsection \ref{Results on Random DAG Models}, \cite[Section III]{EvansSchulman2003} defines $\lambda^{Y}$ for this purpose as $\lambda^{Y} = (\P(Y \neq X|X = 0),\P(Y \neq X|X = 1))$, where $X$ is a single relevant binary input random variable of the noisy formula (and all other inputs are fixed). In contrast, we define $\lambda^{Y}$ in \eqref{Eq: Lambda Definition} by conditioning on any two realizations of the random variables $X_{k-m}$. This ensures that the inputs, say $\tilde{X}_{n,j_1},\dots,\tilde{X}_{n,j_d}$ for some $k-m \leq n < k$ and $j_1,\dots,j_d \in [L_n]$, of every $d$-input $\delta$-noisy gate in the noisy formula $\tilde{G}_k$ are conditionally independent given $X_{k-m}$, which is a crucial property required by the proof of \cite[Lemma 2]{EvansSchulman2003}. We omit the proof of Lemma \ref{Lemma: TV Distance Contraction in Noisy Formulae} because it is virtually identical to the proof of \cite[Lemma 2]{EvansSchulman2003} in \cite[Sections IV and V]{EvansSchulman2003}. (The reader can verify that every step in the proofs in \cite[Sections IV and V]{EvansSchulman2003} continues to hold with our definition of $\lambda^Y$.)

Lemma \ref{Lemma: TV Distance Contraction in Noisy Formulae} indeed demonstrates a strong data processing inequality style of contraction for TV distance, cf. \cite[Equation (1)]{Unger2007}. To see this, observe that $(x,y) \in S(a)$ with $a \in \big[0,\frac{1}{2}\big]$ if and only if $a \leq a x + (1-a)y \leq 1-a$ and $a \leq a y + (1-a) x \leq 1-a$. This implies that $a \leq \frac{x+y}{2} \leq 1-a$, and hence, $|1-x-y| \leq 1-2a$. Furthermore, for any binary random variable $Y$ in $\tilde{G}_k$, we have using \eqref{Eq: TV Distance Definition}:
\begin{equation}
\label{Eq: TV Distance Simple Formula}
\begin{aligned}
& \left\|P_{Y|G,T_k,X_{k-m} = x_1} - P_{Y|G,T_k,X_{k-m} = x_0}\right\|_{\mathsf{TV}} \\
& \quad \quad \quad \quad \quad = \left|1-\P(Y \neq 0 | X_{k-m} = x_0, G, T_k)\right. \\
& \quad \quad \quad \quad \quad \quad \quad \, - \left.\P(Y \neq 1 | X_{k-m} = x_1, G, T_k)\right| 
\end{aligned}
\end{equation}
where $P_{Y|G,T_k,X_{k-m} = x}$ denotes the conditional distribution of $Y$ given $\{X_{k-m} = x, G, T_k\}$ for any $x \in \{0,1\}^{L_{k-m}}$. Thus, if $\lambda^{Y} \in S(a)$ with $a \in \big[0,\frac{1}{2}\big]$, then we get:
$$ \left\|P_{Y|G,T_k,X_{k-m} = x_1} - P_{Y|G,T_k,X_{k-m} = x_0}\right\|_{\mathsf{TV}} \leq 1-2a \, . $$
Now notice that $\lambda^{\tilde{X}_{k-m,j}} \in S(0)$ for every random variable $\tilde{X}_{k-m,j}$ in $\tilde{G}_{k}$, where $j \in [L_{k-m}]$. As a result, a straightforward induction argument using Lemma \ref{Lemma: TV Distance Contraction in Noisy Formulae} (much like that in the proof in \cite[Section III]{EvansSchulman2003}) yields $\lambda^{\tilde{X}_{k,0}} \in S(f^{(m)}(0))$. This implies that:\footnote{The inequality in \eqref{Eq: SDPI for TV Distance in Noisy Formulae} can be perceived as a repeated application of a \emph{tensorized} universal upper bound on the \emph{Dobrushin curve} of any $d$-input $\delta$-noisy gate, cf. \cite[Section II-A]{PolyanskiyWu2016}.}
\begin{align}
& \left\|P_{\tilde{X}_{k,0}|G,T_k,X_{k-m} = x_1} - P_{\tilde{X}_{k,0}|G,T_k,X_{k-m} = x_0}\right\|_{\mathsf{TV}} \nonumber \\
& \quad \quad \quad \quad \quad \quad \quad \quad \quad \leq 1 - 2 f^{(m)}(0) \nonumber \\
& \quad \quad \quad \quad \quad \quad \quad \quad \quad = 1 - 2\!\left(\delta * g^{(m-1)}(0)\right)
\label{Eq: SDPI for TV Distance in Noisy Formulae}
\end{align} 
where the function $g : [0,1] \rightarrow [0,1]$ is given in \eqref{Eq:Binomial form of g} in section \ref{Analysis of Majority Rule Processing in Random Grid}, and the equality follows from \eqref{Eq: Equivalence of Recursions}. Moreover, since for any $y \in \{0,1\}$ and any $x \in \{0,1\}^{L_{k-m}}$:
\begin{align*}
& \P\!\left(\tilde{X}_{k,0} \neq y \middle| G, T_k, X_{k-m} = x\right) \\
& \quad \quad \quad \quad \quad = \delta * \P(X_{k,0} \neq y | G, T_k, X_{k-m} = x) \, ,
\end{align*}
a simple calculation using \eqref{Eq: TV Distance Simple Formula} shows that:
\begin{align*}
& \left\|P_{\tilde{X}_{k,0}|G,T_k,X_{k-m} = x_1} - P_{\tilde{X}_{k,0}|G,T_k,X_{k-m} = x_0}\right\|_{\mathsf{TV}} \\
& = (1-2\delta) \left\|P_{X_{k,0}|G,T_k,X_{k-m} = x_1} - P_{X_{k,0}|G,T_k,X_{k-m} = x_0}\right\|_{\mathsf{TV}} 
\end{align*}
which, using \eqref{Eq: SDPI for TV Distance in Noisy Formulae}, produces:
\begin{equation}
\label{Eq: SDPI for TV Distance in Noisy Formulae 2}
\begin{aligned}
& \left\|P_{X_{k,0}|G,T_k,X_{k-m} = x_1} - P_{X_{k,0}|G,T_k,X_{k-m} = x_0}\right\|_{\mathsf{TV}} \\
& \quad \quad \quad \quad \quad \quad \quad \quad \quad \leq \frac{1 - 2\!\left(\delta * g^{(m-1)}(0)\right)}{1-2\delta}
\end{aligned}
\end{equation}
for any $x_0,x_1 \in \{0,1\}^{L_{k-m}}$. The inequality in \eqref{Eq: SDPI for TV Distance in Noisy Formulae 2} conveys a contraction of the TV distance on the left hand side. Since $g$ has only one fixed point at $\frac{1}{2}$ when $\delta \geq \delta_{\mathsf{maj}}$ (see section \ref{Analysis of Majority Rule Processing in Random Grid}), and \eqref{Eq: Limit of m} holds, the \textit{fixed point theorem} (see e.g. \cite[Chapter 5, Exercise 22(c)]{Rudin1976}) gives us $\lim_{k \rightarrow \infty}{g^{(m-1)}(0)} = \frac{1}{2}$, where $g^{(m-1)}(0)$ increases to $\frac{1}{2}$. Hence, the upper bound in \eqref{Eq: SDPI for TV Distance in Noisy Formulae 2} decreases to $0$ as $k \rightarrow \infty$. Furthermore, note that \eqref{Eq: SDPI for TV Distance in Noisy Formulae 2} holds for all choices of Boolean processing functions (which may vary between vertices and be graph dependent), because Lemma \ref{Lemma: TV Distance Contraction in Noisy Formulae} is agnostic to the particular gates used in $\tilde{G}_k$.

Finally, recall that the \textit{Dobrushin contraction coefficient} of any Markov transition kernel $P_{Z|W}$ with input alphabet $\mathcal{W}$ and output alphabet $\mathcal{Z}$, such that $2 \leq |\mathcal{W}|,|\mathcal{Z}| < +\infty$, is defined as \cite{Dobrushin1956}:
\begin{align}
\label{Eq: Dobrushin Coefficient Definition}
\etaTV\!\left(P_{Z|W}\right) & \triangleq \sup_{\substack{P_W,Q_W : \\ P_W \neq Q_W}}{\frac{\left\|P_Z - Q_Z\right\|_{\mathsf{TV}}}{\left\|P_W - Q_W\right\|_{\mathsf{TV}}}} \\
& = \max_{w,w^{\prime} \in \mathcal{W}}{\left\|P_{Z|W = w} - P_{Z|W = w^{\prime}}\right\|_{\mathsf{TV}}} \in [0,1]
\label{Eq: Two Point Characterization}
\end{align}
where the supremum in the first equality is over all pairs of distinct probability distributions $P_W$ and $Q_W$ on $\mathcal{W}$, $P_Z$ and $Q_Z$ denote the output distributions on $\mathcal{Z}$ induced by passing $P_W$ and $Q_W$ through $P_{Z|W}$ respectively, the second equality is Dobrushin's two-point characterization of $\eta_{\mathsf{TV}}$ \cite{Dobrushin1956}, and for any $w \in \mathcal{W}$, $P_{Z|W = w}$ denotes the $w$th conditional distribution on $\mathcal{Z}$ in $P_{Z|W}$. For any fixed realization of the random DAG $G$ such that $T_k$ occurs (for sufficiently large $k$ such that \eqref{Eq: Tree whp} holds), observe that:
\begin{align}
& \left\|P_{X_{k,0}|G,T_k,X_{0,0} = 1} - P_{X_{k,0}|G,T_k,X_{0,0} = 0}\right\|_{\mathsf{TV}} \nonumber \\
& \enspace = \etaTV\!\left(P_{X_{k,0}|G,T_k,X_{0}}\right) \nonumber \\
& \enspace \leq \etaTV\!\left(P_{X_{k,0}|G,T_k,X_{k-m}}\right) \etaTV\!\left(P_{X_{k-m}|G,T_k,X_{0}}\right) \nonumber \\
& \enspace \leq \max_{x_0,x_1}{\left\|P_{X_{k,0}|G,T_k,X_{k-m} = x_1} - P_{X_{k,0}|G,T_k,X_{k-m} = x_0}\right\|_{\mathsf{TV}}} \nonumber \\
& \enspace \leq \frac{1 - 2\!\left(\delta * g^{(m-1)}(0)\right)}{1-2\delta}
\label{Eq: Contraction Based TV Bound}
\end{align}
where $P_{X_{k,0}|G,T_k,X_{0}}$, $P_{X_{k,0}|G,T_k,X_{k-m}}$, and $P_{X_{k-m}|G,T_k,X_{0}}$ are transition kernels from $X_0$ to $X_{k,0}$, from $X_{k-m}$ to $X_{k,0}$, and from $X_0$ to $X_{k-m}$ respectively, the first equality follows from \eqref{Eq: Two Point Characterization} where $P_{X_{k,0}|G,T_k,X_{0,0} = y}$ denotes the conditional distribution of $X_{k,0}$ given $\{X_{0,0} = y,G,T_k\}$ for any $y \in \{0,1\}$, the second inequality holds because $X_{0} \rightarrow X_{k-m} \rightarrow X_{k,0}$ forms a Markov chain (given $G$ and $T_k$) and $\etaTV$ is sub-multiplicative in its input stochastic matrix (this follows easily from \eqref{Eq: Dobrushin Coefficient Definition}\textemdash see e.g. \cite[Lemma 4.3]{Seneta1981}), the third inequality follows from \eqref{Eq: Two Point Characterization} and the maximum here is over all $x_0,x_1 \in \{0,1\}^{L_{k-m}}$, and the last inequality follows from \eqref{Eq: SDPI for TV Distance in Noisy Formulae 2}. Taking conditional expectations with respect to $G$ given $T_k$ in \eqref{Eq: Contraction Based TV Bound} yields:
$$ \E\!\left[\left\|P_{X_{k,0}|G}^{+} - P_{X_{k,0}|G}^{-}\right\|_{\mathsf{TV}} \middle| T_k\right] \leq \frac{1 - 2\!\left(\delta * g^{(m-1)}(0)\right)}{1-2\delta} $$
where $P_{X_{k,0}|G}^{+}$ and $P_{X_{k,0}|G}^{-}$ inside the conditional expectation correspond to the conditional probability distributions $P_{X_{k,0}|G,T_k,X_{0,0} = 1}$ and $P_{X_{k,0}|G,T_k,X_{0,0} = 0}$, respectively (as we condition on $T_k$). Therefore, we have:
\begin{align*}
& \E\!\left[\left\|P_{X_{k,0}|G}^{+} - P_{X_{k,0}|G}^{-}\right\|_{\mathsf{TV}} \right] \\
& \quad = \E\!\left[\left\|P_{X_{k,0}|G}^{+} - P_{X_{k,0}|G}^{-}\right\|_{\mathsf{TV}} \middle| T_k\right] \P(T_k) \\
& \quad \quad \, + \E\!\left[\left\|P_{X_{k,0}|G}^{+} - P_{X_{k,0}|G}^{-}\right\|_{\mathsf{TV}} \middle| T_k^c\right] (1 - \P(T_k)) \\
& \quad \leq \frac{1 - 2\!\left(\delta * g^{(m-1)}(0)\right)}{1-2\delta} + \left(\frac{d^2}{2 (d^2 - 1)}\right) \frac{d^{2m}}{R_{k-m}}
\end{align*}
using the tower property, the fact that TV distance is bounded by $1$, and \eqref{Eq: Tree whp}. Letting $k \rightarrow \infty$ establishes the desired result:
\begin{align*}
& \lim_{k \rightarrow \infty}{\E\!\left[\left\|P_{X_{k,0}|G}^{+} - P_{X_{k,0}|G}^{-}\right\|_{\mathsf{TV}} \right]} \\
& \leq \frac{\displaystyle{1 - 2\!\left(\delta * \lim_{k \rightarrow \infty}{g^{(m-1)}(0)}\right)}}{1-2\delta} + \left(\frac{d^2}{2 (d^2 - 1)}\right) \lim_{k \rightarrow \infty}{\frac{d^{2m}}{R_{k-m}}} \\
& = 0 
\end{align*}
because $\lim_{k \rightarrow \infty}{g^{(m-1)}(0)} = \frac{1}{2}$ (as noted earlier) and \eqref{Eq: Domination by R} holds. This completes the proof.
\end{proof}

\section{Proof of Corollary \ref{Cor: Existence of Grids where Reconstruction is Possible}} 
\label{Proof of Corollary Existence of Grids where Reconstruction is Possible}

\begin{proof}
This follows from applying the probabilistic method. Fix any $d \geq 3$, any $\delta \in (0,\delta_{\mathsf{maj}})$, and any sequence of level sizes satisfying $L_k \geq C(\delta,d)\log(k)$ for all sufficiently large $k$. We know from Theorem \ref{Thm:Phase Transition in Random Grid with Majority Rule Processing} that for the random DAG model with these parameters and majority processing functions, there exist $\epsilon = \epsilon(\delta,d) > 0$ and $K = K(\delta,d) \in \N$ (which depend on $\delta$ and $d$) such that:
$$ \forall k \geq K, \enspace \P\!\left(\hat{S}_{k} \neq X_{0}\right) \leq \frac{1}{2} - 2\epsilon \, . $$
Now, for $k \in \N$, define:
$$ P_k(G) \triangleq \P\!\left(h_{\mathsf{ML}}^k(X_k,G) \neq X_{0} \middle| G\right) $$ 
as the conditional probability that the ML decision rule based on the full $k$-layer state $X_k$ makes an error given the random DAG $G$, and let $E_k$ for $k \in \N$ be the set of all deterministic DAGs $\mathcal{G}$ with indegree $d$ and level sizes $\{L_m: m \in \N\}$ such that $P_k(\mathcal{G}) \leq \frac{1}{2} - \epsilon$. Observe that for every $k \geq K$:
\begin{align*}
\frac{1}{2} - 2\epsilon & \geq \P\!\left(\hat{S}_{k} \neq X_{0}\right) \\
& = \E\!\left[\P\!\left(\hat{S}_{k} \neq X_{0}\middle|G\right)\right] \\
& \geq \E\!\left[P_k(G)\right] \\
& = \E\!\left[P_k(G)\middle|G \in E_k\right] \P\!\left(G \in E_k\right) \\
& \quad \, + \E\!\left[P_k(G)\middle|G \not\in E_k\right] \P\!\left(G \not\in E_k\right) \\
& \geq \E\!\left[P_k(G)\middle|G \not\in E_k\right] \P\!\left(G \not\in E_k\right) \\
& \geq \left(\frac{1}{2} - \epsilon\right) \P\!\left(G \not\in E_k\right)
\end{align*}
where the second and fourth lines follow from the law of total expectation, the third line holds because the ML decision rule minimizes the probability of error, the fifth line holds because the first term in the previous line is non-negative, and the final line holds because $G \not\in E_k$ implies that $P_k(G) > \frac{1}{2} - \epsilon$. Then, we have for every $k \geq K$:
$$ \P\!\left(G \in E_k\right) \geq \frac{2\epsilon}{1 - 2\epsilon} > 0 \, . $$
Since $\{E_k : k \in \N\}$ form a non-increasing sequence of sets (because $P_k(G)$ is non-decreasing in $k$), we get via continuity:
$$ \P\!\left(G \in \bigcap_{k \in \N}{E_k}\right) = \lim_{k \rightarrow \infty}{\P\!\left(G \in E_k\right)} \geq \frac{2\epsilon}{1 - 2\epsilon} > 0 $$
which means that there exists a deterministic DAG $\mathcal{G}$ with indegree $d$, noise level $\delta$, level sizes $\{L_k : k \in \N\}$, and majority processing functions such that $P_k(\mathcal{G}) \leq \frac{1}{2} - \epsilon$ for all $k \in \N$. This completes the proof.
\end{proof}

\section{Proof of Proposition \ref{Prop: Slow Growth of Layers}} 
\label{Proof of Proposition Slow Growth of Layers}

\begin{proof} ~\newline
\indent
\textbf{Part 1:} We first prove part 1, where we are given a fixed deterministic DAG $\mathcal{G}$. Observe that the BSC along each edge of this DAG produces its output bit by either copying its input bit exactly with probability $1-2\delta$, or generating an independent $\Ber\big(\frac{1}{2}\big)$ output bit with probability $2\delta$. This is because the BSC transition matrix can be decomposed as:\footnote{This simple, but useful, idea is a specialization of more sophisticated Fortuin-Kasteleyn random cluster representations of Ising models \cite{Grimmett1997}, and has been exploited in the contexts of broadcasting on trees \cite[p.412]{Evansetal2000}, and reliable computation \cite[p.570]{Feder1989}.}
\begin{equation}
\def\arraystretch{1.1} 
\left[\begin{array}{cc} 1-\delta & \delta \\ \delta & 1-\delta \end{array}\right] = (1-2\delta)\left[\begin{array}{cc} 1 & 0 \\ 0 & 1 \end{array}\right] + (2\delta) \left[\begin{array}{cc} \frac{1}{2} & \frac{1}{2} \\ \frac{1}{2} & \frac{1}{2} \end{array}\right] .
\def\arraystretch{1} 
\end{equation}
Now consider the events:
$$ A_k \triangleq \left\{\parbox[]{17em}{all $d L_k$ edges from level $k-1$ to level $k$ generate independent output bits}\right\} $$
for $k \in \N\backslash\!\{0\}$, which have probabilities $\P(A_k) = (2\delta)^{d L_k}$ since the BSCs on the edges are independent. These events are mutually independent (once again because the BSCs on the edges are independent). Since the condition on $L_k$ in the proposition statement is equivalent to:
$$ (2\delta)^{d L_k} \geq \frac{1}{k} \enspace \text{for all sufficiently large } k \, , $$
we must have:
$$ \sum_{k = 1}^{\infty}{\P(A_k)} = \sum_{k = 1}^{\infty}{(2\delta)^{d L_k}} = +\infty \, . $$ 
The second Borel-Cantelli lemma then tells us that infinitely many of the events $\{A_k : k \in \N\backslash\!\{0\}\}$ occur almost surely, i.e. $\P\big(\bigcap_{m = 1}^{\infty} \bigcup_{k = m}^{\infty} A_{k}\big) = 1$. In particular, if we define $B_m \triangleq \bigcup_{k = 1}^{m} A_{k}$ for $m \in \N\backslash\!\{0\}$, then by continuity:
\begin{equation}
\label{Eq: Forget whp}
\lim_{m \rightarrow \infty}{\P\!\left(B_m\right)} = \P\!\left(\bigcup_{k = 1}^{\infty} A_{k}\right) = 1 \, .
\end{equation}
Finally, observe that:
\begin{align}
& \lim_{m \rightarrow \infty} \P\!\left(h_{\mathsf{ML}}^m(X_m,\mathcal{G}) \neq X_0\right) \nonumber \\
& \quad \quad = \lim_{m \rightarrow \infty} \P\!\left( h_{\mathsf{ML}}^m(X_m,\mathcal{G}) \neq X_0 \middle| B_m\right) \P(B_m) \nonumber \\
& \quad \quad \quad \quad \quad \enspace + \P\!\left(h_{\mathsf{ML}}^m(X_m,\mathcal{G}) \neq X_0 \middle| B_m^{c}\right) \P(B_m^{c}) \nonumber \\
& \quad \quad = \lim_{m \rightarrow \infty} \P\!\left( h_{\mathsf{ML}}^m(X_m,\mathcal{G}) \neq X_0 \middle| B_m\right) \nonumber \\
& \quad \quad = \lim_{m \rightarrow \infty} \frac{1}{2} \, \P\!\left( h_{\mathsf{ML}}^m(X_m,\mathcal{G}) = 1 \middle| B_m\right) \nonumber \\
& \quad \quad \quad \quad \quad \enspace + \frac{1}{2} \, \P\!\left( h_{\mathsf{ML}}^m(X_m,\mathcal{G}) = 0 \middle| B_m\right) \nonumber \\
& \quad \quad = \frac{1}{2}
\label{Eq: ML Decoder Fails}
\end{align}
where $h_{\mathsf{ML}}^m(\cdot,\mathcal{G}):\{0,1\}^{L_m} \rightarrow \{0,1\}$ denotes the ML decision rule at level $m$ based on $X_m$ (given knowledge of the DAG $\mathcal{G}$), the second equality uses \eqref{Eq: Forget whp}, and the third equality holds because $X_{0,0} \sim \Ber\big(\frac{1}{2}\big)$ is independent of $B_m$, and $X_m$ is conditionally independent of $X_0$ given $B_m$. The condition in \eqref{Eq: ML Decoder Fails} is equivalent to the TV distance condition in part 1 of the proposition statement; this proves part 1.

\textbf{Part 2:} To prove part 2, notice that part 1 immediately yields:
$$ \lim_{k \rightarrow \infty}{\left\|P_{X_{k}|G}^+ - P_{X_{k}|G}^-\right\|_{\mathsf{TV}}} = 0 \quad \textit{pointwise} $$
which completes the proof. 
\end{proof}

\section{Proof of Proposition \ref{Prop:Broadcasting in Unbounded Degree DAG Model}}
\label{Proof of Proposition Broadcasting in Unbounded Degree DAG Model}

\begin{proof} ~\newline
\indent
\textbf{Part 1:} Fix any noise level $\delta \in \big(0,\frac{1}{2}\big)$ and any constant $\epsilon \in \big(0,\frac{1}{4}\big)$. Furthermore, tentatively suppose that $L_k \geq A(\epsilon,\delta) \sqrt{\log(k)}$ for all sufficiently large $k$, where the constant $A(\epsilon,\delta)$ is defined as:
\begin{equation}
\label{Eq: Generalized Level Size Constant Definition}
A(\epsilon,\delta) \triangleq \frac{2}{(1-2\delta) \epsilon \sqrt{1-2\epsilon}} \, .
\end{equation}
Now consider the deterministic DAG $\mathcal{G}$ such that each vertex at level $k \in \N\backslash\!\{0\}$ is connected to all $L_{k-1}$ vertices at level $k-1$ and all Boolean processing functions are the majority rule. (Note that when there is only one input, the majority rule behaves like the identity map.) For all $k \in \N\backslash\!\{0\}$, since $X_k$ is an exchangeable sequence of random variables given $\sigma_0$, $\sigma_k$ is a sufficient statistic of $X_k$ for performing inference about $\sigma_0$, where $\sigma_k$ is defined in \eqref{Eq: Empirical Probability of Unity Definition} (cf. subsection \ref{Random Grid Model}). We next prove a useful ``one-step broadcasting'' lemma involving $\sigma_k$'s for this model.

\begin{lemma}[One-Step Broadcasting in Unbounded Degree DAG]
\label{Lemma: One-Step Broadcasting in Unbounded Degree DAG}
Under the aforementioned assumptions, there exists $K = K(\epsilon,\delta) \in \N$ (that depends on $\epsilon$ and $\delta$) such that for all $k \geq K$, we have:
$$ \P\!\left(\sigma_{k} \geq \frac{1}{2} + \epsilon \, \middle|\, \sigma_{k-1} \geq \frac{1}{2} + \epsilon\right) \geq 1 - \left(\frac{1}{k-1}\right)^{\! 2} . $$
\end{lemma}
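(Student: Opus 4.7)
The plan is a two-stage concentration argument that matches the two-stage randomness in one broadcasting step. First, since the majority rule is monotone non-decreasing, the function $g(\sigma) := \P(X_{k,j}=1 \mid \sigma_{k-1}=\sigma)$ is non-decreasing in $\sigma$, so $\P(\sigma_k \geq \tfrac12+\epsilon \mid \sigma_{k-1}=\sigma)$ is non-decreasing in $\sigma$ as well. Averaging against the conditional law of $\sigma_{k-1}$ on $\{\sigma_{k-1}\geq \tfrac12+\epsilon\}$ therefore reduces the lemma to proving $\P(\sigma_k \geq \tfrac12+\epsilon \mid \sigma_{k-1}=\sigma) \geq 1 - (k-1)^{-2}$ for each fixed $\sigma \geq \tfrac12+\epsilon$ in the support of $\sigma_{k-1}$. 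Fixing such a $\sigma$, the $L_{k-1}$ inputs to each level-$k$ majority gate are i.i.d.\ $\mathrm{Ber}(\sigma * \delta)$ with $\sigma * \delta \geq \tfrac12 + (1-2\delta)\epsilon$, so Hoeffding's inequality yields the inner bound
\[
 p_0 \;:=\; 1-g(\sigma) \;\leq\; \exp\!\bigl(-2L_{k-1}((1-2\delta)\epsilon)^2\bigr).
\]

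Second, conditional on $\sigma_{k-1}=\sigma$, exchangeability makes the bits $\{X_{k,j}\}_{j \in [L_k]}$ i.i.d.\ $\mathrm{Ber}(1-p_0)$, so the count of zeros at level $k$ is $\mathrm{binomial}(L_k, p_0)$. Since $p_0 < q := \tfrac12-\epsilon$ for sufficiently large $k$, I would apply the Chernoff--Cram\'er bound
\[
 \P\!\bigl(\sigma_k < \tfrac12+\epsilon \,\big|\, \sigma_{k-1}=\sigma\bigr) \leq \exp\!\bigl(-L_k\,\mathrm{KL}(q \,\|\, p_0)\bigr),
\]
where $\mathrm{KL}(q\|p) = q\log(q/p) + (1-q)\log((1-q)/(1-p))$. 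The identity $\mathrm{KL}(q\|p_0) = -H(q) - q\log p_0 - (1-q)\log(1-p_0)$ together with the estimate $-(1-q)\log(1-p_0)\geq 0$ and the inner bound $-\log p_0 \geq 2L_{k-1}((1-2\delta)\epsilon)^2$ then gives the lower bound
\[
 L_k\,\mathrm{KL}(q\|p_0) \;\geq\; 2\,q\,((1-2\delta)\epsilon)^2\,L_k L_{k-1} \;-\; H(q)\,L_k,
\]
where $H$ is the binary Shannon entropy function already defined in the paper. With $A(\epsilon,\delta) = 2/((1-2\delta)\epsilon\sqrt{1-2\epsilon})$, a direct computation shows $2q((1-2\delta)\epsilon)^2 A^2 = 4$; plugging in $L_k,L_{k-1} \geq A\sqrt{\log k}$ makes the Chernoff exponent at least $4\log(k-1)(1-o(1)) - O(\sqrt{\log k})$, which exceeds $2\log(k-1)$ for every $k \geq K(\epsilon,\delta)$ with $K(\epsilon,\delta)$ depending only on $\epsilon$ and $\delta$. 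This yields the required bound $\P \leq (k-1)^{-2}$.

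The main obstacle is that the hypothesis $L_k = \Omega(\sqrt{\log k})$ is too weak for a naive Markov bound on the number of zeros: Markov's inequality would give only $\P \leq L_k p_0/(L_k(\tfrac12-\epsilon)) = p_0/(\tfrac12-\epsilon)$, which would force $L_{k-1} = \Omega(\log k)$ to beat $(k-1)^{-2}$. The full Chernoff (relative-entropy) bound is essential here because it converts the exponent into the \emph{product} $L_k L_{k-1} = \Omega(\log k)$, matching $2\log(k-1)$. The constant $A(\epsilon,\delta)$ in \eqref{Eq: Generalized Level Size Constant Definition} is calibrated precisely so that $2q((1-2\delta)\epsilon)^2 A^2 = 4$; this leaves the factor of $2$ of slack needed to absorb the linear-in-$L_k$ correction $H(q)L_k = O(\sqrt{\log k})$ while still outrunning the target $2\log(k-1)$.
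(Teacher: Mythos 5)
Your proposal is correct and follows essentially the same route as the paper's proof: Hoeffding's inequality on the inner majority gate to get $1-g(\sigma)\le \exp(-2L_{k-1}\epsilon^2(1-2\delta)^2)$ uniformly over $\sigma\ge\frac12+\epsilon$, the Chernoff--Hoeffding relative-entropy bound on the level-$k$ binomial, the same lower bound on the KL divergence obtained by dropping the $-(1-q)\log(1-p_0)$ term, and the same calibration of $A(\epsilon,\delta)$ so that $L_kL_{k-1}\ge A^2\log(k-1)$ delivers exponent at least $2\log(k-1)$ after the entropy correction is absorbed by the factor-of-two slack. One small imprecision: since the hypothesis only lower-bounds $L_k$, you cannot write the correction term as $-O(\sqrt{\log k})$; the clean fix (and what the paper does) is to factor the exponent as $L_k\bigl(2q((1-2\delta)\epsilon)^2L_{k-1}-H(q)\bigr)$ and note that $H(q)\le q((1-2\delta)\epsilon)^2L_{k-1}$ for all $k\ge K(\epsilon,\delta)$ because $L_{k-1}\to\infty$.
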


\begin{proof}
Suppose we are given that $\sigma_{k-1} = \sigma \geq \frac{1}{2} + \epsilon$ for any $k \in \N\backslash\!\{0\}$. Then, $\{X_{k,j} : j \in [L_k]\}$ are conditionally i.i.d. $\Ber(\P(X_{k,0} = 1|\sigma_{k-1} = \sigma))$ and $L_k \sigma_k \sim \mathsf{binomial}(L_k,\P(X_{k,0} = 1|\sigma_{k-1} = \sigma))$, where $\P(X_{k,0} = 1|\sigma_{k-1} = \sigma) = \E[\sigma_k | \sigma_{k-1} = \sigma]$. Furthermore, since $X_{k,0}$ is the majority of the values of $X_{k-1,0},\dots,X_{k-1,L_{k-1}-1}$ after passing them through independent $\mathsf{BSC}(\delta)$'s, we have:
\begin{align}
\E[\sigma_k | \sigma_{k-1} = \sigma] & = \P(X_{k,0} = 1|\sigma_{k-1} = \sigma) \nonumber \\
& = 1 - \P\!\left(\sum_{i=1}^{L_{k-1} \sigma}{Z_i} + \!\sum_{j=1}^{L_{k-1}(1-\sigma)}{Y_j} < \frac{L_{k-1}}{2}\!\right) \nonumber \\
& \geq 1 - \exp\!\left(\!-2 L_{k-1} \left(\frac{1}{2} - \sigma * \delta\right)^{\! 2}\right) \nonumber \\
& \geq 1 - \exp\!\left(\!-2 L_{k-1} \left(\frac{1}{2} - \left(\frac{1}{2} + \epsilon\right) * \delta\right)^{\! 2}\right) \nonumber \\
& = 1 - \exp\!\left(-2 L_{k-1} \epsilon^2 (1 - 2\delta)^2\right) 
\label{Eq: Lower Bound on Conditional Mean}
\end{align} 
where $Z_i$ are i.i.d. $\Ber(1 - \delta)$, $Y_j$ are i.i.d. $\Ber(\delta)$, $\{Z_i : i \in \{1,\dots,L_{k-1}\sigma\} \}$ and $\{Y_j : j \in \{1,\dots,L_{k-1}(1-\sigma)\}\}$ are independent, the first inequality follows from Hoeffding's inequality using the fact that $\sigma * \delta > \frac{1}{2}$ (because $\sigma > \frac{1}{2}$), and the second inequality holds because $\sigma \geq \frac{1}{2} + \epsilon$, which implies that $\sigma * \delta \geq \big(\frac{1}{2} + \epsilon\big) * \delta > \frac{1}{2}$. 

Next, observe that there exists $K = K(\epsilon,\delta) \in \N$ (that depends on $\epsilon$ and $\delta$) such that for all $k \geq K$, we have:
$$ \exp\!\left(-2 L_{k-1} \epsilon^2 (1 - 2\delta)^2\right) \leq \epsilon $$
because $\lim_{k \rightarrow \infty}{L_{k}} = \infty$ by assumption. So, for any $k \geq K$, this yields the bound:
$$ \E[\sigma_k | \sigma_{k-1} = \sigma] \geq 1 - \epsilon > \frac{1}{2} + \epsilon $$
using \eqref{Eq: Lower Bound on Conditional Mean} and the fact that $\epsilon < \frac{1}{4}$. As a result, we can apply the Chernoff-Hoeffding bound, cf. \cite[Theorem 1]{Hoeffding1963}, to $\sigma_k$ for any $k \geq K$ and get:
\begin{align*}
& \P\!\left(\sigma_k < \frac{1}{2} + \epsilon \, \middle|\, \sigma_{k-1} = \sigma\right) \\
& \quad \quad \quad \leq \exp\!\left(- L_k D\!\left(\frac{1}{2} + \epsilon \,\middle|\middle|\,\E[\sigma_k | \sigma_{k-1} = \sigma]\right)\right) 
\end{align*}
where $D(\alpha||\beta) \triangleq \alpha \log(\alpha/\beta) + (1-\alpha)\log((1-\alpha)/(1-\beta))$ for $\alpha,\beta \in (0,1)$ denotes the binary \textit{Kullback-Leibler divergence} (or relative entropy) function. Notice that:
\begin{align*}
& D\!\left(\frac{1}{2} + \epsilon \,\middle|\middle|\,\E[\sigma_k | \sigma_{k-1} = \sigma]\right) \\
& \quad \geq -H\!\left(\frac{1}{2}+\epsilon\right) - \left(\frac{1}{2}-\epsilon\right)\log(1-\E[\sigma_k | \sigma_{k-1} = \sigma]) \\
& \quad \geq L_{k-1}\epsilon^2 (1-2\epsilon) (1-2\delta)^2 - H\!\left(\frac{1}{2}+\epsilon\right)
\end{align*}
where $H(\cdot)$ denotes the binary Shannon entropy function (see e.g. Proposition \ref{Prop: Random Expander Graph}), the first inequality holds because $\log(\E[\sigma_k | \sigma_{k-1} = \sigma]) < 0$, and the second inequality follows from \eqref{Eq: Lower Bound on Conditional Mean}. Hence, we have for any $k \geq K$:
\begin{align*}
& \P\!\left(\sigma_k < \frac{1}{2} + \epsilon \, \middle|\, \sigma_{k-1} = \sigma\right) \\
& \leq \exp\!\left(- L_{k-1} L_{k} \epsilon^2 (1-2\epsilon) (1-2\delta)^2 + L_{k} H\!\left(\frac{1}{2}+\epsilon\right)\right) 
\end{align*}
where we can multiply both sides by $\P(\sigma_{k-1} = \sigma)$ and then sum over all $\sigma \geq \frac{1}{2} + \epsilon$ (as in the proof of \eqref{Eq: Stability whp} within the proof of Theorem \ref{Thm:Phase Transition in Random Grid with Majority Rule Processing} in section \ref{Analysis of Majority Rule Processing in Random Grid}) to get:
\begin{align*}
& \P\!\left(\sigma_k < \frac{1}{2} + \epsilon \, \middle|\, \sigma_{k-1} \geq \frac{1}{2} + \epsilon\right) \\
& \leq \exp\!\left(- L_{k-1} L_{k} \left(\epsilon^2 (1-2\epsilon) (1-2\delta)^2 - \frac{H\!\left(\frac{1}{2}+\epsilon\right)}{L_{k-1}}\right)\right) . 
\end{align*}
Since $\lim_{k \rightarrow \infty}{L_k} = \infty$ by assumption, we can choose $K = K(\epsilon,\delta)$ to be sufficiently large so that for all $k \geq K$, we also have $H\big(\frac{1}{2}+\epsilon\big)/L_{k-1} \leq \epsilon^2 (1-2\epsilon) (1-2\delta)^2 \!/2$. Thus, for every $k \geq K$:
\begin{equation}
\label{Eq: One-Step Broadcasting Inequality for Unbounded Degree DAG}
\begin{aligned}
& \P\!\left(\sigma_k < \frac{1}{2} + \epsilon \, \middle|\, \sigma_{k-1} \geq \frac{1}{2} + \epsilon\right) \\
& \quad \quad \leq \exp\!\left(- L_{k-1} L_{k} \frac{\epsilon^2 (1-2\epsilon) (1-2\delta)^2}{2}\right) . 
\end{aligned}
\end{equation}

Finally, we once again increase $K = K(\epsilon,\delta)$ if necessary to ensure that $L_{k-1} \geq A(\epsilon,\delta) \sqrt{\log(k-1)}$ for every $k \geq K$ (as presumed earlier). This implies that for all $k \geq K$:
\begin{align*}
L_{k-1} L_{k} & \geq A(\epsilon,\delta)^2 \sqrt{\log(k-1) \log(k)} \\
& \geq A(\epsilon,\delta)^2 \log(k-1) 
\end{align*}
which, using \eqref{Eq: Generalized Level Size Constant Definition} and \eqref{Eq: One-Step Broadcasting Inequality for Unbounded Degree DAG}, produces:
\begin{align*}
\P\!\left(\sigma_k < \frac{1}{2} + \epsilon \, \middle|\, \sigma_{k-1} \geq \frac{1}{2} + \epsilon\right) & \leq \exp(- 2 \log(k-1)) \\
& = \left(\frac{1}{k-1}\right)^{\! 2} 
\end{align*}
for all $k \geq K$. This proves Lemma \ref{Lemma: One-Step Broadcasting in Unbounded Degree DAG}.
\end{proof}

Lemma \ref{Lemma: One-Step Broadcasting in Unbounded Degree DAG} is an analogue of \eqref{Eq: Stability whp} in the proof of Theorem \ref{Thm:Phase Transition in Random Grid with Majority Rule Processing} in section \ref{Analysis of Majority Rule Processing in Random Grid}. It illustrates that if
the proportion of 1's is large in a given layer of $\mathcal{G}$, then it remains large in the next layer of $\mathcal{G}$ with high probability. 

To proceed, we specialize Lemma \ref{Lemma: One-Step Broadcasting in Unbounded Degree DAG} by arbitrarily selecting a particular value of $\epsilon$, say $\epsilon = \frac{7}{32} \in \big(0,\frac{1}{4}\big)$. This implies that the constant $A(\epsilon,\delta)$ becomes:
\begin{equation}
\label{Eq: Specialized Level Size Constant}
A(\delta) = A\!\left(\frac{7}{32},\delta\right) = \frac{256}{21 (1-2\delta)}
\end{equation}
using \eqref{Eq: Generalized Level Size Constant Definition}. In the proposition statement, it is assumed that $L_k \geq A(\delta) \sqrt{\log(k)}$ for all sufficiently large $k$. Thus, Lemma \ref{Lemma: One-Step Broadcasting in Unbounded Degree DAG} holds with $\epsilon = \frac{7}{32} \in \big(0,\frac{1}{4}\big)$ under the assumptions of part 1 of Proposition \ref{Prop:Broadcasting in Unbounded Degree DAG Model}. At this point, we can execute the proof of part 1 of Theorem \ref{Thm:Phase Transition in Random Grid with Majority Rule Processing} in section \ref{Analysis of Majority Rule Processing in Random Grid} mutatis mutandis (with Lemma \ref{Lemma: One-Step Broadcasting in Unbounded Degree DAG} playing the pivotal role of \eqref{Eq: Stability whp}) to establish part 1 of Proposition \ref{Prop:Broadcasting in Unbounded Degree DAG Model}. We omit the details of this proof for brevity.

\textbf{Part 2:} To prove part 2, we use the proof technique of part 1 of Proposition \ref{Prop: Slow Growth of Layers} in Appendix \ref{Proof of Proposition Slow Growth of Layers}. Recall that the $\mathsf{BSC}(\delta)$ along each edge of the DAG $\mathcal{G}$ produces its output bit by either copying its input bit with probability $1-2\delta$, or generating an independent $\Ber\big(\frac{1}{2}\big)$ output bit with probability $2\delta$. As before, consider the mutually independent events:
$$ A_k \triangleq \left\{\parbox[]{19em}{all $L_{k-1} L_k$ edges from level $k-1$ to level $k$ generate independent output bits}\right\} $$
for $k \in \N\backslash\!\{0\}$, which have probabilities $\P(A_k) = (2\delta)^{L_{k-1} L_k}$. Define the constant $B(\delta)$ as:
\begin{equation}
\label{Eq: Converse Level Size Constant Definition}
B(\delta) \triangleq \frac{1}{\sqrt{\log\!\left(\frac{1}{2\delta}\right)}} \, .
\end{equation}
Since we assume in the proposition statement that $L_k \leq B(\delta) \sqrt{\log(k)}$ for all sufficiently large $k$, we have:
$$ L_{k-1} L_k \leq \frac{\sqrt{\log(k-1)\log(k)}}{\log\!\left(\frac{1}{2\delta}\right)} \leq \frac{\log(k)}{\log\!\left(\frac{1}{2\delta}\right)} $$
for all sufficiently large $k$, which implies that:
$$ (2\delta)^{L_{k-1} L_k} \geq \frac{1}{k} \,  $$
for all sufficiently large $k$. Hence, we get:
$$ \sum_{k = 1}^{\infty}{\P(A_k)} = \sum_{k = 1}^{\infty}{(2\delta)^{L_{k-1} L_k}} = +\infty \, , $$
and the second Borel-Cantelli lemma establishes that infinitely many of the events $\{A_k : k \in \N\backslash\!\{0\}\}$ occur almost surely, or equivalently, $\P\big(\bigcap_{m = 1}^{\infty} \bigcup_{k = m}^{\infty} A_{k}\big) = 1$. As a result, we can define the events $B_m \triangleq \bigcup_{k = 1}^{m} A_{k}$ for $m \in \N\backslash\!\{0\}$ such that $\lim_{m \rightarrow \infty}{\P(B_m)} = 1$. Therefore, we have (as before):
$$ \lim_{m \rightarrow \infty} \P\!\left(h_{\mathsf{ML}}^m(X_m,\mathcal{G}) \neq X_0\right) = \frac{1}{2} $$
where $h_{\mathsf{ML}}^m(\cdot,\mathcal{G}):\{0,1\}^{L_m} \rightarrow \{0,1\}$ denotes the ML decision rule at level $m$ based on $X_m$ (given knowledge of the DAG $\mathcal{G}$). This completes the proof. 
\end{proof}

\section{Proof of Proposition \ref{Prop: Majority Grid Almost Sure Convergence}}
\label{Proof of Proposition Majority Grid Almost Sure Convergence}

\begin{proof}
Recall that $L_k \sigma_k \sim \mathsf{binomial}(L_k,g(\sigma))$ given $\sigma_{k-1} = \sigma$. This implies via Hoeffding's inequality and \eqref{Eq:Conditional Expectation} that for every $k \in \N\backslash\!\{0\}$ and $\epsilon_k > 0$:
$$ \P(|\sigma_k - g(\sigma_{k-1})| > \epsilon_k|\sigma_{k-1} = \sigma) \leq 2 \exp\!\left(-2 L_k \epsilon_k^2 \right) $$
where we can take expectations with respect to $\sigma_{k-1}$ to get:
\begin{equation}
\label{Eq:Hoeffding Consequence}
\P(|\sigma_k - g(\sigma_{k-1})| > \epsilon_k) \leq 2 \exp\!\left(-2 L_k \epsilon_k^2 \right) . 
\end{equation}
Now fix any $\tau > 0$, and choose a sufficiently large integer $K = K(\tau) \in \N$ (that depends on $\tau$) such that:
\begin{align*}
& \P(\exists k > K, \, |\sigma_k - g(\sigma_{k-1})| > \epsilon_k) \\
& \quad \quad \quad \quad \quad \quad \leq \sum_{k = K+1}^{\infty}{\P(|\sigma_k - g(\sigma_{k-1})| > \epsilon_k)} \\
& \quad \quad \quad \quad \quad \quad \leq 2 \sum_{k = K+1}^{\infty}{\exp\!\left(-2 L_k \epsilon_k^2 \right)} \\
& \quad \quad \quad \quad \quad \quad \leq \tau 
\end{align*}
where we use the union bound and the inequality \eqref{Eq:Hoeffding Consequence}, and let $\epsilon_k = \sqrt{\log(k)/L_k}$ (or equivalently, $\exp(-2 L_k \epsilon_k^2) = 1/k^2$). This implies that for any $\tau > 0$:
\begin{equation}
\label{Eq:Precursor to Borel-Cantelli result}
\P(\forall k > K, \, |\sigma_k - g(\sigma_{k-1})| \leq \epsilon_k) \geq 1 - \tau \, . 
\end{equation}
Since for every $k > K$, $|\sigma_k - g(\sigma_{k-1})| \leq \epsilon_k$, we can recursively obtain the following relation for every $k > K$:
\begin{equation}
\left|\sigma_k - g^{(k-K)}(\sigma_K)\right| \leq \sum_{m = K+1}^{k}{D(\delta,d)^{k-m}\epsilon_m} 
\end{equation} 
where $D(\delta,d)$ denotes the Lipschitz constant of $g$ on $[0,1]$ as defined in \eqref{Eq:Lipschitz constant of g}, and $D(\delta,d) \in (0,1)$ since $\delta \in \big(\delta_{\mathsf{maj}},\frac{1}{2}\big)$. Since $L_m = \omega(\log(m))$, for any $\epsilon > 0$, we can take $K = K(\epsilon,\tau) \in \N$ (which depends on both $\epsilon$ and $\tau$) to be sufficiently large so that $\sup_{m > K}{\epsilon_m} \leq \epsilon (1-D(\delta,d))$. Now observe that we have for all $k \in \N\backslash[K+1]$:
\begin{align*}
\sum_{m = K+1}^{k}{D(\delta,d)^{k-m}\epsilon_m} & \leq \left(\sup_{m > K}{\epsilon_m}\right) \sum_{j = 0}^{\infty}{D(\delta,d)^{j}} \\
& = \left(\sup_{m > K}{\epsilon_m}\right) \frac{1}{1-D(\delta,d)} \\
& \leq \epsilon \, . 
\end{align*}
Moreover, since $g:[0,1] \rightarrow [0,1]$ is a contraction when $\delta \in \big(\delta_{\mathsf{maj}},\frac{1}{2}\big)$, it has a unique fixed point $\sigma = \frac{1}{2}$, and $\lim_{m \rightarrow \infty}{g^{(m)}(\sigma_K)} = \frac{1}{2}$ almost surely by the fixed point theorem. As a result, for any $\tau > 0$ and any $\epsilon > 0$, there exists $K = K(\epsilon,\tau) \in \N$ such that:
$$ \P\!\left(\forall k > K, \, \left|\sigma_k - g^{(k-K)}(\sigma_K)\right| \leq \epsilon\right) \geq 1 - \tau $$ 
which implies, after letting $k \rightarrow \infty$, that:
$$ \P\!\left(\frac{1}{2} - \epsilon \leq \liminf_{k \rightarrow \infty}{\sigma_k} \leq \limsup_{k \rightarrow \infty}{\sigma_k} \leq \frac{1}{2} + \epsilon\right) \geq 1 - \tau \, . $$
Lastly, we can first let $\epsilon \rightarrow 0$ and employ the continuity of $\P$, and then let $\tau \rightarrow 0$ to obtain:
$$ \P\!\left(\lim_{k \rightarrow \infty}{\sigma_k} = \frac{1}{2}\right) = 1 \, . $$
This completes the proof.
\end{proof}

\section{Proof of Proposition \ref{Prop: And-Or Grid Almost Sure Convergence}}
\label{Proof of Proposition And-Or Grid Almost Sure Convergence}

\begin{proof} 
This proof is analogous to the proof of Proposition \ref{Prop: Majority Grid Almost Sure Convergence} in Appendix \ref{Proof of Proposition Majority Grid Almost Sure Convergence}. For every $k \in \N\backslash\!\{0\}$ and $\epsilon_k > 0$, we have after taking expectations in \eqref{Eq: Hoeffding Consequence 2} that:
\begin{equation}
\label{Eq: Hoeffding Consequence 3}
\P(|\sigma_{2k} - g(\sigma_{2k-2})| > \epsilon_k) \leq 4 \exp\!\left(-\frac{(L_{2k} \wedge L_{2k-1}) \epsilon_k^2}{8}\right) .
\end{equation}
Now fix any $\tau > 0$, and choose a sufficiently large integer $K = K(\tau) \in \N$ (that depends on $\tau$) such that:
\begin{align*}
& \P(\exists k > K, \, |\sigma_{2k} - g(\sigma_{2k-2})| > \epsilon_k) \\
& \quad \quad \quad \quad \leq \sum_{m = K+1}^{\infty}{\P(|\sigma_{2m} - g(\sigma_{2m-2})| > \epsilon_{m})} \\
& \quad \quad \quad \quad \leq 4 \sum_{m = K+1}^{\infty}{\exp\!\left(-\frac{(L_{2m} \wedge L_{2m-1}) \epsilon_m^2}{8}\right)} \\ 
& \quad \quad \quad \quad \leq \tau 
\end{align*}
where we use the union bound and the inequality \eqref{Eq: Hoeffding Consequence 3}, and we set $\epsilon_m = 4\sqrt{\log(m)/(L_{2m} \wedge L_{2m-1})}$ (which ensures that $\exp(-(L_{2m} \wedge L_{2m-1}) \epsilon_m^2/8) = 1/m^2$). This implies that for any $\tau > 0$:
\begin{equation}
\P(\forall k > K, \, |\sigma_{2k} - g(\sigma_{2k-2})| \leq \epsilon_k) \geq 1 - \tau \, . 
\end{equation}
Since for every $k > K$, $|\sigma_{2k} - g(\sigma_{2k-2})| \leq \epsilon_k$, we can recursively obtain the following relation for every $k > K$:
\begin{equation}
\left|\sigma_{2k} - g^{(k-K)}(\sigma_{2K})\right| \leq \sum_{m = K+1}^{k}{D(\delta)^{k-m}\epsilon_{m}} 
\end{equation} 
where $D(\delta)$ denotes the Lipschitz constant of $g$ on $[0,1]$ as shown in \eqref{Eq: Lipschitz constant}, which is in $(0,1)$ since $\delta \in \big(\delta_{\mathsf{andor}},\frac{1}{2}\big)$. Since $L_m = \omega(\log(m))$, for any $\epsilon > 0$, we can take $K = K(\epsilon,\tau) \in \N$ (which depends on both $\epsilon$ and $\tau$) to be sufficiently large so that $\sup_{m > K}{\epsilon_{m}} \leq \epsilon (1-D(\delta))$. This implies that:
$$ \forall k \in \N \backslash [K+1], \enspace \sum_{m = K+1}^{k}{D(\delta)^{k-m}\epsilon_{m}} \leq \epsilon $$
as shown in the proof of Proposition \ref{Prop: Majority Grid Almost Sure Convergence}. Moreover, since $g:[0,1] \rightarrow [0,1]$ is a contraction when $\delta \in \big(\delta_{\mathsf{andor}},\frac{1}{2}\big)$, it has a unique fixed point $\sigma = t \in [0,1]$, and $\lim_{m \rightarrow \infty}{g^{(m)}(\sigma_{2K})} = t$ almost surely by the fixed point theorem. As a result, for any $\tau > 0$ and any $\epsilon > 0$, there exists $K = K(\epsilon,\tau) \in \N$ such that:
$$ \P\!\left(\forall k > K, \, \left|\sigma_{2k} - g^{(k-K)}(\sigma_{2K})\right| \leq \epsilon\right) \geq 1 - \tau $$ 
which implies that:
$$ \P\!\left(\lim_{k \rightarrow \infty}{\sigma_{2k}} = t\right) = 1 $$
once again as shown in the proof of Proposition \ref{Prop: Majority Grid Almost Sure Convergence}. This completes the proof.
\end{proof}

\section{Proof of Corollary \ref{Cor: Lossless Expander Graph}}
\label{Proof of Corollary Lossless Expander Graph}

\begin{proof}
Fix any $\epsilon > 0$ and any $d \geq \big(\frac{2}{\epsilon}\big)^{\!5}$, and let $\alpha = d^{-6/5}$. To establish the first part of the corollary, it suffices to prove that for every $n \in \N\backslash\!\{0\}$:
\begin{align*}
n \!\left(1-(1-\alpha)^d - \sqrt{2 d \alpha H(\alpha)}\right) & \geq (1-\epsilon) d \alpha n \\
\Leftrightarrow \quad 1-\frac{1-(1-\alpha)^d - \sqrt{2 d \alpha H(\alpha)}}{d \alpha} & \leq \epsilon \, . 
\end{align*}   
Indeed, if this is true, then Proposition \ref{Prop: Random Expander Graph} immediately implies the desired lower bound on the probability that $\mathsf{B}$ is a $d$-regular bipartite lossless $(d^{-6/5},(1-\epsilon)d)$-expander graph. To this end, observe that:
\begin{align*}
1-\frac{1-(1-\alpha)^d - \sqrt{2 d \alpha H(\alpha)}}{d \alpha} & \leq 1-\frac{1-e^{-d \alpha}}{d \alpha} \\
& \quad \, + \frac{\sqrt{2 d \alpha H(\alpha)}}{d \alpha} \\
& \leq \frac{d \alpha}{2} + \sqrt{\frac{2 H(\alpha)}{d \alpha}} \\
& \leq \frac{d \alpha}{2} + \sqrt{\frac{4 \log(2)}{d \sqrt{\alpha}}} \\
& = \frac{1}{2 d^{1/5}} + \frac{2 \sqrt{\log(2)}}{d^{1/5}} \\
& \leq \frac{2}{d^{1/5}} \\
& \leq \epsilon
\end{align*}
where the first inequality follows from the standard bound $(1-\alpha)^d \leq e^{-d \alpha}$ for $\alpha \in (0,1)$ and $d \in \N\backslash\!\{0\}$, the second inequality follows from the easily verifiable bounds $0 \leq 1 - \frac{1- e^{-x}}{x} \leq \frac{x}{2}$ for $x > 0$, the third inequality follows from the well-known bound $H(\alpha) \leq 2 \log(2) \sqrt{\alpha (1-\alpha)} \leq 2 \log(2) \sqrt{\alpha}$ for $\alpha \in (0,1)$, the fourth equality follows from substituting $\alpha = d^{-6/5}$, the fifth inequality follows from direct computation, and the final inequality holds because $d \geq \big(\frac{2}{\epsilon}\big)^{\! 5}$. This proves the first part of the corollary.

The existence of $d$-regular bipartite lossless $(d^{-6/5},(1-\epsilon)d)$-expander graphs for every sufficiently large $n$ (depending on $d$) in the second part of the corollary follows from the first part by invoking the probabilistic method. This completes the proof.
\end{proof}

\section*{Acknowledgment}

Anuran Makur would like to thank Dheeraj Nagaraj and Ganesh Ajjanagadde for stimulating discussions.

\bibliographystyle{IEEEtran}
\bibliography{IEEEabrv,BroadcastRefs5}

\balance

\end{document}